\documentclass[runningheads,a4paper,envcountsame]{llncs}

\usepackage[T1]{fontenc}
\usepackage{graphicx}
\usepackage{amsmath}
\usepackage{amsfonts}
\usepackage{amssymb}
\usepackage{mathtools}
\usepackage{booktabs}
\usepackage{xspace}
\usepackage[inline]{enumitem}
\usepackage[noend]{algpseudocode}

\newcommand{\BlankLine}{\State}
\usepackage{pifont}
\usepackage{listings}
\usepackage{xcolor}
\usepackage{tikz}
\usetikzlibrary{arrows.meta}
\usepackage[hidelinks]{hyperref}
\usepackage{bookmark}
\usepackage[capitalize,noabbrev]{cleveref}

\hypersetup{hypertexnames=false,bookmarksdepth=3,bookmarksopen=true}
\newif\ifshowcomments
\showcommentsfalse
\newcommand{\authorcomment}[3]{\ifshowcomments{\footnotesize\color{#1}[#2: #3]}\fi}

\newcommand{\giuliano}[1]{\authorcomment{violet}{Giuliano}{#1}}
\newcommand{\daniel}[1]{\authorcomment{blue}{Daniel}{#1}}

\newcommand{\party}[1]{\ensuremath{p_{#1}}\xspace}
\newcommand{\True}{\mathsf{true}}
\newcommand{\False}{\mathsf{false}}
\newcommand{\None}{\mathsf{none}}
\newcommand{\var}[1]{\mathsf{#1}}
\newcommand{\fn}[1]{\mathbf{\mathsf{#1}}}
\newcommand{\sig}[1]{\langle #1 \rangle}
\newcommand{\msgtag}[1]{\textup{\textsc{#1}}}
\newcommand{\Vote}{\msgtag{vote}}
\newcommand{\Ready}{\msgtag{ready}}
\newcommand{\Propose}{\msgtag{propose}}
\newcommand{\CommitMsg}{\msgtag{commit}}
\newcommand{\Candidate}{\msgtag{candidate}}

\newcommand{\Disable}{\msgtag{disable}}
\newcommand{\hash}[1]{\ensuremath{\mathsf{H}(#1)}}
\newcommand{\nrecv}[1]{\#\!\bigl[\,#1\,\bigr]}

\newcommand{\kw}[1]{\textbf{#1}}

\newcommand{\ind}{\hspace{0.3cm}}
\newcommand{\indd}{\hspace{0.6cm}}
\newcommand{\inddd}{\hspace{0.9cm}}

\makeatletter
\newcounter{savedalgline}

\makeatother

\title{Simple and Fast Signature-Free Blockchain Consensus}
\titlerunning{Simple and Fast Signature-Free Blockchain Consensus}

\author{Giuliano Losa\inst{1}\thanks{Authors are listed in alphabetical order. Giuliano Losa is the lead author.} \and Xuechao Wang\inst{2} \and Zhuolun Xiang\inst{3} \and Qianyu Yu\inst{2}}
\authorrunning{}

\institute{%
Stellar Development Foundation\\
\email{giuliano@stellar.org} \and
The Hong Kong University of Science and Technology (Guangzhou)\\
\email{xuechaowang@hkust-gz.edu.cn, qyu100@connect.hkust-gz.edu.cn} \and
Aptos Labs\\
\email{xiangzhuolun@gmail.com}%
}

\begin{document}

\maketitle

\begin{abstract}
Signature-free protocols avoid the cost of post-quantum signatures.
We present two simple signature-free blockchain consensus protocols for eventual synchrony with optimal good-case commit latency (three message delays for \(f<n/3\) and two for \(f<n/5\)), optimistic responsiveness, a block time of only two message delays without speculation, and \(O(n^2)\) communication per view.
They instantiate Generic Simplex, a blockchain consensus construction parameterized by a new abstraction called view agreement.
The same construction also captures Simplex, Minimmit, and a new synchronous signature-free protocol with optimal good-case commit latency of two message delays for \(f<n/4\).
\keywords{Byzantine fault tolerance \and Blockchain consensus
\and Signature-free protocols}
\end{abstract}

\section{Introduction}

State-machine replication (SMR)~\cite{lamport_time_1978} is a popular technique to create fault-tolerant systems, from lock services in datacenters to public blockchains.
The idea is for a set of network nodes, or parties, to each maintain a local copy of a deterministic state machine, initially in the same state, and run a consensus protocol to agree on an ever-growing sequence of commands, called the log, to apply to their local copies of the state machine.
To the outside world, this creates the illusion of a single state machine that can be accessed at any of the parties.

Early practical consensus protocols like Multi-Paxos~\cite{lamport2001paxos} and PBFT~\cite{castro1999practical} are notorious for being hard to understand.
This spurred the development of simpler protocols, and in particular of the family of streamlined blockchain protocols.
Examples include Chained HotStuff~\cite{yin2019hotstuff} and Streamlet~\cite{chan2020streamlet}.

Streamlined protocols proceed in numbered \emph{views}.
In each view, a designated leader proposes a \emph{block}, parties vote for it if it satisfies the protocol's rules, and the block becomes \emph{prepared} (notarized in Simplex) if it receives enough votes.
Except for the special \emph{genesis} block, each block consists of a sequence of commands and a reference (often a hash) to a unique \emph{parent} block, forming a \emph{blockchain}.
Parties try to propose blocks that extend recently prepared blocks, and this results in a tree of prepared blocks.
A block gets committed once a particular pattern of prepared blocks emerges.
For example, in Streamlet, a chain of three blocks prepared in consecutive views commits the block in the middle~\cite{chan2020streamlet}.

Streamlined protocols repeat the same block-preparation procedure in each view, providing a simple structure that pipelines block commitment across views.
However, committing a block depends on votes that prepare its descendants, so faulty leaders in later views can prevent commitment.
Simplex~\cite{chanSimplexConsensusSimple2023} preserves this pipelining while using a separate voting round to commit a prepared block.
Thus, preparing the next block and committing the current block proceed concurrently, without commitment depending on progress in later views.
However, Simplex relies on transferable signatures.
Avoiding such signatures is particularly attractive for post-quantum-resilient systems, where post-quantum signature schemes can impose substantial costs~\cite{aaronsonQuantumComputingBlockchain2026}.
\giuliano{Shortened a bit but retained the Simplex story, added quantum issue. Better?}

In this paper, we tackle the problem of applying the Simplex architecture to obtain Byzantine fault-tolerant blockchain consensus protocols that do not use transferable signatures and nevertheless achieve optimal good-case commit latency.
This is challenging because, without transferable signatures, parties cannot prove to each other what they saw.
Simplex, like many protocols, relies on forwarding proofs that \(n-f\) parties\footnote{As customary, \(f\) is the maximum number of Byzantine parties and \(n\) is the total number of parties.} voted for a block.
Shoup's signature-free variant of Simplex~\cite[Section 7]{shoupSingSongSimplexEprint} replaces these proofs with reliable broadcast, which increases commit latency from three to six message delays.
Even for single-shot eventually synchronous consensus, Forget-IT~\cite{abrahamForgetITOptimalGoodCase2026} (PODC 2026) only recently achieved the optimal good-case commit latency of three message delays and \(O(n^2)\) communication per view under the optimal resilience of~\(f<n/3\).

The first contribution of this paper is to present two simple signature-free blockchain consensus protocols, under eventual synchrony, that achieve the optimal good-case commit latencies~\cite{abraham_good-case_2021,abraham2022good} of three message delays when \(f<n/3\) and two message delays when \(f<n/5\), respectively.
Both are optimistically responsive, have a block time of only two message delays, without speculation\footnote{Correct leaders extend only prepared blocks, so, after GST, their proposals commit regardless of previous proposals.}, and, like Simplex, both allow block commitment to complete independently of progress in later views.
Under maximal protocol resilience (\(n=3f+1\) and \(n=5f+1\), respectively), both send \(O(n^2)\) bits per view, assuming fixed-size blocks and message overhead.

The second contribution, which may be of wider interest, is Generic Simplex, a blockchain consensus protocol parameterized by an implementation of a new abstraction that we call view agreement.
View agreement is specified without timing assumptions, and it captures what parties must agree on in a view: which blocks are prepared or committed, and whether later views are authorized to bypass it.
We obtain the two signature-free protocols by plugging in two different implementations of view agreement.

View agreement supports a \(\fn{vote}(x)\) and a \(\fn{request\_disable}()\) invocation, and exposes a set of prepared values, an optional committed value, and a disabled flag.
In Generic Simplex, the values are blocks, and later views must build on one of a view's prepared blocks unless that view is disabled.
If a correct party commits a value, no correct party can ever prepare or commit a different value, or disable the view.
This is similar to classic abstractions like adopt-commit and its graded variants~\cite{gafni_round-by-round_1998,attiya_multi-valued_2023}.

However, view agreement has two more unusual properties compared to classic abstractions.
First, it guarantees eventual agreement on prepared values and disabled status; this allows correct proposals in later views to be eventually accepted by all correct parties without having to forward proofs.
Second, view agreement may prepare multiple values.
This lets our view agreement protocols prepare a value in just one message delay (enabling a block time of two message delays in Generic Simplex) while still ensuring eventual agreement on prepared values.
In fact, for signature-free implementations with \(n/4<f<n/3\), we prove this impossible if at most one value may be prepared (\Cref{app:va-multiple-prepared}).

With signed view agreement implementations, Generic Simplex recovers the core of Simplex (\Cref{app:simplex}) and of Minimmit~\cite{chouMinimmitFC2026} (\Cref{app:minimmit}).
Minimmit requires multiple prepared values: its prepare quorums need not share a correct party, so an equivocating leader can get several blocks prepared in one view.
Generic Simplex also works, unchanged, in synchrony, where view agreement implementations may rely on timing.
This yields a signature-free protocol for \(f<n/4\) with the optimal good-case commit latency of two message delays~\cite{abraham_good-case_2021}~(\Cref{app:synchronous}).

Generic Simplex follows Abraham’s decomposition of Simplex into an outer protocol that drives instances of an inner per-view protocol~\cite{abrahamDeconstructingSimplex2026}, which Abraham and Neu also apply to chained Simplex~\cite{abrahamNeuChainedSimplex2026}; it keeps the shape of their outer proposal rule and changes the specification of the inner protocol (see \Cref{sec:related-work}).

\Cref{sec:model} presents the model and defines blockchain consensus and its latency metrics, \Cref{sec:blockchain-va} specifies view agreement and presents Generic Simplex, \Cref{sec:signature-free-implementations} presents 1/3-VA, outlines Fast VA, and gives the resulting latency bounds, and \Cref{sec:related-work} discusses related work.
The appendices contain the correctness proofs (\Cref{app:consensus-proofs,app:va-proofs}), a tweak to shorten view timeouts (\Cref{app:strong-unanimity}), the 1/5-resilient Fast VA protocol (\Cref{app:va-fifth}), the synchronous protocol (\Cref{app:synchronous}), and the Minimmit and Simplex instantiations (\Cref{app:minimmit,app:simplex}).

\section{Computation Model and Blockchain Consensus}
\label{sec:model}

We consider a static Byzantine adversary in an asynchronous or eventually synchronous message-passing system with an authenticated channel between every pair of parties.
The model and its assumptions are standard for the field.

There are \(n\) parties, of which at most \(f\) may be Byzantine, and, if a party is not Byzantine, we say it is correct.
Byzantine parties may send arbitrary messages, but only on the channels connected to them.
Messages on channels connected to correct parties cannot be forged, and thus if a party \(p\) receives a message directly from a party \(q\), then \(p\) can be sure that \(q\) sent it.
In practice, such channels can be implemented using symmetric cryptography like HMAC message authentication codes.
In the special case in which Byzantine parties do not send messages, we say that the execution is Byzantine-silent, or eventually Byzantine-silent if they only eventually stop sending messages.

Message delivery is reliable, meaning that every message sent by a correct party to a correct party is eventually delivered.
Under asynchrony, there is no bound on message delay.
Under eventual synchrony (the GST model of partial synchrony), there is an unknown global stabilization time (GST) after which the message delay between correct parties is bounded by a known constant \(\Delta\), and messages sent before GST arrive by \(\text{GST}+\Delta\).
When discussing latency, we use \(\delta\) for the maximum message delay during the period under consideration.

We assume that every party starts executing at time \(0\), and that parties have local clocks that may have arbitrary offset but that advance at real-time rate.
At a timer's deadline, parties process all arriving messages and the resulting local actions before firing the timer.

\Cref{app:extended-model} defines the additional PKI and timing assumptions used by some appendix protocols and lower bounds.

\paragraph{Notation in pseudocode.}
In the protocol figures, \(\nrecv{P_1\text{ or }\cdots\text{ or }P_k}\) denotes the number of distinct parties from which the executing party has received a message matching at least one of the message patterns \(P_1,\ldots,P_k\).
Each sender is counted at most once, even if messages from that sender match multiple patterns.
All such counts are local; any figure-specific rule for handling equivocation applies before counting.

\subsection{Blockchain Consensus}

In the blockchain consensus problem, parties maintain an ever-growing blockchain by repeatedly creating and committing new valid blocks.

A block is either the special, empty genesis block or a pair \(b = \sig{\var{payload}, h}\), where \(\var{payload}\) is an application-specific payload (usually, a list of transactions) and \(h\) is a hash value that determines a unique parent block \(b'\).
We write \(h=\hash{b'}\) and \(\fn{parent}(b)=b'\).
We say that a block \(b'\) is an ancestor of a block \(b\) and, equivalently, that \(b\) is a descendant of \(b'\), when \(b'=b\) or \(b'\) is reachable from \(b\) by following parent links.
We also say that two blocks \(b\) and \(b'\) are compatible when one is an ancestor of the other, and incompatible otherwise.
We assume that following parent links eventually reaches genesis; thus, a block \(b\) determines a unique sequence of blocks starting at genesis and ending at \(b\).

A block may be valid or invalid according to some application-specific rules (this is sometimes called external validity).
We assume a predicate \(\fn{valid}(b)\) that identifies valid blocks, and we assume that the genesis block is valid.
To create new valid blocks, parties can call \(\fn{extend}(b)\), where \(b\) is a valid block, which returns a new valid block \(b'\) such that \(\fn{parent}(b')=b\).
We assume that both \(\fn{valid}\) and \(\fn{extend}\) can be evaluated locally in negligible time.
Parties may commit blocks by calling \(\fn{commit}(b)\), which by convention also commits all ancestors of \(b\).

A blockchain consensus protocol must provide the following guarantees:
\begin{itemize}[noitemsep,leftmargin=*]
    \item[-] \textbf{Consistency.} Correct parties never commit incompatible blocks.
    \item[-] \textbf{Totality.} If a correct party commits a block \(b\), then every correct party eventually commits \(b\).
    \item[-] \textbf{Validity.} Correct parties never commit invalid blocks.
    \item[-] \textbf{Liveness.} Every correct party obtains infinitely many blocks through calls to \(\fn{extend}\) and commits infinitely many of them.
\end{itemize}

The spirit of liveness is that every correct party must repeatedly get the opportunity to create blocks and get its blocks committed.
The concrete protocols we present satisfy the stronger guarantee that every block created by a correct party sufficiently far after GST commits.
To quantify progress guarantees more precisely, we use the following latency metrics.

\begin{definition}[Good-case commit latency \(\ell_g\)]
    The good-case commit latency \(\ell_{g}\) of a blockchain consensus protocol is the maximum time it takes, assuming \(\text{GST}=0\), for a block created by a correct party at time \(0\) to be committed by all correct parties.
\end{definition}
When \(n=3f+1\) (optimal resilience) and \(f\geq2\), every signature-free protocol has \(\ell_{g}\geq3\delta\), even in the synchronous setting of Appendix~\ref{app:extended-model}~\cite{abraham2022good}.
Under eventual synchrony, this lower bound holds even with signatures~\cite{abraham_good-case_2021}.
The bound is tight: Forget-IT~\cite{abrahamForgetITOptimalGoodCase2026} and our 1/3-resilient protocol (\Cref{sec:signature-free-implementations}) attain it.
When \(f<n/5\), our consensus protocol based on Fast VA (\Cref{app:va-fifth}) attains \(\ell_{g}=2\delta\), which is also optimal~\cite{abraham_good-case_2021}.\footnote{The papers referenced prove lower bounds for single-shot broadcast, but they carry over by reduction: the party that creates a block at time~\(0\) can broadcast a value by including it in that block, and every party outputs the value in the first non-genesis block of its committed chain.}

Good-case commit latency is convenient for lower bounds, but it only concerns the beginning of an execution.
Thus we use the following additional metrics:

\begin{definition}[Steady-state commit latency \(\ell_s\)]
    The steady-state commit latency \(\ell_{s}\) is the least bound such that, in every eventually Byzantine-silent execution, there is a time \(T\geq \text{GST}\) such that, after \(T\), every block created by a correct party is committed within~\(\ell_{s}\).
\end{definition}

\begin{definition}[Eventual worst-case commit latency \(\ell_w\)]
    The eventual worst-case commit latency \(\ell_{w}\) is the least bound such that, in every execution, there is a time \(T\geq \text{GST}\) such that, after \(T\), every block created by a correct party is committed within~\(\ell_{w}\).
\end{definition}

\begin{definition}[Block time]
    The block time is the worst-case average time between two consecutive calls to \(\fn{extend}\), after GST and assuming no failures.
\end{definition}

Note that steady-state commit latency is at most the eventual worst-case commit latency.
Together, steady-state commit latency and block time characterize what clients can expect during normal operation.
For example, the 1/3-resilient protocol of \Cref{sec:signature-free-implementations} has a block time of \(2\delta\) and a steady-state commit latency of \(3\delta\).
This means that a client that broadcasts a transaction to all parties can on average expect to wait \(\delta\) for the next block proposal and another \(3\delta\) for it to be committed at all correct parties.
The total end-to-end latency is therefore \(4\delta\) plus the round-trip to and from the client.

\begin{definition}[Optimistic responsiveness]
    A blockchain consensus protocol is optimistically responsive if \(\ell_g\), \(\ell_s\), \(\ell_w\), and its block time admit bounds that vanish as \(\delta\to0\) when \(\Delta\) is fixed.
\end{definition}
In other words, in good conditions, an optimistically responsive protocol progresses at the speed of the actual network delay, and not its upper bound \(\Delta\).

\begin{definition}[\(k\)-gap latency]
    The \(k\)-gap latency is the eventual worst-case time between two consecutive calls to \(\fn{extend}\) by correct parties, assuming at most \(k\) Byzantine parties.
\end{definition}
The \(k\)-gap latency bounds how long \(k\) Byzantine parties can delay the next correct block.
The \(0\)-gap latency bounds the block time.

\section{Blockchain Consensus from View Agreement}%
\label{sec:blockchain-va}

In this section, we specify view agreement and present Generic Simplex through an informal walkthrough of its pseudocode in~\Cref{fig:consensus-va} (line numbers below refer to \Cref{fig:consensus-va}).
Detailed proofs appear in~\Cref{app:consensus-proofs}.

\begin{figure*}[tp]
    \centering
    \small
    \setlength{\fboxsep}{2pt}%
    \fbox{%
    \begin{minipage}{\dimexpr\textwidth-2\fboxsep-2\fboxrule\relax}
    \raggedright
    \begin{algorithmic}[1]
        \State \textbf{State variables and shorthands:}
        \State \ind $\var{curr\_view} \gets 1$;
        \State \ind $\var{va} \gets [v \mapsto \text{a fresh, active view agreement instance}]$;\label{ln:consensus-va-instances}
        \State \ind $\var{proposal} \gets [v \mapsto \None]$;
        \State \ind $\var{voted} \gets \False$;
        \State \ind $\var{disable\_requested} \gets \False$;
        \State \ind $\fn{prepared}(v) =$ \kw{if} $v = 0$ \kw{then} $\{\text{genesis}\}$
            \kw{else} $\var{va}[v].\var{prepared}$;
        \State \ind $\fn{disabled}(v) =$ \kw{if} $v = 0$ \kw{then} $\False$
            \kw{else} $\var{va}[v].\var{disabled}$;
        \State \ind $\fn{committed}(v) =$ \kw{if} $v = 0$ \kw{then} $\text{genesis}$
            \kw{else} $\var{va}[v].\var{committed}$;
        \BlankLine
        \State \kw{upon} startup: \kw{call} $\fn{enter\_view}(1)$;
        \BlankLine
        \State \kw{procedure} $\fn{enter\_view}(v)$:
        \State \ind $\var{curr\_view} \gets v$;
            $\var{voted} \gets \False$; $\var{disable\_requested} \gets \False$;
        \State \ind \kw{call} $\fn{reset\_view\_timer}()$;\label{ln:consensus-reset-timer}
        \State \ind \kw{if} $\fn{leader}(v) = \text{self}$: \kw{call} $\fn{propose}()$;
        \BlankLine
        \State \kw{predicate} $\fn{is\_safe}(b = \sig{\var{payload}, h})$:\label{ln:consensus-is-safe}
        \State \ind \kw{return} $\exists v', b':$
        \State \indd $0 \le v' < \var{curr\_view}$ \kw{and} $\hash{b'} = h$ \kw{and}
            $b' \in \fn{prepared}(v')$
        \State \indd \kw{and} $\forall v'' \in \{v'+1,\ldots,\var{curr\_view}-1\}:
            \fn{disabled}(v'') = \True$;\label{ln:consensus-check-disabled}
        \BlankLine
        \State \kw{procedure} $\fn{propose}()$:\label{ln:consensus-propose}
        \State \ind \kw{with} $\var{vmax} = \max \{v < \var{curr\_view} \mid
            \fn{prepared}(v) \neq \emptyset\}$
        \State \ind \kw{with} $b' \in \fn{prepared}(\var{vmax})$:
        \State \indd $b \gets \fn{extend}(b')$;
        \State \indd \kw{broadcast} $\sig{\Propose, \var{curr\_view}, b}$;\label{ln:consensus-broadcast}
        \BlankLine
        \State \kw{upon} $\sig{\Propose, v, b}$ from the leader of view $v$:
        \State \ind \kw{if} $\var{proposal}[v] = \None$: $\var{proposal}[v] \gets b$;
        \BlankLine
        \State \kw{when} \kw{not} $\var{voted}$ \kw{and} \kw{not} $\var{disable\_requested}$\label{ln:consensus-vote-guard}
        \State \indd \kw{and} $\var{proposal}[\var{curr\_view}] \neq \None$
            \kw{and} $\fn{is\_safe}(\var{proposal}[\var{curr\_view}])$:
        \State \ind \kw{call} $\var{va}[\var{curr\_view}].\fn{vote}(\var{proposal}[\var{curr\_view}])$;\label{ln:consensus-vote}
        \State \ind $\var{voted} \gets \True$;
        \BlankLine
        \State \kw{upon} view timer firing:
        \State \ind \kw{if} \kw{not} $\var{disable\_requested}$:
        \State \indd \kw{call} $\var{va}[\var{curr\_view}].\fn{request\_disable}()$;
            $\var{disable\_requested} \gets \True$;\label{ln:consensus-request-disable}
        \BlankLine
        \State \kw{when} $\fn{prepared}(\var{curr\_view}) \neq \emptyset$ \kw{or} $\fn{disabled}(\var{curr\_view})$:\label{ln:consensus-cleared}
        \State \ind \kw{if} \kw{not} $\var{voted}$ \kw{and} \kw{not} $\var{disable\_requested}$
            \kw{and} $\fn{prepared}(\var{curr\_view}) \neq \emptyset$:
        \State \indd \kw{with} $b \in \fn{prepared}(\var{curr\_view})$:
        \State \inddd \kw{call} $\var{va}[\var{curr\_view}].\fn{vote}(b)$;
            $\var{voted} \gets \True$;\label{ln:consensus-catchup-vote}
        \State \ind \kw{call} $\fn{enter\_view}(\var{curr\_view} + 1)$;\label{ln:consensus-advance}
        \BlankLine
        \State \kw{upon} $\fn{committed}(v)$ becoming $b \neq \None$: \kw{call} $\fn{commit}(b)$;\label{ln:consensus-commit}
    \end{algorithmic}
    \end{minipage}}
    \caption{The Generic Simplex protocol.}%
    \label{fig:consensus-va}
\end{figure*}

\paragraph{Views and the view agreement interface.}
As in Simplex, parties proceed in a sequence of numbered views, starting with view 1, and each party has a (possibly different) current view. %
Each view has a pre-determined leader party, chosen round-robin, that may pick a block \(b\) to extend, create a new block using \(\fn{extend}(b)\), and propose it to the other parties (line~\ref{ln:consensus-broadcast}).
If the proposed block is \emph{safe} in the view (defined shortly), the other parties may vote for it in an instance of view agreement assigned to the view (line~\ref{ln:consensus-vote}); that instance may then prepare and later commit the proposed block.

Each party has a view timer, which it resets when incrementing its current view (line~\ref{ln:consensus-reset-timer}).
The timer is set to expire after a duration \(\Delta_{\mathrm{to}}\) that depends on the view agreement implementation used (more about this below), and if the timer fires before the party has prepared a block in the current view, the party requests to disable its current view (line~\ref{ln:consensus-request-disable}).

To vote, request disabling, and learn which blocks are prepared or committed and whether the view is disabled, parties use the view agreement interface.
It provides two invocations, \(\fn{vote}(x)\) and \(\fn{request\_disable}()\), and a party must vote at most once in an instance and never after requesting disabling.
In return, the interface exposes three output variables: \(\var{prepared}\), an initially empty, growing set of blocks; \(\var{disabled}\), initially \(\False\) and only allowed to change to \(\True\); and \(\var{committed}\), initially \(\None\) and changing at most once, to a block.
We say that the instance is \emph{cleared} at a party when \(\var{prepared}\neq\emptyset\) or \(\var{disabled}\) holds.
Moreover, all view agreement instances are always active at a party regardless of the party's view, so that their liveness guarantees always apply.\footnote{Standard techniques, e.g., lazy instantiation, avoid infinite state.}

\paragraph{View agreement guarantees.}\label{par:va-guarantees}
View agreement guarantees the following properties, where all parties mentioned are assumed correct.
\begin{itemize}[leftmargin=*,itemsep=3pt,parsep=0pt,topsep=3pt]
    \item \textbf{VA-consistency.}
    If a party commits \(x\), no party ever disables the view or prepares or commits \(y\neq x\).
    A party commits \(x\) only after preparing it.
    \item \textbf{VA-validity.}
    A prepared or committed value \(x\) is valid and was previously voted for by some party.
    \item \textbf{VA-unanimity.}
    If all parties vote for the same valid value \(x\) and none requests disabling, all eventually commit~\(x\).
    \item \textbf{VA-totality.}
    If a party prepares \(x\), all eventually prepare \(x\); if a party disables, all eventually disable.
    \item \textbf{VA-fallback-progress.}
    If all parties request disabling, the instance is eventually cleared at all parties.
\end{itemize}

\paragraph{Ensuring consistency and validity.}
Note that VA-consistency ensures consistency within a view, i.e., no two different blocks may be committed by correct parties in a view.
To ensure consistency of committed blocks across views, we define a block \(b\) as safe in a view \(v\) (lines~\ref{ln:consensus-is-safe}--\ref{ln:consensus-check-disabled}) when there exists \(v'<v\) such that:
\begin{enumerate}
    \item The parent of \(b\) is prepared in view \(v'\), or \(v'=0\) and the parent of \(b\) is the genesis block, and
    \item All views strictly between \(v'\) and \(v\) are disabled.
\end{enumerate}
Note that, by VA-totality, a block that is safe in a view at a party eventually becomes so at all correct parties; thus safety of a block in a view is a global property that parties converge on over time.

Now correct parties vote either for a proposal they have checked to be safe (lines~\ref{ln:consensus-vote-guard}--\ref{ln:consensus-vote}) or, at line~\ref{ln:consensus-catchup-vote}, for a block that is already prepared.
By VA-validity, a block is prepared only after some correct party votes for it, so the first correct vote for any block has checked it safe.
Hence all prepared and committed blocks are safe in their view.

Moreover, VA-consistency ensures that a view with a committed block cannot be disabled or have a different prepared block.
Thus, the parent relation (restricted to prepared and committed blocks) cannot ``jump over'' committed blocks, and all committed blocks form a single chain, as illustrated in \Cref{fig:consistency-example}. %

\begin{figure}[!htbp]
  \centering
  \begin{tikzpicture}[
      x=1.4cm,
      view/.style={draw=gray!70, rounded corners=3pt},
      disabled/.style={view, dashed, fill=gray!10},
      block/.style={draw, rounded corners=2pt, minimum width=0.8cm,
        minimum height=0.55cm, font=\small, fill=blue!8},
      committed/.style={block, fill=blue!15, very thick},
      unsafe/.style={draw=red!70!black, dotted, very thick, rounded corners=2pt,
        minimum width=0.8cm, minimum height=0.55cm, font=\small, fill=red!6},
      parent/.style={-{Stealth[length=5pt]}, thick},
      vlabel/.style={font=\scriptsize\sffamily, gray},
    ]
    \foreach \v in {0,1,5,6} {
      \draw[view] (\v-0.42, -0.42) rectangle (\v+0.42, 1.32);
    }
    \foreach \v in {2,3,4} {
      \draw[disabled] (\v-0.42, -0.42) rectangle (\v+0.42, 1.32);
    }
    \foreach \v in {0,1,2,3,4,5,6} {
      \node[vlabel] at (\v, -0.65) {$v{=}\v$};
    }

    \node[block, fill=white, very thick, inner xsep=1pt] (b0) at (0,0) {genesis};
    \node[committed] (b1)  at (1,0)   {$b_1$};
    \node[block]     (b3)  at (3,0)   {$b_3$};
    \node[block]     (b3p) at (3,0.9) {$b_3'$};
    \node[block]     (b4)  at (4,0.9) {$b_4$};
    \node[committed] (b5)  at (5,0)   {$b_5$};
    \node[unsafe]    (b6)  at (6,0.9) {$b_6$}; %

    \draw[parent] (b1) -- (b0);
    \draw[parent] (b3) -- (b1);   %
    \draw[parent] (b3p) -- (b1);  %
    \draw[parent] (b4) -- (b3p);
    \draw[parent] (b5) -- (b3);   %
    \draw[parent, red!70!black, dotted] (b6) --
      node[midway, fill=white, inner sep=1pt, text=red!70!black] {\ding{55}} (b4); %
  \end{tikzpicture}
  \caption{An example execution of Generic Simplex.
Columns are views, and each block sits in the view in which it is prepared (or proposed, for~\(b_6\)), with an arrow to its parent.
Thick blue blocks are committed, light blue blocks are only prepared, and dashed gray views (\(2,3,4\)) are disabled.
View~\(3\) prepares two blocks, which is possible because it commits none.
Every prepared block is safe in its view; e.g., \(b_5\)'s parent link skips the disabled view~\(4\).
In contrast, \(b_6\) is not safe in view~\(6\): its parent link jumps over view~\(5\), which has a committed block and thus cannot be disabled.
Hence every later committed block descends from~\(b_5\), and neither \(b_3'\) nor \(b_4\) will ever be committed.
The committed blocks and their ancestors form the chain \(\text{genesis}, b_1, b_3, b_5\).}%
  \label{fig:consistency-example}
\end{figure}

VA-validity moreover ensures that committed blocks are valid, and this covers the whole safety argument of the protocol.
Next we explain how the consensus protocol ensures that parties keep starting new views and how, in views started after GST, it guarantees that the proposals of correct leaders commit.

\paragraph{Progression from one view to the next.}
When its current view \(v\) becomes cleared, a party enters \(v+1\) (line~\ref{ln:consensus-advance}).
Normally, this happens as soon as the party prepares a block in \(v\), without waiting for it to commit, as in Simplex.
Since a party only votes or requests disabling in its current view, later views, even with faulty leaders, cannot affect whether \(v\) commits (\Cref{lem:consensus-good-view}).
Because VA-unanimity needs every correct vote, a party that prepares a block before voting (e.g., it entered \(v\) late or has not yet learned that the proposal is safe) votes for a prepared block before advancing (line~\ref{ln:consensus-catchup-vote}).
If instead the current view \(v\) fails to prepare a block at a party before its timer fires, the party requests to disable \(v\).
Globally, either some correct party clears \(v\) or all correct parties time out.
In the former case, VA-totality guarantees that all parties eventually clear the view; in the latter case, VA-fallback-progress guarantees that all parties eventually clear the view too.
Thus, every correct party always eventually increments its view.

\paragraph{Ensuring commits after GST.}
To ensure that every proposal made by a correct leader in a view \(v\) started after GST commits, by VA-unanimity it suffices to ensure that all correct parties vote for the leader's proposal and that none requests to disable the view.
To ensure the former, a correct leader makes sure to propose a block that is safe in its current view.
It does so by picking the highest earlier view \(v'\) in which it has a prepared block, choosing a prepared block of that view (there may be multiple), say \(b'\), creating a new block \(b=\fn{extend}(b')\), and proposing \(b\).
Since the leader must have cleared every earlier view, all views between \(v'\) and \(v\) are disabled and \(b\) is safe in view \(v\), which, by VA-totality, all correct parties eventually learn.
Thus, assuming no previous requests to disable, each correct party votes for the leader's proposal either upon receiving it and learning that it is safe (lines~\ref{ln:consensus-vote-guard}--\ref{ln:consensus-vote}), or upon preparing it (line~\ref{ln:consensus-catchup-vote}).

\phantomsection\label{par:va-latency}
Next, we must ensure no correct party requests to disable the view.
To pick a sufficiently large timer value \(\Delta_{\mathrm{to}}\), we use the following \emph{latency parameters} of view agreement.
With message delay bounded by \(\delta\), these are worst-case delay bounds, each assumed at least \(\delta\).
All parties mentioned in these definitions are assumed correct unless stated otherwise.
\begin{itemize}[leftmargin=*,itemsep=3pt,parsep=0pt,topsep=3pt]
    \item \emph{Prepare and commit delays} \(d_p(\delta)\) and \(d_c(\delta)\).
    The delay from the last correct vote until all correct parties prepare \(x\) or commit \(x\), respectively, under the conditions of VA-unanimity.
    \item \emph{Fallback delay} \(d_f(\delta)\).
    The delay from the last correct party's request to disable until the instance is cleared at all parties.
    \item \emph{Totality delay} \(d_t(\delta)\).
    The delay from the first correct preparation of a value \(x\) until all correct parties prepare \(x\), or from the first correct disabling until all correct parties disable, whichever is larger.
    \item \emph{Byzantine-silent totality delay} \(d_t^s(\delta)\leq d_t(\delta)\).
    The totality delay if Byzantine parties send no messages in the instance; we call the instance Byzantine-silent.
\end{itemize}
Starting from the first correct entry into a view with a correct leader after GST, we allow one totality delay for all parties to enter and the leader to propose, another to receive the proposal and learn that it is safe, and one prepare delay before any timer fires.
Thus, it suffices to choose \(\Delta_{\mathrm{to}}\geq 2d_t(\Delta)+d_p(\Delta)\).
The prepare delay is needed because VA-unanimity does not guarantee a commit if some party requests to disable the view after voting.
Some view agreement implementations guarantee it anyway, a property that we call VA-strong-unanimity, and then \(\Delta_{\mathrm{to}}\geq 2d_t(\Delta)\) suffices (see \Cref{app:strong-unanimity}).
Since correct parties keep starting new views and, after GST, views with correct leaders commit, we obtain blockchain consensus liveness.

\paragraph{Latency bounds.}
The view agreement delays give the following bounds for the latency metrics of Generic Simplex:
\begin{align*}
\text{good-case commit latency }\ell_g &\leq \delta+d_c(\delta),\\
\text{steady-state commit latency }\ell_s &\leq d_t^s(\delta)+d_c(\delta),\\
\text{eventual worst-case commit latency }\ell_w &\leq d_t(\delta)+d_c(\delta),\\
\text{block time} &\leq \delta+d_p(\delta),\\
k\text{-gap latency} &\leq d_t(\delta)+d_p(\delta)\\
  &\quad{}+k\bigl(\Delta_{\mathrm{to}}+\max\{d_f(\delta),d_t(\delta)\}\bigr).
\end{align*}

Each commit bound is the time for all correct parties to vote for a correct leader's proposal, plus the commit delay \(d_c(\delta)\).
In the good case, voting takes one message delay.
In later views, parties may also need to enter the view and learn that the proposal is safe, which takes at most \(d_t(\delta)\), or \(d_t^s(\delta)\) if the execution is eventually Byzantine-silent and the view is late enough.
Our two signature-free view agreement implementations (\Cref{sec:signature-free-implementations}) and those of Simplex and Minimmit all have \(d_t^s(\delta)=\delta\), so their steady-state and good-case commit latency bounds coincide.
With 1/3-VA, the next leader proposes after \(2\delta\), before the current block commits at \(3\delta\).
For the \(k\)-gap latency, once a correct leader proposes after GST, all correct parties learn that the proposal is safe, prepare it, and enter the next view within \(d_t(\delta)+d_p(\delta)\).
Leaders rotate round-robin, so at most \(k\) consecutive views have Byzantine leaders.
Each such view delays the next correct proposal by at most one view timeout plus \(\max\{d_f(\delta),d_t(\delta)\}\): once all correct parties have entered the view, they all enter the next one within that time (\Cref{cor:consensus-view-duration}).
\Cref{app:consensus-latency-proofs} proves these bounds.

\subsection{Discussion}

The construction above helps explain the design of view agreement: which guarantees later views need, and which outcomes can remain flexible within a view.

Preparation and disabling give later views permission to extend a block or bypass a view, respectively.
Both may occur in the same view, and several blocks may be prepared.
VA-consistency requires uniqueness of the prepared block, and excludes disabling, only when a block commits in the instance.
The signature-free implementations in~\Cref{sec:signature-free-implementations} use the freedom to prepare several values to achieve one-message-delay preparation together with VA-totality.
Disabling does not erase prepared blocks or end their propagation.

VA-totality lets later leaders make progress without forwarding proofs of their observations.
A correct leader may already have a prepared parent and the disabled statuses needed to justify a proposal, while another party has not yet learned those outputs.
Their propagation eventually lets every correct party verify that the proposal is safe.
This is why both prepared values and disabled statuses must propagate, even from views that never commit.

Committed blocks themselves need not propagate.
If a correct party commits a block \(b\), every later committed block extends \(b\), so a party that misses the original VA commitment eventually commits \(b\) as an ancestor of a later committed block.
A block from a disabled view can likewise become an ancestor of a committed block, even though the disabled view cannot commit it directly.

\section{Signature-Free Implementations with Optimal Latency}%
\label{sec:signature-free-implementations}

We obtain our two signature-free protocols by plugging two view agreement implementations into Generic Simplex: 1/3-VA, for \(f<n/3\), and Fast VA, for \(f<n/5\).
We describe 1/3-VA below and outline Fast VA in~\Cref{sec:fast-va}; its full description and proofs appear in~\Cref{app:va-fifth}.
Under maximal resilience, both send \(O(n^2)\) bits per view, assuming fixed-size blocks and message overhead (\Cref{thm:va-third-complexity} and~\Cref{app:va-fifth}).

\Cref{tab:signature-free-consensus-latency} gives the latency bounds of the two view agreement implementations (top) and the consensus latency bounds obtained by substituting them in the formulas of~\Cref{sec:blockchain-va} (bottom).
The bounds on \(\ell_g\), \(\ell_s\), \(\ell_w\), and block time depend only on \(\delta\), so both protocols are optimistically responsive.
The \(k\)-gap latency also depends on \(\Delta_{\mathrm{to}}\), since correct parties detect a silent Byzantine leader only by timing out.
The timeout condition \(\Delta_{\mathrm{to}}\geq2d_t(\Delta)+d_p(\Delta)\) gives view timeouts of \(7\Delta\) and \(5\Delta\), respectively.
Fast VA satisfies VA-strong-unanimity, and so does a variant of 1/3-VA; this reduces these view timeouts to \(6\Delta\) and \(4\Delta\) (\Cref{app:strong-unanimity}).

\begin{table}[ht]
    \centering
    \small
    \caption{Upper bounds for the two signature-free protocols under eventual synchrony: the delays of the asynchronous view agreement implementations (top; see the \hyperref[par:va-latency]{latency definitions in~\Cref*{sec:blockchain-va}}) and the resulting latency metrics of Generic Simplex (bottom).}
    \label{tab:signature-free-consensus-latency}%
    \label{tab:signature-free-va-latency}
    \setlength{\tabcolsep}{4pt}
    \begin{tabular}{@{}lcc@{}}
        \toprule
        View agreement implementation & 1/3-VA & Fast VA \\
        Protocol resilience & \(f<n/3\) & \(f<n/5\) \\
        \midrule
        \multicolumn{3}{@{}l}{\emph{View agreement delays}} \\
        Prepare delay \(d_p(\delta)\) & \(\delta\) & \(\delta\) \\
        Commit delay \(d_c(\delta)\) & \(2\delta\) & \(\delta\) \\
        Fallback delay \(d_f(\delta)\) & \(3\delta\) & \(3\delta\) \\
        Totality delay \(d_t(\delta)\) & \(3\delta\) & \(2\delta\) \\
        Byzantine-silent totality delay \(d_t^s(\delta)\) & \(\delta\) & \(\delta\) \\
        \midrule
        \multicolumn{3}{@{}l}{\emph{Generic Simplex latency}} \\
        Good-case commit latency \(\ell_g\) & \(3\delta\) & \(2\delta\) \\
        Steady-state commit latency \(\ell_s\) & \(3\delta\) & \(2\delta\) \\
        Eventual worst-case commit latency \(\ell_w\) & \(5\delta\) & \(3\delta\) \\
        Block time & \(2\delta\) & \(2\delta\) \\
        View timeout \(\Delta_{\mathrm{to}}\) & \(\geq7\Delta\) & \(\geq4\Delta\) \\
        \(k\)-gap latency & \(4\delta+k(\Delta_{\mathrm{to}}+3\delta)\) & \(3\delta+k(\Delta_{\mathrm{to}}+3\delta)\) \\
        \bottomrule
    \end{tabular}
\end{table}

\subsection{The 1/3-VA Protocol}

The 1/3-VA protocol (\Cref{fig:va-shared,fig:va-third,fig:va-third-amplification}) implements the \hyperref[par:va-guarantees]{view agreement abstraction} of~\Cref{sec:blockchain-va} under asynchrony for \(f<n/3\) (proofs in \Cref{app:va-proofs}).
Its prepare delay of \(\delta\) and commit delay of \(2\delta\) are what give Generic Simplex a block time of \(2\delta\) and the optimal good-case commit latency of \(3\delta\).

\begin{figure*}[!t]
    \centering
    \small
    \setlength{\fboxsep}{2pt}%
    \fbox{%
    \begin{minipage}{\dimexpr\textwidth-2\fboxsep-2\fboxrule\relax}
    \begin{minipage}[t]{0.485\linewidth}
    \begin{algorithmic}[1]
        \State \textbf{Public state:}
        \State \ind $\var{prepared} \gets \emptyset$;
        \State \ind $\var{disabled} \gets \False$;
        \State \ind $\var{committed} \gets \None$;
        \BlankLine
        \State \kw{procedure} $\fn{clear}(a)$:
        \State \ind \kw{if} $a = \bot$: $\var{disabled} \gets \True$;
        \State \ind \kw{else if} $a \notin \var{prepared}$ \kw{and} $\fn{valid}(a)$:
        \State \indd $\var{prepared} \gets \var{prepared} \cup \{a\}$;
        \algstore{vashared}
    \end{algorithmic}
    \end{minipage}\hfill
    \begin{minipage}[t]{0.485\linewidth}
    \begin{algorithmic}[1]
        \algrestore{vashared}
        \State \kw{procedure} $\fn{do\_commit}(x)$:
        \State \ind \kw{if} $\fn{valid}(x)$:
        \State \indd \kw{call} $\fn{clear}(x)$;
        \State \indd \kw{if} $\var{committed} = \None$:
        \State \inddd $\var{committed} \gets x$;
    \end{algorithmic}
    \end{minipage}
    \end{minipage}}
    \caption{State and helper procedures shared by the view agreement implementations.}%
    \label{fig:va-shared}
    \vspace{\floatsep}
    \fbox{%
    \begin{minipage}{\dimexpr\textwidth-2\fboxsep-2\fboxrule\relax}
    \begin{minipage}[t]{0.485\linewidth}
    \begin{algorithmic}[1]
        \State \textbf{Private state:}
        \State \ind $\var{sent\_commit} \gets \None$;
        \State \ind $\var{sent\_candidate} \gets \emptyset$;
        \BlankLine
        \State \kw{upon} $\fn{vote}(x)$:
        \State \ind \kw{broadcast} $\sig{\Vote,x}$;
        \BlankLine
        \State \kw{upon} $\nrecv{\sig{\Vote,x}} \geq n-f$\label{ln:va-third-vote-quorum}
        \State \indd \kw{and} $\fn{valid}(x)$:
        \State \ind \kw{call} $\fn{clear}(x)$;\label{ln:va-third-prepare}
        \State \ind \kw{if} $\var{sent\_commit} = \None$
        \State \inddd \kw{and} $\var{sent\_candidate} \subseteq \{x\}$:\label{ln:va-third-commit-exclusion}
        \State \indd $\var{sent\_commit} \gets x$;
        \State \indd \kw{broadcast} $\sig{\CommitMsg,x}$;\label{ln:va-third-send-commit}
        \BlankLine
        \State \kw{upon} $\nrecv{\sig{\CommitMsg,x}} \geq n-f$:\label{ln:va-third-commit-quorum}
        \State \ind \kw{call} $\fn{do\_commit}(x)$;
        \algstore{vathird}
    \end{algorithmic}
    \end{minipage}\hfill
    \begin{minipage}[t]{0.485\linewidth}
    \begin{algorithmic}[1]
        \algrestore{vathird}
        \State \kw{procedure} $\fn{send\_candidate}(a)$:
        \State \ind \kw{if} $a\notin\var{sent\_candidate}$
        \State \inddd \kw{and} $\var{sent\_commit}\in\{\None,a\}$:\label{ln:va-third-candidate-exclusion}
        \State \indd $\var{sent\_candidate} \gets$
        \State \inddd $\var{sent\_candidate} \cup \{a\}$;
        \State \indd \kw{broadcast} $\sig{\Candidate,a}$;
        \BlankLine
        \State \kw{upon} $\nrecv{\sig{\Vote,x}} \geq f+1$\label{ln:va-third-candidate-votes}
        \State \indd \kw{and} $\fn{valid}(x)$:
        \State \ind \kw{call} $\fn{send\_candidate}(x)$;
        \BlankLine
        \State \kw{upon} $\fn{request\_disable}()$:\label{ln:va-third-request-disable}
        \State \ind \kw{call} $\fn{send\_candidate}(\bot)$;
        \BlankLine
        \State \kw{upon} $\nrecv{\sig{\Candidate,a}} \geq 2f+1$:\label{ln:va-third-candidate-quorum}
        \State \ind \kw{broadcast} $\sig{\Ready,a}$;
    \end{algorithmic}
    \end{minipage}
    \end{minipage}}
    \caption{1/3-VA, part 1: core protocol.
    Thresholds count distinct senders, and vote counts ignore equivocating voters.}%
    \label{fig:va-third}
    \vspace{\floatsep}
    \fbox{%
    \begin{minipage}{\dimexpr\textwidth-2\fboxsep-2\fboxrule\relax}
    \begin{minipage}[t]{0.485\linewidth}
    \begin{algorithmic}[1]
        \State \kw{upon} $\nrecv{\sig{\Ready,a}} \geq f+1$:\label{ln:va-third-relay-ready}
        \State \ind \kw{broadcast} $\sig{\Ready,a}$;
        \algstore{vathirdamplification}
    \end{algorithmic}
    \end{minipage}\hfill
    \begin{minipage}[t]{0.485\linewidth}
    \begin{algorithmic}[1]
        \algrestore{vathirdamplification}
        \State \kw{upon} $\nrecv{\sig{\Ready,a}} \geq 2f+1$:\label{ln:va-third-ready-quorum}
        \State \ind \kw{call} $\fn{clear}(a)$;
    \end{algorithmic}
    \end{minipage}
    \end{minipage}}
    \caption{1/3-VA, part 2: Bracha-style amplification.}%
    \label{fig:va-third-amplification}
\end{figure*}

1/3-VA consists of a core protocol (\Cref{fig:va-third}) and a Bracha-style amplification mechanism~\cite{bracha1987asynchronous} (\Cref{fig:va-third-amplification}), which run concurrently.
Additionally, it uses the shared state and helper procedures of~\Cref{fig:va-shared} to manipulate the view agreement output interface: \(\fn{clear}(x)\) prepares \(x\) if it is valid, or disables the instance if \(x=\bot\), while \(\fn{do\_commit}(x)\) prepares and commits \(x\) if it is valid.
Here we assume that \(\bot\) is a special value that correct parties never vote for.

The algorithm is best understood through three cases, based on the votes of correct parties.
The first is the fast path, which gives 1/3-VA its prepare and commit delays.
The other two distinguish whether the fast path may have prepared a value, which requires at least \(f+1\) correct voters: if so, all correct parties must eventually prepare that value; if not, nothing can have been committed, and the instance can be safely disabled.
Those three cases are tied together by Bracha-style amplification, which propagates preparations and disabling, and a simple local exclusion rule that ensures consistency across cases.
Unless stated otherwise, line references below refer to \Cref{fig:va-third}.

\paragraph{Case 1: All correct parties vote for the same valid value.}
Suppose all correct parties vote for the same valid value \(x\) and none requests disabling.
Then every correct party receives \(n-f\) votes for \(x\), prepares \(x\), and broadcasts a \(\CommitMsg\) message for \(x\) (lines~\ref{ln:va-third-vote-quorum}--\ref{ln:va-third-send-commit}).
Next, every correct party receives \(n-f\) \(\CommitMsg\) messages for \(x\) and commits it (line~\ref{ln:va-third-commit-quorum}).
This achieves 1/3-VA's prepare delay of \(\delta\) and its commit delay of~\(2\delta\).

\paragraph{Case 2: Some value has at least \(f+1\) correct voters.}
Suppose at least \(f+1\) correct parties vote for the same valid value \(x\).
There may not be enough votes to commit \(x\), but these correct voters intersect every set of \(n-f\) voters, so no other value can gather \(n-f\) votes.
Moreover, every correct party eventually receives their votes.
We use this common support to ensure that all correct parties eventually prepare \(x\):

Every correct party broadcasts a \(\Candidate\) message for \(x\) upon receiving \(f+1\) votes for \(x\) (line~\ref{ln:va-third-candidate-votes}).
The local exclusion rule discussed below cannot prevent this, since no correct party can have sent a \(\CommitMsg\) message for a different value.
Upon receiving \(2f+1\) \(\Candidate\) messages for \(x\), parties broadcast a \(\Ready\) message for \(x\) (line~\ref{ln:va-third-candidate-quorum}).
Upon receiving \(2f+1\) \(\Ready\) messages for \(x\), they prepare \(x\) by calling \(\fn{clear}(x)\) (\Cref{fig:va-third-amplification}, line~\ref{ln:va-third-ready-quorum}).

\paragraph{Case 3: No valid value has \(f+1\) correct voters.}
Finally, suppose at most \(f\) correct parties vote for each valid value.
No correct party can send a \(\CommitMsg\) message or prepare any value at line~\ref{ln:va-third-prepare}, since \(n-f\) votes for one value would include at least \(n-2f\geq f+1\) correct voters.
Nevertheless, we must ensure that, if every correct party calls \(\fn{request\_disable}()\), then eventually all correct parties clear the instance.
For this, a party that requests disabling broadcasts a \(\Candidate\) message for \(\bot\) (line~\ref{ln:va-third-request-disable}).
Since no correct party has sent a \(\CommitMsg\) message, the local exclusion rule discussed below does not prevent this.
Thus, if all correct parties request disabling, they all broadcast \(\Candidate\) messages for \(\bot\), and the same exchange of \(\Ready\) messages as in the previous case eventually disables the instance at all correct parties.

\paragraph{Consistency.}
To maintain VA-consistency, correct parties never broadcast \(\CommitMsg\) and \(\Candidate\) for different values, including \(\bot\) (lines~\ref{ln:va-third-commit-exclusion} and~\ref{ln:va-third-candidate-exclusion}).
A set of \(n-f\) \(\CommitMsg\) messages for one value and a set of \(2f+1\) \(\Candidate\) messages for another would have a correct sender in common, contradicting this exclusion.
Moreover, any two sets of \(n-f\) votes intersect in a correct party, so a commitment to \(x\) also excludes preparing a different value directly upon votes.

Notice that several values may each have \(f+1\) correct voters, in which case the candidate and ready exchanges prepare all valid ones.
This is okay because view agreement allows preparing multiple values if none commits. %

\paragraph{Totality.}
If a correct party prepares \(x\) upon receiving \(n-f\) votes, then at least \(f+1\) correct parties voted for \(x\), so all correct parties eventually prepare \(x\) as in the second case.
Preparations and disabling triggered by \(\Ready\) messages propagate through standard Bracha-style amplification, using the \(f+1\) relay rule (\Cref{fig:va-third-amplification}, line~\ref{ln:va-third-relay-ready}).

\paragraph{Note on requests to disable after voting.}
By the exclusion rule, a party that has sent a \(\Candidate\) message for \(\bot\) can no longer send a \(\CommitMsg\) message.
Hence, in the first case, a party that requests disabling after voting but before sending its \(\CommitMsg\) message may prevent the commit, although every correct party still prepares \(x\) upon \(n-f\) votes.
This is why VA-unanimity requires that no party requests disabling, and why the timeout condition of Generic Simplex includes the prepare delay.
\Cref{app:strong-unanimity} presents a variant of 1/3-VA in which such requests cannot prevent the commit, reducing the view timeout to~\(6\Delta\).

\subsection{Fast VA}
\label{sec:fast-va}

Fast VA assumes \(f<n/5\) and commits directly upon \(n-f\) votes, eliminating the separate \(\CommitMsg\) exchange and reducing the blockchain good-case latency to \(2\delta\).
Its fallback uses \(2f+1\) votes to trigger ready messages and Bracha-style amplification to propagate preparations and disabling.
To protect a possible commitment, a voter sends \(\Disable\) only after observing \(2f+1\) parties that voted for other values or sent \(\Disable\), rather than merely upon requesting disabling.
The stronger resilience assumption \(n>5f\) ensures that fallback can still progress under these restrictions.
The full protocol and proofs appear in~\Cref{app:va-fifth}.

\section{Related Work}%
\label{sec:related-work}
\daniel{We probably need a comparison table for these protocols in the main paper or appendix.}

\paragraph{Signature-free consensus.}
Among earlier responsive signature-free protocols with optimal resilience, Castro's MAC-based PBFT commits in \(3\delta\) but sends \(O(n^3)\) bits per view~\cite{castro_thesis}, while IT-HS~\cite{abrahamInformationTheoreticHotStuff2021} and TetraBFT~\cite{yuTetraBFTReducingLatency2024b} send \(O(n^2)\) bits per view but commit in \(6\delta\) and \(5\delta\), respectively.
Forget-IT~\cite{abrahamForgetITOptimalGoodCase2026}, which is closely related, solves single-shot consensus with \(n=3f+1\), \(3\delta\) good-case latency, \(O(n^2)\) communication per view, and constant persistent storage.
Relative to Forget-IT, we add a pipelined blockchain protocol with a block time of \(2\delta\) and commitment independent of later views, obtained from a modular, generic construction.
1/3-VA's core is inspired by Forget-IT and by IT-Kuplex, a signature-free protocol described in a blog post~\cite{abrahamInformationTheoreticKuplex2026}; in particular, its exclusion between commit and candidate messages resembles IT-Kuplex's final lock.
Fast TetraBFT~\cite{fernandez-pintoFastTetraBFTOptimizing2026} is another single-shot protocol deciding in \(3\delta\).

\paragraph{Signature-free blockchain protocols.}
Multi-shot TetraBFT~\cite{yuTetraBFTReducingLatency2024b} proposes a block every \(\delta\), but, as in Chained HotStuff, a block commits only once blocks in the next three slots are notarized.
In the full version of his paper, Shoup sketches signature-free variants of Simplex~\cite[Section~7]{shoupSingSongSimplexEprint} that replace threshold signatures with Bracha's echo/ready exchange~\cite{bracha1987asynchronous}; this doubles their latencies to \(6\delta\) from proposal to commit and \(4\delta\) between consecutive proposals, for a block time of about \(3\delta\) with rotating leaders (about \(2\delta\) with a stable leader).
Simple-IT~\cite{yuSimpleIT2026} notarizes Simplex blocks with reliable broadcast; it commits in \(4\delta\) (\(3\delta\) if about \(5n/6\) parties are correct) and reaches a block time of \(\delta\) with speculative proposals.
Our protocols combine a \(2\delta\) block time, with rotating leaders and no speculation, with commitment independent of later views and optimal good-case latency.

\paragraph{Decompositions of Simplex.}
Abraham decomposes single-shot Simplex into an outer protocol and a per-view Certifying Graded Broadcast (CGB) instance~\cite{abrahamDeconstructingSimplex2026}, and Abraham and Neu apply this decomposition to chained Simplex~\cite{abrahamNeuChainedSimplex2026}.
Generic Simplex keeps the shape of their outer proposal rule but changes how parties establish proposal safety.
CGB receives the leader's input and an external-validity predicate encoding the outer voting rule, which their construction checks using signed certificates carried with the proposal.
With view agreement, the outer protocol at each party waits until its local state shows a proposal to be safe and then votes.
VA-totality ensures that a proposal safe at one correct party eventually becomes safe at all, so proposals need not carry transferable certificates.
Generic Simplex also controls requests to disable, keeping the view timer outside view agreement.
The view agreement guarantees involve no timing assumptions; CGB manages its timer internally and assumes synchrony for its good-case guarantee.
CGB requires totality for all certificates, including decision certificates, whereas view agreement requires it only for preparation and disabling;
Generic Simplex instead obtains commitment totality through later committed descendants.
IT-Kuplex~\cite{abrahamInformationTheoreticKuplex2026} already runs a signature-free instance in each view and, like Generic Simplex, holds a proposal until directly received quorums show it to be safe; it is single-shot, however, and states no modular per-view specification.

\paragraph{Other related work.}
View agreement is also a per-view abstraction in the vein of adopt-commit~\cite{gafni_round-by-round_1998}, crusader agreement~\cite{dolevByzantineGeneralsStrike1982}, graded broadcast (gradecast)~\cite{feldmanOptimalProbabilisticProtocol1997}, and their generalizations~\cite{attiya_multi-valued_2023}.
Unlike these abstractions, view agreement is not a single-input/single-output abstraction.

Casper FFG~\cite{buterinCasperFriendlyFinality2017} is an early example of committing blocks through votes on their descendants, as Chained HotStuff~\cite{yin2019hotstuff} and Streamlet~\cite{chan2020streamlet} also do.
The latter two protocols need four and five consecutive correct leaders, respectively, to guarantee commitment, a requirement that BeeGees~\cite{giridharanBeeGeesStayinAlive2023} and Simplex~\cite{chanSimplexConsensusSimple2023} avoid.
For \(n\geq5f+1\), two-step commitment goes back to FaB Paxos~\cite{martinFastByzantineConsensus2006a} (\(n\geq5f-1\) with signatures~\cite{kuznetsovRevisitingOptimalResilience2021}), and Minimmit~\cite{chouMinimmitFC2026} is a signature-based Simplex-style protocol committing in \(2\delta\).

\clearpage
\bibliographystyle{splncs04}
\bibliography{main}

@article{lamport_time_1978,
  title = {Time, Clocks, and the Ordering of Events in a Distributed System},
  author = {Lamport, Leslie},
  year = {1978},
  journal = {Communications of the ACM},
  volume = {21},
  number = {7},
  pages = {558--565},
  urldate = {2016-11-17}
}

@article{lamport2001paxos,
  title={{Paxos} made simple},
  author={Lamport, Leslie},
  journal={ACM SIGACT News},
  volume={32},
  number={4},
  pages={51--58},
  year={2001}
}

@misc{abrahamInformationTheoreticKuplex2026,
  title = {Information-Theoretic {Kuplex}},
  author = {Abraham, Ittai and Das, Sourav and Efron, Yuval and Komatovic, Jovan and Stern, Gilad},
  year = 2026,
  month = jun,
  howpublished = {Decentralized Thoughts},
  url = {https://decentralizedthoughts.github.io/2026-06-05-IT-Kuplex/},
  note = {Blog post, 5 June 2026}
}

@inproceedings{abrahamForgetITOptimalGoodCase2026,
  title = {Forget-{{IT}}: {{Optimal Good-Case Latency For Information-Theoretic BFT}}},
  shorttitle = {Forget-{{IT}}},
  booktitle = {Proceedings of the {ACM} Symposium on Principles of Distributed Computing},
  author = {Abraham, Ittai and Das, Sourav and Efron, Yuval and Komatovic, Jovan},
  year = 2026,
  month = jul,
  series = {{PODC} '26},
  pages = {478--488},
  publisher = {ACM},
  url = {https://doi.org/10.1145/3796701.3815965},
  urldate = {2026-09-09}
}

@inproceedings{yin2019hotstuff,
  title={{HotStuff}: {BFT} consensus with linearity and responsiveness},
  author={Yin, Maofan and Malkhi, Dahlia and Reiter, Michael K and Gueta, Guy Golan and Abraham, Ittai},
  booktitle={Proceedings of the 2019 ACM Symposium on Principles of Distributed Computing},
  pages={347--356},
  year={2019}
}

@inproceedings{castro1999practical,
  title={Practical {Byzantine} fault tolerance},
  author={Castro, Miguel and Liskov, Barbara},
  booktitle={3rd Symposium on Operating Systems Design and Implementation ({OSDI}~'99)},
  publisher={USENIX Association},
  pages={173--186},
  year={1999}
}

@inproceedings{yuTetraBFTReducingLatency2024b,
  title = {{{TetraBFT}}: {{Reducing Latency}} of {{Unauthenticated}}, {{Responsive BFT Consensus}}},
  shorttitle = {{{TetraBFT}}},
  booktitle = {Proceedings of the 43rd {{ACM Symposium}} on {{Principles}} of {{Distributed Computing}}},
  author = {Yu, Qianyu and Losa, Giuliano and Wang, Xuechao},
  year = 2024,
  month = jun,
  series = {{{PODC}} '24},
  pages = {257--267},
  publisher = {Association for Computing Machinery},
  address = {New York, NY, USA},
  doi = {10.1145/3662158.3662783},
  urldate = {2025-12-10},
  isbn = {979-8-4007-0668-4}
}

@inproceedings{abraham2022good,
  title={Good-Case and Bad-Case Latency of Unauthenticated {Byzantine} Broadcast: {A} Complete Categorization},
  author={Abraham, Ittai and Ren, Ling and Xiang, Zhuolun},
  booktitle={25th International Conference on Principles of Distributed Systems ({OPODIS} 2021)},
  series={Leibniz International Proceedings in Informatics (LIPIcs)},
  volume={217},
  pages={5:1--5:20},
  publisher={Schloss Dagstuhl -- Leibniz-Zentrum f{\"u}r Informatik},
  year={2022},
  doi={10.4230/LIPIcs.OPODIS.2021.5}
}

@inproceedings{kuznetsovRevisitingOptimalResilience2021,
  title = {Revisiting {{Optimal Resilience}} of {{Fast Byzantine Consensus}}},
  booktitle = {Proceedings of the 2021 {{ACM Symposium}} on {{Principles}} of {{Distributed Computing}}},
  author = {Kuznetsov, Petr and Tonkikh, Andrei and Zhang, Yan X},
  year = {2021},
  month = jul,
  series = {{{PODC}}'21},
  pages = {343--353},
  publisher = {Association for Computing Machinery},
  address = {New York, NY, USA},
  doi = {10.1145/3465084.3467924},
  urldate = {2022-11-02},
  isbn = {978-1-4503-8548-0}
}

@article{martinFastByzantineConsensus2006a,
  title = {Fast {{Byzantine Consensus}}},
  author = {Martin, J.-P. and Alvisi, L.},
  year = {2006},
  month = jul,
  journal = {IEEE Transactions on Dependable and Secure Computing},
  volume = {3},
  number = {3},
  pages = {202--215},
  issn = {1941-0018},
  doi = {10.1109/TDSC.2006.35},
  urldate = {2025-03-14}
}

@inproceedings{chan2020streamlet,
  title={Streamlet: Textbook streamlined blockchains},
  author={Chan, Benjamin Y and Shi, Elaine},
  booktitle={Proceedings of the 2nd ACM Conference on Advances in Financial Technologies},
  pages={1--11},
  year={2020}
}

@article{bracha1987asynchronous,
  title={Asynchronous {Byzantine} agreement protocols},
  author={Bracha, Gabriel},
  journal={Information and Computation},
  volume={75},
  number={2},
  pages={130--143},
  year={1987},
  publisher={Elsevier}
}

@inproceedings{abrahamInformationTheoreticHotStuff2021,
  title = {Information {{Theoretic HotStuff}}},
  booktitle = {24th {{International Conference}} on {{Principles}} of {{Distributed Systems}} ({{OPODIS}} 2020)},
  author = {Abraham, Ittai and Stern, Gilad},
  editor = {Bramas, Quentin and Oshman, Rotem and Romano, Paolo},
  year = 2021,
  series = {Leibniz {{International Proceedings}} in {{Informatics}} ({{LIPIcs}})},
  volume = {184},
  pages = {11:1--11:16},
  publisher = {Schloss Dagstuhl -- Leibniz-Zentrum f\"ur Informatik},
  address = {Dagstuhl, Germany},
  issn = {1868-8969},
  doi = {10.4230/LIPIcs.OPODIS.2020.11},
  urldate = {2026-05-28},
  isbn = {978-3-95977-176-4}
}

@phdthesis{castro_thesis,
	type = {Ph.{D}. thesis},
	title = {Practical {Byzantine} {Fault} {Tolerance}},
	school = {MIT},
	author = {Castro, Miguel},
	month = jan,
	year = {2001},
        howpublished={\url{https://www.microsoft.com/en-us/research/wp-content/uploads/2017/01/thesis-mcastro.pdf}}
}

@inproceedings{abraham_good-case_2021,
	address = {New York, NY, USA},
	series = {{PODC}'21},
	title = {Good-case {Latency} of {Byzantine} {Broadcast}: {A} {Complete} {Categorization}},
	isbn = {978-1-4503-8548-0},
	shorttitle = {Good-case {Latency} of {Byzantine} {Broadcast}},
	url = {https://doi.org/10.1145/3465084.3467899},
	urldate = {2024-02-11},
	booktitle = {Proceedings of the 2021 {ACM} {Symposium} on {Principles} of {Distributed} {Computing}},
	publisher = {Association for Computing Machinery},
	author = {Abraham, Ittai and Nayak, Kartik and Ren, Ling and Xiang, Zhuolun},
	month = jul,
	year = {2021},
	pages = {331--341},
}

@inproceedings{attiya_multi-valued_2023,
	title = {Multi-{Valued} {Connected} {Consensus}: {A} {New} {Perspective} on {Crusader} {Agreement} and {Adopt}-{Commit}},
	shorttitle = {Multi-{Valued} {Connected} {Consensus}},
	booktitle = {27th International Conference on Principles of Distributed Systems ({OPODIS} 2023)},
	author = {Attiya, Hagit and Welch, Jennifer L.},
	year = {2024},
	series = {Leibniz International Proceedings in Informatics ({LIPIcs})},
	volume = {286},
	pages = {6:1--6:23},
	publisher = {Schloss Dagstuhl -- Leibniz-Zentrum f\"ur Informatik},
	doi = {10.4230/LIPIcs.OPODIS.2023.6},
}

@inproceedings{chanSimplexConsensusSimple2023,
  title = {Simplex {{Consensus}}: {{A Simple}} and~{{Fast Consensus Protocol}}},
  shorttitle = {Simplex {{Consensus}}},
  booktitle = {Theory of {{Cryptography}}},
  author = {Chan, Benjamin Y. and Pass, Rafael},
  editor = {Rothblum, Guy and Wee, Hoeteck},
  year = 2023,
  pages = {452--479},
  publisher = {Springer Nature Switzerland},
  address = {Cham},
  doi = {10.1007/978-3-031-48624-1_17},
  isbn = {978-3-031-48624-1},
  langid = {english}
}

@misc{shoupSingSongSimplexEprint,
  author = {Shoup, Victor},
  title = {Sing a Song of {Simplex}},
  howpublished = {Cryptology {ePrint} Archive, Paper 2023/1916, \url{https://eprint.iacr.org/2023/1916}},
  year = {2023},
  note = {Full version of~\cite{shoupSingSongSimplex2024}},
}

@inproceedings{gafni_round-by-round_1998,
  title = {Round-by-Round {{Fault Detectors}} ({{Extended Abstract}}): {{Unifying Synchrony}} and {{Asynchrony}}},
  shorttitle = {Round-by-Round {{Fault Detectors}} ({{Extended Abstract}})},
  booktitle = {Proceedings of the {{Seventeenth Annual ACM Symposium}} on {{Principles}} of {{Distributed Computing}}},
  author = {Gafni, Eli},
  year = 1998,
  series = {{{PODC}} '98},
  pages = {143--152},
  publisher = {ACM},
  address = {New York, NY, USA},
  doi = {10.1145/277697.277724},
  urldate = {2016-11-16},
  isbn = {978-0-89791-977-7}
}

@misc{chouMinimmitFastFinality2026,
  title = {Minimmit: {{Fast Finality}} with {{Even Faster Blocks}}},
  shorttitle = {Minimmit},
  author = {Chou, Brendan Kobayashi and {Lewis-Pye}, Andrew and O'Grady, Patrick},
  year = 2026,
  month = jan,
  number = {arXiv:2508.10862},
  eprint = {2508.10862},
  primaryclass = {cs.DC},
  publisher = {arXiv},
  doi = {10.48550/arXiv.2508.10862},
  urldate = {2026-05-18},
  archiveprefix = {arXiv},
  note = {Full version of~\cite{chouMinimmitFC2026}}
}

@inproceedings{chouMinimmitFC2026,
  title = {Minimmit: {{Fast Finality}} with {{Even Faster Blocks}}},
  author = {Chou, Brendan Kobayashi and {Lewis-Pye}, Andrew and O'Grady, Patrick},
  booktitle = {Financial Cryptography and Data Security ({FC} 2026)},
  series = {Lecture Notes in Computer Science},
  publisher = {Springer},
  year = {2026}
}

@misc{yuSimpleIT2026,
  title = {{Simple-IT}: Practical Low-Latency Signature-Free {BFT} Consensus},
  author = {Yu, Qianyu and Villacis, Juan and Losa, Giuliano and Xiang, Zhuolun and Wang, Xuechao},
  year = {2026},
  eprint = {2606.14404},
  archiveprefix = {arXiv},
  primaryclass = {cs.DC},
  url = {https://arxiv.org/abs/2606.14404}
}

@misc{buterinCasperFriendlyFinality2017,
  title = {{Casper} the {Friendly} {Finality} {Gadget}},
  author = {Buterin, Vitalik and Griffith, Virgil},
  year = {2017},
  eprint = {1710.09437},
  archiveprefix = {arXiv},
  primaryclass = {cs.CR},
  doi = {10.48550/arXiv.1710.09437}
}

@misc{abrahamDeconstructingSimplex2026,
  title = {Deconstructing {Simplex}},
  author = {Abraham, Ittai},
  year = {2026},
  month = apr,
  howpublished = {Decentralized Thoughts},
  note = {Blog post, 23 April 2026},
  url = {https://decentralizedthoughts.github.io/2026-04-23-deconstructing-simplex/}
}

@misc{abrahamNeuChainedSimplex2026,
  title = {From Single-Shot {Simplex} to Chained {Simplex}},
  author = {Abraham, Ittai and Neu, Joachim},
  year = {2026},
  month = may,
  howpublished = {Decentralized Thoughts},
  note = {Blog post, 20 May 2026},
  url = {https://decentralizedthoughts.github.io/2026-05-20-from-single-shot-to-chained/}
}

@techreport{aaronsonQuantumComputingBlockchain2026,
  title = {Quantum Computing \& Blockchain},
  author = {Aaronson, Scott and Boneh, Dan and Drake, Justin and Kannan, Sreeram and Lindell, Yehuda and Malkhi, Dahlia},
  institution = {Coinbase Independent Advisory Board on Quantum Computing and Blockchain},
  type = {Position paper},
  year = {2026},
  month = apr,
  note = {21 April 2026},
  url = {https://assets.ctfassets.net/sygt3q11s4a9/6EjYavuGdtJDYCqaJrASj9/9f464a8bf26f44bd6c85710fe7e4a29f/Quantum_Computing_and_Blockchain_v10.3_15April2026.pdf}
}

@inproceedings{giridharanBeeGeesStayinAlive2023,
  title = {{BeeGees}: Stayin' Alive in Chained {BFT}},
  booktitle = {Proceedings of the 2023 {ACM} Symposium on Principles of Distributed Computing},
  author = {Giridharan, Neil and Suri-Payer, Florian and Ding, Matthew and Howard, Heidi and Abraham, Ittai and Crooks, Natacha},
  year = {2023},
  series = {{PODC} '23},
  pages = {233--243},
  publisher = {ACM},
  doi = {10.1145/3583668.3594572}
}

@misc{fernandez-pintoFastTetraBFTOptimizing2026,
  title = {Fast {TetraBFT}: Optimizing Latency Where It Matters},
  author = {Fern{\'a}ndez-Pinto, Antonio J. and Bravo, Manuel and Chockler, Gregory and Gotsman, Alexey},
  year = {2026},
  eprint = {2606.03754},
  archiveprefix = {arXiv},
  primaryclass = {cs.DC},
  url = {https://arxiv.org/abs/2606.03754}
}

@article{dolevByzantineGeneralsStrike1982,
  title = {The {Byzantine} Generals Strike Again},
  author = {Dolev, Danny},
  journal = {Journal of Algorithms},
  year = {1982},
  volume = {3},
  number = {1},
  pages = {14--30},
  doi = {10.1016/0196-6774(82)90004-9}
}

@article{feldmanOptimalProbabilisticProtocol1997,
  title = {An Optimal Probabilistic Protocol for Synchronous {Byzantine} Agreement},
  author = {Feldman, Pesech and Micali, Silvio},
  journal = {SIAM Journal on Computing},
  year = {1997},
  volume = {26},
  number = {4},
  pages = {873--933},
  doi = {10.1137/S0097539790187084}
}

\clearpage
\appendix
\crefalias{section}{appendix}
\crefalias{subsection}{subappendix}
\crefalias{subsubsection}{subsubappendix}
\counterwithin*{theorem}{section}
\renewcommand{\thetheorem}{\thesection.\arabic{theorem}}
\renewcommand{\thelemma}{\thetheorem}
\renewcommand{\thecorollary}{\thetheorem}
\renewcommand{\thedefinition}{\thetheorem}
\section{Extended Model}
\label{app:extended-model}

The main text assumes eventual synchrony and authenticated channels without transferable signatures.
Some results in the appendix use the following extensions.

\paragraph{PKI setting.}
In the PKI setting, all parties have registered public keys and can create publicly verifiable authenticated messages.
We assume perfect authentication, with no probability of error.
Unlike channel authentication, a signature remains verifiable when the signed message is forwarded.

\paragraph{Synchronous executions.}
In a synchronous execution, all correct parties start at time \(0\), GST is \(0\), and correct-to-correct messages have delay at most \(\Delta\) throughout the execution.
The parties' local clocks indicate real time.
As in the main model, \(\delta\leq\Delta\) denotes the maximum message delay during a period whose latency we analyze.

\section{Generic Simplex: Correctness Proofs}%
\label{app:consensus-proofs}

We prove the properties of Generic Simplex, presented in~\Cref{sec:blockchain-va}, using its \hyperref[par:va-guarantees]{view agreement guarantees}, with its latency parameters read as in~\Cref{app:consensus-setting}.
Line numbers refer to~\Cref{fig:consensus-va}.

\subsection{Setting}
\label{app:consensus-setting}

\paragraph{Conventions.}
All correct parties start at time~\(0\) by entering view~\(1\).
Local steps take no time, a \kw{when} rule fires as soon as its guard holds, and only finitely many events occur in any bounded time interval.
Local clocks advance at the rate of real time, so a view timer reset at time \(s\) fires at time \(s+\Delta_{\mathrm{to}}\) unless it is reset earlier.
Leaders rotate round-robin: the leader of view \(v\) is \(\party{((v-1)\bmod n)+1}\).
We use only two consequences of this rule: every party leads infinitely many views, and if at most \(k<n\) parties are Byzantine, then at most \(k\) consecutive views have Byzantine leaders.

\paragraph{Timing.}
An execution is \emph{\(\delta\)-bounded from time \(T_0\)} if every message that a correct party sends to a correct party at a time \(s\) is received by time \(\max\{s,T_0\}+\delta\).
We read eventual synchrony as saying that every execution is \(\Delta\)-bounded from \(\mathrm{GST}\).
The latency bounds concern executions that are \(\delta\)-bounded from some \(T_0\geq\mathrm{GST}\), where \(\delta\leq\Delta\); totality and liveness use \(\delta=\Delta\) and \(T_0=\mathrm{GST}\).

We read the latency parameters of view agreement as follows, for an instance in an execution that is \(\delta\)-bounded from \(T_0\) and in which correct parties respect the interface.
\begin{itemize}
    \item If every correct party votes for the same valid value \(x\), no correct party calls \(\fn{request\_disable}()\) (or, if the implementation satisfies VA-strong-unanimity (\Cref{def:va-strong-unanimity}), none does so before voting), and the last correct vote occurs at time \(s\), then every correct party prepares \(x\) by time \(\max\{s,T_0\}+d_p(\delta)\) and commits \(x\) by time \(\max\{s,T_0\}+d_c(\delta)\).
    \item If a correct party prepares \(x\), or disables, at time \(s\), then every correct party prepares \(x\), or disables, respectively, by time \(\max\{s,T_0\}+d_t(\delta)\), and by time \(\max\{s,T_0\}+d_t^s(\delta)\) if the instance is Byzantine-silent.
    \item If every correct party calls \(\fn{request\_disable}()\), the last one at time \(s\), then the instance is cleared at every correct party by time \(\max\{s,T_0\}+d_f(\delta)\).
\end{itemize}
An execution that is \(\delta\)-bounded from \(T_0\) is also \(\delta'\)-bounded from \(T_0\) for every \(\delta'\geq\delta\), so the least valid bounds are nondecreasing in \(\delta\), and we assume that every parameter is.
As in the \hyperref[par:va-latency]{latency definitions in~\Cref*{sec:blockchain-va}}, every parameter is at least \(\delta\), and \(d_t^s(\delta)\leq d_t(\delta)\).
Finally, the guarantees and delay bounds of view agreement hold for every pattern and timing of input calls that respects the interface.

\paragraph{Notation.}
For a correct party \(p\) and a view \(v\), \(\fn{prepared}_p(v)\), \(\fn{disabled}_p(v)\), and \(\fn{committed}_p(v)\) denote the outputs of view \(v\) at \(p\), as defined in~\Cref{fig:consensus-va}; thus \(\fn{prepared}_p(0)=\{\text{genesis}\}\), \(\fn{committed}_p(0)=\text{genesis}\), and view~\(0\) is never disabled.
View \(v\) is \emph{cleared} at \(p\) when \(\fn{prepared}_p(v)\neq\emptyset\) or \(\fn{disabled}_p(v)\) holds.
A block is \emph{prepared} (\emph{committed}) \emph{in view \(v\)}, and view \(v\) is \emph{disabled}, if this holds at some correct party at some time.
A block \(b\) is \emph{safe for view \(v\) at \(p\)} if \(b\) is not genesis and there is a view \(v'<v\) such that \(\fn{parent}(b)\in\fn{prepared}_p(v')\) and \(\fn{disabled}_p(v'')\) holds for every \(v''\) with \(v'<v''<v\).
This is \(\fn{is\_safe}(b)\) evaluated with \(\var{curr\_view}=v\); since outputs only grow, it remains true once true.
By VA-totality, a block that is safe for view \(v\) at some correct party eventually becomes safe for \(v\) at every correct party.

We write \(e_p(v)\) for the time at which \(p\) enters view \(v\), and \(t_f(v)\) for the first time at which a correct party enters view \(v\).
If the leader of view \(v\) is correct, \(b_v\) denotes the block that it creates in view \(v\).

\begin{lemma}
\label{lem:consensus-basic}
Every correct party \(p\) satisfies the following.
\begin{enumerate}[label=(\alph*)]
    \item Party \(p\) enters views \(1,2,3,\ldots\) in order, each at most once, and its current view is always the smallest view that is not cleared at \(p\).
    Hence \(p\) has entered view \(v\) by time \(t\) if and only if every view below \(v\) is cleared at \(p\) by time \(t\).
    \item Party \(p\) calls the instance of view \(v\) only while in view \(v\), and it respects the interface of the instance.
    \item Party \(p\) calls \(\fn{request\_disable}()\) on the instance of view \(v\) if and only if it is still in view \(v\) at time \(e_p(v)+\Delta_{\mathrm{to}}\), and then does so at that time.
    Hence no correct party calls \(\fn{request\_disable}()\) on this instance before time \(t_f(v)+\Delta_{\mathrm{to}}\).
    \item If \(p\) leads view \(v\), then, upon entering \(v\), it creates \(b_v\) by its only call to \(\fn{extend}\) in view \(v\) and broadcasts it.
    The parent of \(b_v\) is in \(\fn{prepared}_p(\var{vmax})\), where \(\var{vmax}\) is the largest view below \(v\) with a nonempty prepared set at \(p\) at time \(e_p(v)\), and every view strictly between \(\var{vmax}\) and \(v\) is disabled at \(p\) at that time.
    The block \(b_v\) is valid, and it is safe for view \(v\) at \(p\) from time \(e_p(v)\) on.
    \item A vote by \(p\) in view \(v\) through the vote rule (line~\ref{ln:consensus-vote}) is for \(\var{proposal}[v]\), which is then safe for view \(v\) at \(p\); a catch-up vote (line~\ref{ln:consensus-catchup-vote}) is for a block in \(\fn{prepared}_p(v)\).
    If the leader of \(v\) is correct, then \(\var{proposal}[v]\) is \(\None\) or \(b_v\).
\end{enumerate}
\end{lemma}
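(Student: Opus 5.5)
The proof is an inspection of the pseudocode in \Cref{fig:consensus-va}. I would organize it around one invariant, proved by induction over the sequence of local steps of \(p\): (i) the views \(p\) has entered are exactly \(1,\ldots,\var{curr\_view}\); (ii) every view below \(\var{curr\_view}\) is cleared at \(p\); and (iii) \(\var{curr\_view}\) itself is cleared only transiently, namely only within the step whose \kw{when} rule at line~\ref{ln:consensus-cleared} then fires and advances the view.

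\textbf{Part (a).} \(\fn{enter\_view}\) is called only at startup with argument \(1\) and at line~\ref{ln:consensus-advance} with argument \(\var{curr\_view}+1\), so views are entered in order and each at most once. Line~\ref{ln:consensus-advance} executes only when \(\var{curr\_view}\) is cleared. Since outputs only grow, a cleared view stays cleared, which gives (ii). Because a \kw{when} rule fires as soon as its guard holds, \(p\) never rests in a cleared current view, which gives (iii). A cascade of several views becoming cleared at once is handled by repeated firing at the same instant. So \(\var{curr\_view}\) is the smallest uncleared view, and the ``hence'' clause follows.

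\textbf{Parts (b) and (c).} Every call to an instance passes \(\var{curr\_view}\) explicitly, which gives the first half of (b). For interface compliance:
\begin{itemize}
\item \(\var{voted}\) is set immediately after each vote and is reset only when a new view is entered, so \(p\) votes at most once per instance.
\item Both vote sites check \(\neg\var{disable\_requested}\), so \(p\) never votes after requesting disabling.
\item \(\fn{request\_disable}\) is guarded by \(\neg\var{disable\_requested}\), so it is called at most once per instance.
\end{itemize}
For (c), the timer is reset only in \(\fn{enter\_view}\), so while \(p\) stays in \(v\) it fires exactly at \(e_p(v)+\Delta_{\mathrm{to}}\). If \(p\) has left \(v\) by then, the timer was reset and no request targets \(v\) afterwards; if \(p\) is still in \(v\), the firing calls \(\fn{request\_disable}\) because \(\var{disable\_requested}\) was reset on entry. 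There is one subtle point at the deadline: by the model convention, messages arriving then are processed first, so a view that becomes cleared exactly at the deadline is left before the timer fires. The lower bound \(t_f(v)+\Delta_{\mathrm{to}}\) follows since \(e_p(v)\ge t_f(v)\).

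\textbf{Parts (d) and (e).} \(\fn{propose}\) is called only in \(\fn{enter\_view}\) by the leader, so \(b_v\) comes from the only \(\fn{extend}\) call in \(v\) and is broadcast. \(\var{vmax}\) is well defined because \(\fn{prepared}(0)\neq\emptyset\). By (a), every view below \(v\) is cleared at \(e_p(v)\), and views strictly between \(\var{vmax}\) and \(v\) have empty prepared sets by maximality, so they must be disabled. Hence \(\fn{is\_safe}(b_v)\) holds at \(p\) at time \(e_p(v)\), and it persists by monotonicity. Validity follows from the \(\fn{extend}\) specification together with VA-validity, which makes the prepared parent valid (genesis is valid by assumption). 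For (e), the vote-rule guard includes \(\fn{is\_safe}(\var{proposal}[\var{curr\_view}])\), and a catch-up vote picks \(b\in\fn{prepared}(\var{curr\_view})\). Finally, \(\var{proposal}[v]\) is set only from a message received from the leader over an authenticated channel, and a correct leader sends exactly one proposal, namely \(b_v\).

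\textbf{Main obstacle.} I expect the hardest step to be (a) together with the timer boundary case in (c). Both hinge on the instantaneous-firing and deadline-ordering conventions, which must be invoked carefully so that a party never sits in a cleared view and never requests disabling in a view it would leave at that same instant.
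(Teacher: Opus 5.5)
Your proposal is correct and follows the same route as the paper: a direct inspection of \Cref{fig:consensus-va}. It rests on the same facts. The advance rule fires as soon as the current view is cleared, the flags and the view timer are reset only on entering a view, \(\var{vmax}\) exists because view~\(0\) is prepared, and \(\var{proposal}[v]\) holds the first proposal received from the leader. Your explicit invariant and your handling of the timer-deadline convention only spell out points that the paper's terser proof leaves implicit.
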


\begin{proof}
(a)
Party \(p\) enters view \(1\) at startup and then only enters the next view, through the advance rule (line~\ref{ln:consensus-advance}), which fires as soon as the current view is cleared; view \(0\) is always cleared, and outputs only grow.

(b)
Every call goes to \(\var{va}[\var{curr\_view}]\), \(p\) never returns to a view, and the flags \(\var{voted}\) and \(\var{disable\_requested}\) are reset only upon entering a view.

(c)
Party \(p\) resets its view timer upon entering each view (line~\ref{ln:consensus-reset-timer}) and calls \(\fn{request\_disable}()\) only when the timer fires, on the instance of its current view; \(\var{disable\_requested}\) is reset only upon entering a view, and \(e_p(v)\geq t_f(v)\).

(d)
Procedure \(\fn{propose}\) is called only upon entering a view that \(p\) leads, \(\var{vmax}\) exists since \(\fn{prepared}_p(0)\neq\emptyset\), and the parent is valid by VA-validity or because genesis is valid.
By (a), every view strictly between \(\var{vmax}\) and \(v\) is cleared at \(p\) at time \(e_p(v)\), and its prepared set is empty, so it is disabled.

(e)
The variable \(\var{proposal}[v]\) holds the first proposal for view \(v\) received from its leader.
\end{proof}

\subsection{Safety}
\label{app:consensus-safety}

\begin{lemma}
\label{lem:consensus-justification}
If a block \(b\) is prepared or committed in a view \(v>0\), then \(b\) is valid and there is a view \(v'<v\) such that \(\fn{parent}(b)\) is prepared in view \(v'\) and every view strictly between \(v'\) and \(v\) is disabled.
\end{lemma}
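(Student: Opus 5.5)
Let \(b\) be prepared or committed in view \(v>0\) at some correct party. The plan is to reduce to the first correct vote for \(b\) in view \(v\), and then lift the local safety check done at that vote to a global statement.

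\textbf{Step 1: reduce committed to prepared.} By VA-consistency, a correct party commits \(b\) only after preparing it. So it suffices to treat the case where \(b\) is prepared in view \(v\).

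\textbf{Step 2: validity and existence of a vote.} By VA-validity, applied to the instance \(\var{va}[v]\), the block \(b\) is valid. VA-validity also gives that some correct party voted for \(b\) in that instance. Lemma~\ref{lem:consensus-basic}(b) guarantees that correct parties respect the interface, so the VA guarantees apply.

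\textbf{Step 3: the first correct vote is a checked vote.} Among all correct votes for \(b\) in view \(v\), consider the one that occurs earliest in real time. Finitely many events occur in bounded time, so this is well defined; ties can be broken arbitrarily. Let \(p\) be the voting party and \(t\) the time of this vote. By Lemma~\ref{lem:consensus-basic}(e), the vote is one of two kinds.
\begin{itemize}
\item It is a catch-up vote at line~\ref{ln:consensus-catchup-vote}. Then \(b\in\fn{prepared}_p(v)\) at time \(t\). By VA-validity, some correct party voted for \(b\) strictly before \(t\), contradicting minimality. The paper's model has zero-time local steps, so I would argue that VA-validity says the vote occurred ``previously''. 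That means strictly before the preparation, hence strictly before the catch-up vote.
\item It is a vote through the vote rule at line~\ref{ln:consensus-vote}. Then \(b\) is safe for view \(v\) at \(p\) at time \(t\).
\end{itemize}
Either way, \(b\) is safe for view \(v\) at \(p\).

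\textbf{Step 4: unfold safety into global facts.} Safety at \(p\) means \(b\) is not genesis, and there is a view \(v'<v\) with \(\fn{parent}(b)\in\fn{prepared}_p(v')\) and \(\fn{disabled}_p(v'')\) for all \(v'<v''<v\). Since \(p\) is correct, \(\fn{parent}(b)\) is prepared in view \(v'\) and each such \(v''\) is disabled, in the global sense defined in the Notation paragraph. This also covers \(v'=0\), because \(\fn{prepared}_p(0)=\{\text{genesis}\}\) holds by definition.

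\textbf{Main obstacle.} The delicate point is Step 3: ruling out that the first correct vote is a catch-up vote. This relies on reading VA-validity's ``previously voted'' as strictly earlier than the preparation. If that reading were in doubt, I would instead induct on the order of events. The hypothesis would be that every correct vote for \(b\) is preceded by a vote-rule vote for \(b\). A catch-up vote follows a preparation, which by VA-validity follows an earlier vote, so the hypothesis propagates back to a vote-rule vote.
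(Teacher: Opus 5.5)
Your proposal is correct and follows the paper's proof step for step. VA-validity gives that \(b\) is valid and that some correct party voted for it; the first such vote cannot be a catch-up vote, because a catch-up vote is for an already prepared, hence already voted-for, block; so by \Cref{lem:consensus-basic}(e) \(b\) was safe for view \(v\) at the voter, which supplies \(v'\). Your Step~1 is harmless but unnecessary, since VA-validity already covers committed values directly. Your fallback of ordering votes causally rather than by real time is the cleanest way to make ``first vote'' precise.
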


\begin{proof}
By VA-validity, \(b\) is valid and some correct party voted for \(b\) in view \(v\).
The first such vote is not a catch-up vote, since a catch-up vote is for a block that is already prepared and hence, by VA-validity, already voted for by a correct party.
By~\Cref{lem:consensus-basic}(e), \(b\) was therefore safe for view \(v\) at the voter, which provides the view \(v'\).
\end{proof}

\begin{lemma}
\label{lem:consensus-committed-prefix}
If \(b\) is prepared or committed in view \(v\), and \(b'\) is committed in view \(v'\leq v\), then \(b'\) is an ancestor of \(b\).
\end{lemma}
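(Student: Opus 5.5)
We proceed by strong induction on \(v\), keeping \(b'\) and \(v'\) fixed. If \(v=v'\), then \(b\) is prepared or committed in view \(v'\), in which \(b'\) is committed. If \(v'=0\), then \(b'\) is genesis, which is an ancestor of every block. Otherwise, VA-consistency says that no correct party prepares or commits a value other than \(b'\) in view \(v'\). So \(b=b'\), and \(b'\) is an ancestor of \(b\) (reflexivity). Note that ``prepared in view \(v\)'' means ``at some correct party at some time'', while VA-consistency quantifies over all correct parties and all times, so this application is direct.

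Now suppose \(v>v'\); then \(v>0\). \Cref{lem:consensus-justification} yields a view \(u<v\) such that \(\fn{parent}(b)\) is prepared in view \(u\) and every view strictly between \(u\) and \(v\) is disabled. We first claim that \(u\geq v'\). Suppose instead that \(u<v'\). Then \(u<v'<v\), so view \(v'\) is disabled at some correct party. But \(b'\) is committed in \(v'\), and VA-consistency forbids any correct party from ever disabling that view; for \(v'=0\) the case cannot occur at all, since view \(0\) is never disabled. This contradiction gives \(u\geq v'\).

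With \(v'\leq u<v\), the induction hypothesis applies to \(\fn{parent}(b)\), which is prepared in view \(u\). Hence \(b'\) is an ancestor of \(\fn{parent}(b)\), and therefore of \(b\).

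The main obstacle is the step ruling out \(u<v'\), which is where commitment blocks the ``jump over'' in \Cref{fig:consistency-example}. Everything else is bookkeeping: the base case, the genesis conventions for view \(0\), and checking that the induction is well founded. It is well founded because the induction runs over views \(v\geq v'\) with \(u<v\) strictly decreasing, and views are natural numbers.
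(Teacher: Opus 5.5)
Your proof is correct and follows the paper's argument essentially step for step. You use strong induction on \(v\) with \(b'\) and \(v'\) fixed, settle the base case \(v=v'\) by VA-consistency or the genesis convention, and force \(u\geq v'\) via \Cref{lem:consensus-justification} and the fact that a view with a committed block is never disabled, before applying the induction hypothesis to \(\fn{parent}(b)\). The only cosmetic difference is the case \(v=v'=0\): you argue that genesis is an ancestor of every block, whereas the paper notes directly that \(b=b'=\) genesis.
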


\begin{proof}
We fix \(v'\) and \(b'\), and proceed by strong induction on \(v\).
If \(v=v'\), then \(b=b'\): for \(v=0\), both are genesis, and for \(v>0\), this follows from VA-consistency.
Suppose that \(v>v'\).
By~\Cref{lem:consensus-justification}, there is a view \(v_{prev}<v\) such that \(\fn{parent}(b)\) is prepared in view \(v_{prev}\) and every view strictly between \(v_{prev}\) and \(v\) is disabled.
View \(v'\) is not disabled: view \(0\) never is, and for \(v'>0\), VA-consistency excludes it because \(b'\) is committed in \(v'\).
Hence \(v'\) is not strictly between \(v_{prev}\) and \(v\), that is, \(v_{prev}\geq v'\).
By the induction hypothesis, \(b'\) is an ancestor of \(\fn{parent}(b)\), and hence of \(b\).
\end{proof}

\begin{theorem}[Consistency]
\label{thm:blockchain-consistency}
Correct parties never commit incompatible blocks.
\end{theorem}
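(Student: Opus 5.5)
The plan is to reduce consistency to \Cref{lem:consensus-committed-prefix}. The key step is to identify what a correct party actually commits. A correct party calls \(\fn{commit}(b)\) only at line~\ref{ln:consensus-commit}, when \(\fn{committed}(v)\) becomes \(b\neq\None\) for some view \(v\). Hence \(b\) is committed in view \(v\) in the sense of \Cref{app:consensus-setting}. By convention, \(\fn{commit}(b)\) also commits all ancestors of \(b\). So the set of blocks committed by correct parties is the union, over views \(v\) and blocks \(b\) committed in view \(v\), of the ancestor sets of these \(b\).

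Next, fix two such blocks: \(b_1\) committed in view \(v_1\) and \(b_2\) committed in view \(v_2\), possibly at different correct parties. Without loss of generality, assume \(v_1\leq v_2\). Since \(b_2\) is committed in view \(v_2\), \Cref{lem:consensus-committed-prefix} applies with \(b=b_2\), \(v=v_2\), \(b'=b_1\), \(v'=v_1\). It gives that \(b_1\) is an ancestor of \(b_2\). View \(0\), with genesis, is covered by the lemma as well.

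Finally, consider arbitrary committed blocks \(c_1\) and \(c_2\). Each is an ancestor of some directly committed block, say \(c_i\) is an ancestor of \(b_i\). By the previous step, both \(c_1\) and \(c_2\) are ancestors of the single block \(b_2\). The ancestors of a block form a single chain from genesis, because each block has a unique parent. Hence one of \(c_1,c_2\) is an ancestor of the other, i.e., they are compatible.

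I expect no real obstacle, since the inductive work is already done in \Cref{lem:consensus-committed-prefix}. The only care needed is the bookkeeping step: committing a block implicitly commits its ancestors, so compatibility must be derived through the common descendant \(b_2\) rather than by comparing \(c_1,c_2\) directly.
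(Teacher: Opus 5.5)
Your proposal is correct and follows the paper's proof essentially step for step. Like the paper, you note that a correct party calls \(\fn{commit}\) only on blocks committed in some view, apply \Cref{lem:consensus-committed-prefix} to the directly committed blocks with \(v_1\leq v_2\), and conclude that every committed block lies on the single ancestor chain of the later-view committed block.
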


\begin{proof}
A correct party calls \(\fn{commit}(c)\) only when \(c\) is committed in some view at that party (line~\ref{ln:consensus-commit}).
Thus, for two blocks \(b\) and \(b'\) committed by correct parties, possibly as ancestors, there are blocks \(c\) and \(c'\) committed in views \(v\) and \(v'\), respectively, such that \(b\) is an ancestor of \(c\) and \(b'\) is an ancestor of \(c'\).
Without loss of generality, let \(v\leq v'\).
By~\Cref{lem:consensus-committed-prefix}, \(c\) is an ancestor of \(c'\).
Thus \(b\) and \(b'\) lie on the same chain ending at \(c'\), so they are compatible.
\end{proof}

\begin{theorem}[Validity]
\label{thm:blockchain-validity}
Correct parties never commit invalid blocks.
\end{theorem}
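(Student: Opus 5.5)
The plan is to follow the same reduction as in the proof of \Cref{thm:blockchain-consistency}. A correct party calls \(\fn{commit}(c)\) only at line~\ref{ln:consensus-commit}, and only when \(c=\fn{committed}_p(v)\) for some view \(v\). By convention, this call also commits every ancestor of \(c\). So it suffices to show that every ancestor of a block committed in some view is valid. I will establish this by strong induction on the view, using the stronger claim: if \(b\) is prepared or committed in view \(v\), then every ancestor of \(b\) is valid.

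For the base case, \(v=0\): the only block prepared or committed in view~\(0\) is genesis. Genesis is assumed valid, and it is its own only ancestor.

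For the inductive step, \(v>0\), I invoke \Cref{lem:consensus-justification}. It gives two facts:
\begin{itemize}
    \item \(b\) itself is valid;
    \item there is a view \(v'<v\) in which \(\fn{parent}(b)\) is prepared.
\end{itemize}
By the induction hypothesis applied to \(v'\), every ancestor of \(\fn{parent}(b)\) is valid. The ancestors of \(b\) are exactly \(b\) together with the ancestors of \(\fn{parent}(b)\), so all of them are valid.

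The induction is well-founded because following parent links reaches genesis and the views strictly decrease. There is no real obstacle here. The only point needing care is to carry validity along the whole ancestor chain, not just the directly committed block. This is necessary because \(\fn{commit}(c)\) commits all ancestors, and some of those ancestors may come from disabled views, which never commit anything directly.
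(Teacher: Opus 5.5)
Your proposal is correct and follows essentially the same route as the paper: a strong induction on the view, using \Cref{lem:consensus-justification}, shows that every ancestor of a block prepared or committed in a view is valid. The paper then concludes, as you do, by noting that a correct party calls \(\fn{commit}(c)\) only for a block \(c\) committed in some view.
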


\begin{proof}
By strong induction on \(v\), every ancestor of a block prepared or committed in view \(v\) is valid: for \(v=0\), the block is genesis, which is valid, and for \(v>0\), \Cref{lem:consensus-justification} shows that the block is valid and that its parent is prepared in a lower view.
A correct party calls \(\fn{commit}(c)\) only for a block \(c\) committed in some view, so every block that it commits, \(c\) or an ancestor of \(c\), is valid.
\end{proof}

\subsection{Views}
\label{app:consensus-views}

\begin{lemma}
\label{lem:consensus-view-progress}
Every correct party enters every view.
\end{lemma}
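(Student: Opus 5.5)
We argue by induction on \(v\) that every correct party enters view \(v\). The base case \(v=1\) holds because every correct party enters view \(1\) at startup. For the inductive step, assume every correct party enters view \(v\), and show that every correct party enters \(v+1\). By \Cref{lem:consensus-basic}(a), it suffices to show that view \(v\) eventually becomes cleared at every correct party. Every view below \(v\) is already cleared at each correct party, since each has entered \(v\), and cleared status is monotone because outputs only grow.

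We split into two cases according to whether some correct party ever clears view \(v\).

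\emph{Case 1: some correct party clears \(v\).} This party either has a block \(x\in\fn{prepared}(v)\) or has \(v\) disabled. In either case VA-totality gives that every correct party eventually prepares \(x\), respectively disables \(v\), so \(v\) becomes cleared everywhere. To invoke VA-totality we need that correct parties respect the interface of the instance, which is \Cref{lem:consensus-basic}(b). We also need that the instance is always active regardless of a party's view, which the protocol guarantees.

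\emph{Case 2: no correct party ever clears \(v\).} Then every correct party \(p\) stays in view \(v\) forever after entering it at time \(e_p(v)\), which is finite by the induction hypothesis. By \Cref{lem:consensus-basic}(c), \(p\) therefore calls \(\fn{request\_disable}()\) at time \(e_p(v)+\Delta_{\mathrm{to}}\). Since this holds for every correct party, VA-fallback-progress yields that the instance is eventually cleared at all correct parties. This contradicts the case assumption, so Case 2 cannot occur, and we are done.

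The main subtlety is in Case 2: we must justify that the view timer actually fires. This relies on the convention that local clocks advance at real-time rate, so the timer reset at \(e_p(v)\) fires at \(e_p(v)+\Delta_{\mathrm{to}}\) unless reset earlier. The timer is reset only upon entering a new view, which does not happen while \(p\) remains in view \(v\). It also relies on the guard \(\neg\var{disable\_requested}\) holding at that point, since the flag is reset on entry to \(v\) and only set by the timer. The rest is a direct application of the VA guarantees. No timing assumptions such as GST are needed, because all guarantees used are eventual.
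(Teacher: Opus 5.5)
Your proposal is correct and matches the paper's proof essentially step for step. Both use induction on views and reduce the inductive step, via \Cref{lem:consensus-basic}(a), to showing that view \(v\) is eventually cleared at every correct party. Both then apply VA-totality if some correct party ever clears \(v\), and otherwise use \Cref{lem:consensus-basic}(c) with VA-fallback-progress to reach a contradiction.
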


\begin{proof}
All correct parties enter view~\(1\).
Suppose that all correct parties enter view \(v\); by~\Cref{lem:consensus-basic}(a), it suffices to show that \(v\) is eventually cleared at every correct party.
If \(v\) is cleared at some correct party, then VA-totality clears it at every correct party.
Otherwise, no correct party ever leaves \(v\), so, by~\Cref{lem:consensus-basic}(c), every correct party calls \(\fn{request\_disable}()\) on the instance of view \(v\), and VA-fallback-progress clears \(v\) at every correct party, a contradiction.
\end{proof}

Hence \(e_p(v)\) and \(t_f(v)\) are defined for every correct party \(p\) and every view \(v\).

For the rest of this appendix, we consider an execution that is \(\delta\)-bounded from some \(T_0\geq\mathrm{GST}\), where \(\delta\leq\Delta\).

\begin{lemma}
\label{lem:consensus-view-synchronization}
If a correct party enters view \(v\) at time \(t\), then every correct party enters view \(v\) by time \(\max\{t,T_0\}+d_t(\delta)\).
In particular, if \(t_f(v)\geq T_0\), then every correct party enters view \(v\) by time \(t_f(v)+d_t(\delta)\).
\end{lemma}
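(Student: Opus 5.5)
By \Cref{lem:consensus-basic}(a), party \(p\) enters view \(v\) at time \(t\) exactly when every view \(u<v\) is cleared at \(p\) by time \(t\). The plan is to transfer each of these clearings to every other correct party \(q\) using the totality delay, and then apply \Cref{lem:consensus-basic}(a) again at \(q\).

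Fix a view \(u\) with \(1\le u<v\); view \(0\) is always cleared. At time \(t\), view \(u\) is cleared at \(p\) in one of two ways.
\begin{itemize}
\item Some block \(x\in\fn{prepared}_p(u)\). Then \(p\) first prepared \(x\) at some time \(s\le t\). By the timing reading of the totality delay, every correct party prepares \(x\) by time \(\max\{s,T_0\}+d_t(\delta)\le\max\{t,T_0\}+d_t(\delta)\).
\item \(\fn{disabled}_p(u)\) holds. Then \(p\) disabled at some time \(s\le t\), and every correct party disables \(u\) by the same bound.
\end{itemize}
In either case \(u\) is cleared at every correct party \(q\) by \(\max\{t,T_0\}+d_t(\delta)\). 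Since prepared sets and disabled flags only grow, \(u\) stays cleared from then on.

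There are finitely many views below \(v\), so all of them are cleared at \(q\) by \(\max\{t,T_0\}+d_t(\delta)\). By \Cref{lem:consensus-basic}(a), \(q\) has entered \(v\) by that time. For the second sentence, take \(t=t_f(v)\); the hypothesis \(t_f(v)\ge T_0\) makes the maximum equal to \(t_f(v)\).

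The only point needing care is that the totality delay is measured from the first correct preparation or disabling, not from \(p\)'s own. This only helps, since that first time is at most \(s\) and the bound \(\max\{\cdot,T_0\}+d_t(\delta)\) is monotone in its start time.

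A second point is that the latency guarantees apply to these instances at all. They do, because correct parties respect the interface by \Cref{lem:consensus-basic}(b), and the guarantees hold for every pattern and timing of input calls that respects the interface.
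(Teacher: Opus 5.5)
Your proposal is correct and follows the paper's proof essentially step for step: you clear every view below \(v\) at \(p\) via \Cref{lem:consensus-basic}(a), transfer each clearing to all correct parties with the totality delay, and apply (a) again. Your remark about measuring from the first correct preparation is harmless but unnecessary. The appendix reading of the totality delay in \Cref{app:consensus-setting} already bounds propagation from any correct party's preparation or disabling at time \(s\).
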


\begin{proof}
Let \(p\) be a correct party that enters view \(v\) at time \(t\).
By~\Cref{lem:consensus-basic}(a), every view \(v'<v\) is cleared at \(p\) by time \(t\): \(p\) has prepared some block or disabled in view \(v'\).
By the totality delay, every correct party prepares that block, or disables, in view \(v'\) by time \(\max\{t,T_0\}+d_t(\delta)\).
Thus every view below \(v\) is cleared at every correct party by that time, and~\Cref{lem:consensus-basic}(a) gives the claim.
\end{proof}

\begin{corollary}
\label{cor:consensus-view-duration}
If every correct party has entered view \(v\) by time \(s\), and \(s+\Delta_{\mathrm{to}}\geq T_0\), then every correct party enters view \(v+1\) by time
\[
  s+\Delta_{\mathrm{to}}+\max\{d_f(\delta),d_t(\delta)\}.
\]
\end{corollary}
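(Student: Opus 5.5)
The plan is a case split on whether some correct party clears view \(v\) before the timers fire. Let \(s' = s+\Delta_{\mathrm{to}}\), so \(s'\geq T_0\). By \Cref{lem:consensus-basic}(a), a correct party enters view \(v+1\) exactly when every view up to \(v\) is cleared at it. Every view below \(v\) is already cleared at every correct party by time \(s\), because all of them have entered \(v\). So it suffices to show that \(v\) is cleared at every correct party by time \(s'+\max\{d_f(\delta),d_t(\delta)\}\).

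\emph{Case 1: some correct party \(p\) clears view \(v\) at some time \(t\leq s'\).} Suppose \(p\) prepares some \(x\) or disables at time \(t\). The totality delay then gives that every correct party prepares \(x\), or disables, respectively, by time \(\max\{t,T_0\}+d_t(\delta)\). Since \(t\leq s'\) and \(T_0\leq s'\), this is at most \(s'+d_t(\delta)\).

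\emph{Case 2: no correct party clears \(v\) by time \(s'\).} Then no correct party leaves view \(v\) before \(s'\). Each correct party \(q\) entered \(v\) at \(e_q(v)\leq s\). By \Cref{lem:consensus-basic}(c), if \(q\) is still in view \(v\) at \(e_q(v)+\Delta_{\mathrm{to}}\leq s'\), it calls \(\fn{request\_disable}()\) at that time. Indeed, \(q\) cannot have left \(v\) before then, since leaving requires clearing \(v\) at \(q\), which by assumption has not happened by \(s'\).

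So every correct party calls \(\fn{request\_disable}()\), and the last call occurs at some time \(\leq s'\). The fallback delay then clears the instance at every correct party by \(\max\{s',T_0\}+d_f(\delta)=s'+d_f(\delta)\).

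The only delicate point is the boundary in Case 2: we must check that a party which clears \(v\) exactly at or after its own deadline still issues its request. This is covered by the convention that messages are processed before timers fire, together with the case hypothesis that nothing is cleared by time \(s'\). A subtler issue arises if some party clears \(v\) between its own deadline and \(s'\); in that situation Case 1 already applies, so the split should be phrased as \emph{some correct party clears \(v\) by \(s'\)} versus \emph{none does}. Taking the maximum of the two case bounds yields the claim.
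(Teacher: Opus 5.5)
Your proof is correct and follows essentially the same route as the paper's: you split on whether view \(v\) is cleared at some correct party by \(s+\Delta_{\mathrm{to}}\), use the totality delay in the first case and \Cref{lem:consensus-basic}(c) plus the fallback delay in the second, and conclude with \Cref{lem:consensus-basic}(a). One minor point: in Case~2 the fallback delay is measured from the last request time \(\tau\leq s+\Delta_{\mathrm{to}}\), giving \(\max\{\tau,T_0\}+d_f(\delta)\), which is at most the bound you wrote.
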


\begin{proof}
Let \(T=s+\Delta_{\mathrm{to}}\).
If view \(v\) is cleared at some correct party by time \(T\), then the totality delay makes it cleared at every correct party by time \(T+d_t(\delta)\), since \(T\geq T_0\).
Otherwise, no correct party leaves \(v\) by time \(T\), so, by~\Cref{lem:consensus-basic}(c), every correct party \(p\) calls \(\fn{request\_disable}()\) on the instance of view \(v\) at time \(e_p(v)+\Delta_{\mathrm{to}}\leq T\).
The fallback delay then makes \(v\) cleared at every correct party by time \(T+d_f(\delta)\).
In both cases, \Cref{lem:consensus-basic}(a) gives the claim.
\end{proof}

\begin{lemma}
\label{lem:consensus-proposal-propagation}
If the leader \(L\) of view \(v\) is correct, then, by time \(\max\{e_L(v),T_0\}+d_t(\delta)\), every correct party has entered view \(v\) and received \(b_v\), and \(b_v\) is safe for view \(v\) at every correct party.
\end{lemma}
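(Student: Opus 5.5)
The plan is to prove the three claims separately: entry into the view, receipt of \(b_v\), and safety of \(b_v\). All three deadlines are \(t^\ast=\max\{e_L(v),T_0\}+d_t(\delta)\).

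\emph{Entry.} I apply \Cref{lem:consensus-view-synchronization} to the correct leader \(L\), which enters view \(v\) at time \(e_L(v)\). That lemma directly gives that every correct party enters \(v\) by \(t^\ast\).

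\emph{Receipt.} By \Cref{lem:consensus-basic}(d), \(L\) broadcasts \(b_v\) at time \(e_L(v)\). Since the execution is \(\delta\)-bounded from \(T_0\), every correct party receives \(b_v\) by \(\max\{e_L(v),T_0\}+\delta\). This is at most \(t^\ast\) because every latency parameter is at least \(\delta\), so \(d_t(\delta)\geq\delta\).

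\emph{Safety.} By \Cref{lem:consensus-basic}(d), at time \(e_L(v)\) we have \(\fn{parent}(b_v)\in\fn{prepared}_L(\var{vmax})\), and every view strictly between \(\var{vmax}\) and \(v\) is disabled at \(L\). I then transfer each of these facts to an arbitrary correct party \(p\) using the totality delay.
\begin{itemize}
\item If \(\var{vmax}>0\), the preparation at \(L\) happened at some time \(s\leq e_L(v)\), so \(p\) prepares \(\fn{parent}(b_v)\) in \(\var{vmax}\) by \(\max\{s,T_0\}+d_t(\delta)\leq t^\ast\).
\item If \(\var{vmax}=0\), the parent is genesis, and \(\fn{prepared}_p(0)=\{\text{genesis}\}\) holds trivially.
\item Each intermediate view \(v''\) was disabled at \(L\) at some time \(\leq e_L(v)\), so it is disabled at \(p\) by \(t^\ast\).
\end{itemize}
Because outputs only grow, all these facts hold simultaneously at \(p\) at time \(t^\ast\). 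Hence \(b_v\) is safe for view \(v\) at \(p\), with witness \(v'=\var{vmax}\). Note that \(b_v\) is not genesis, since it is created by \(\fn{extend}\).

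The main subtlety is the monotonicity needed to apply the totality-delay bound: the relevant events happen at times \(s\leq e_L(v)\), so I must use \(\max\{s,T_0\}\leq\max\{e_L(v),T_0\}\). I also need to handle view \(0\) separately, since totality does not apply to it but it holds trivially. Neither step should be hard.
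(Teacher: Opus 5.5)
Your proposal is correct and follows essentially the same argument as the paper: view entry via \Cref{lem:consensus-view-synchronization}, receipt of \(b_v\) from \(\delta\)-boundedness together with \(\delta\leq d_t(\delta)\), and safety by transferring \(L\)'s outputs from \Cref{lem:consensus-basic}(d) through the totality delay. Your separate treatment of view~\(0\) and the monotonicity step \(\max\{s,T_0\}\leq\max\{e_L(v),T_0\}\) are details the paper leaves implicit.
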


\begin{proof}
Every correct party enters view \(v\) by that time by~\Cref{lem:consensus-view-synchronization}.
It also receives \(b_v\), which \(L\) broadcasts at time \(e_L(v)\), by time \(\max\{e_L(v),T_0\}+\delta\), and \(\delta\leq d_t(\delta)\).
By~\Cref{lem:consensus-basic}(d), at time \(e_L(v)\), there is a view \(v'<v\) such that \(\fn{parent}(b_v)\in\fn{prepared}_L(v')\) and \(\fn{disabled}_L(v'')\) holds for every \(v''\) with \(v'<v''<v\).
By the totality delay, these outputs hold at every correct party by time \(\max\{e_L(v),T_0\}+d_t(\delta)\), so \(b_v\) is then safe for view \(v\) at every correct party.
\end{proof}

\subsection{Views with Correct Leaders}
\label{app:consensus-good-views}

We first apply VA-unanimity and the prepare delay to execution prefixes that satisfy their conditions so far, even if the rest of the execution might not.

\begin{lemma}
\label{lem:va-continuation}
Consider a view agreement instance and a finite prefix \(\pi\) of the execution in which no correct party calls \(\fn{request\_disable}()\) on the instance and every vote of a correct party is for the same valid value \(x\).
\begin{enumerate}
    \item No correct party disables in~\(\pi\).
    \item If every correct party votes in \(\pi\) by time \(s\), and \(\pi\) extends through time \(\max\{s,T_0\}+d_p(\delta)\), then every correct party prepares \(x\) in~\(\pi\).
\end{enumerate}
\end{lemma}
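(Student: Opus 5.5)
The plan is to reduce both items to the guarantees of view agreement by extending the prefix \(\pi\) to a full execution in which the hypotheses of VA-unanimity and the prepare delay hold. We use the convention stated in \Cref{app:consensus-setting}: the guarantees and delay bounds of view agreement hold for every pattern and timing of input calls that respects the interface. We also use that view agreement outputs at a time depend only on the execution up to that time, which holds because the implementations are ordinary message-passing protocols.

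\textbf{The continuation.} We construct a continuation \(\sigma\) of \(\pi\) as follows. After \(\pi\), no correct party calls \(\fn{request\_disable}()\) on the instance. Every correct party that has not voted for \(x\) in \(\pi\) votes for \(x\) immediately after \(\pi\) ends. Such a vote respects the interface, since the party has not voted and has not requested disabling. Messages between correct parties are delivered in \(\sigma\) so that the execution remains \(\delta\)-bounded from \(T_0\); for example, every pending message is delivered by \(\max\{s,T_0\}+\delta\) for its send time \(s\). Then \(\sigma\) is an execution of the instance in which every correct party votes for the same valid value \(x\) and none requests disabling. VA-unanimity therefore says every correct party eventually commits \(x\).

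\textbf{Item 1.} Suppose a correct party disables in \(\pi\). Since \(\pi\) is a prefix of \(\sigma\), it also disables in \(\sigma\). In \(\sigma\) some correct party commits \(x\), so VA-consistency says no correct party ever disables the view, a contradiction.

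\textbf{Item 2.} In this case every correct party already votes in \(\pi\) by time \(s\), so the continuation adds no votes and the last correct vote in \(\sigma\) occurs at some time \(s'\leq s\). The prepare-delay bound applied to \(\sigma\) gives that every correct party prepares \(x\) by time \(\max\{s',T_0\}+d_p(\delta)\leq\max\{s,T_0\}+d_p(\delta)\). Since \(\pi\) extends through that time and \(\sigma\) agrees with \(\pi\) up to its end, these preparations already occur in \(\pi\).

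\textbf{Main obstacle.} The delicate step is justifying that \(\pi\) can be extended this way. The continuation must be admissible: it must respect the interface, be \(\delta\)-bounded from \(T_0\) with eventually reliable delivery, and let Byzantine parties behave arbitrarily, for instance by continuing as they did in \(\pi\). Outputs must also be causal, determined by the prefix. I would address this by appealing to the stated convention that the view agreement guarantees hold for every respecting input pattern, together with the fact that the instance does not depend on the surrounding consensus protocol's future calls. Because the instance is a self-contained protocol, the adversary may schedule its inputs arbitrarily.
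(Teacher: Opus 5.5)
Your proposal is correct and follows essentially the same route as the paper. Like the paper, you extend \(\pi\) to an execution \(\sigma\), still \(\delta\)-bounded from \(T_0\), in which the remaining correct parties vote for \(x\) and nobody requests disabling, then use VA-unanimity plus VA-consistency for part 1 and the prepare delay for part 2. Your explicit step \(\max\{s',T_0\}\leq\max\{s,T_0\}\) is just a more careful rendering of the paper's ``the last correct vote in \(\sigma\) occurs by time \(s\).''
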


\begin{proof}
Since the guarantees of view agreement hold for every pattern of input calls that respects its interface, we may extend \(\pi\) to an execution \(\sigma\) of the instance, still \(\delta\)-bounded from \(T_0\), in which every correct party that has not voted in \(\pi\) calls \(\fn{vote}(x)\) and no correct party ever calls \(\fn{request\_disable}()\).
In \(\sigma\), each correct party votes at most once and never after requesting disabling, as the interface requires.
Moreover, all correct parties vote for the valid value \(x\) and none requests disabling, so, by VA-unanimity, every correct party eventually commits \(x\) in \(\sigma\).
By VA-consistency, no correct party ever disables in \(\sigma\), and in particular not in its prefix~\(\pi\).
For part 2, the last correct vote in \(\sigma\) occurs by time \(s\), so the prepare delay makes every correct party prepare \(x\) by time \(\max\{s,T_0\}+d_p(\delta)\); these preparations occur in \(\pi\), which extends through that time.
\end{proof}

From now on, we assume the \emph{timeout condition}
\begin{equation}
\label{eq:consensus-timeout}
\Delta_{\mathrm{to}}\geq
\begin{cases}
2d_t(\Delta) & \text{with VA-strong-unanimity},\\
2d_t(\Delta)+d_p(\Delta) & \text{otherwise},
\end{cases}
\end{equation}
where VA-strong-unanimity is defined in~\Cref{app:strong-unanimity}.
This condition in particular requires \(d_t(\Delta)\) to be finite, and \(d_p(\Delta)\) as well in the second case.
The following lemma is the core of the progress and latency arguments.

\begin{lemma}
\label{lem:consensus-good-view}
Let view \(v\) have a correct leader, with \(t_f(v)\geq T_0\), and let \(t_1\leq t_f(v)+2d_t(\delta)\) be a time by which every correct party has entered view \(v\) and received \(b_v\), and \(b_v\) is safe for view \(v\) at every correct party.
Then:
\begin{enumerate}[label=(\alph*)]
    \item every correct vote in view \(v\) is for \(b_v\), and every correct party votes in view \(v\) by time \(t_1\);
    \item every correct party prepares \(b_v\), and enters view \(v+1\), by time \(t_1+d_p(\delta)\);
    \item no correct party calls \(\fn{request\_disable}()\) on the instance of view \(v\) before voting in it, and view \(v\) is never disabled;
    \item every correct party commits \(b_v\), and does so by time \(t_1+d_c(\delta)\).
\end{enumerate}
\end{lemma}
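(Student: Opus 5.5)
The plan is to first establish that no correct party requests disabling early, and then to feed this into \Cref{lem:va-continuation} and the delay parameters.

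\emph{Part (a).} By \Cref{lem:consensus-basic}(e), every vote cast through the vote rule is for \(\var{proposal}[v]\). Since the leader is correct, this proposal is \(b_v\) once it is non-\(\None\). A catch-up vote is for a block already prepared in view \(v\).

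The ordering step is the following. Suppose the first correct vote in view \(v\) were a catch-up vote for some block. That block would already be prepared, so by VA-validity some correct party voted for it earlier, which is a contradiction. Hence the first correct vote is for \(b_v\).

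Now suppose some correct vote were for \(c\neq b_v\). Consider the first such vote. It must be a catch-up vote, so \(c\) was prepared before it. By VA-validity, some correct party voted for \(c\) earlier, contradicting the choice of first such vote. So every correct vote in view \(v\) is for \(b_v\).

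For the timing claim, fix a correct party \(p\) and the time \(t_1\). By \Cref{lem:consensus-basic}(c), \(p\) cannot request disabling before \(t_f(v)+\Delta_{\mathrm{to}}\). The timeout condition gives
\[
t_f(v)+\Delta_{\mathrm{to}}\geq t_f(v)+2d_t(\delta)\geq t_1,
\]
using monotonicity \(d_t(\delta)\leq d_t(\Delta)\). I expect a tie at exactly \(t_1\) to be harmless, because the vote guard is a \kw{when} rule: by the paper's convention, pending local actions are processed before the timer fires. So at \(t_1\), party \(p\) is either still in view \(v\) with its guard true, and votes, or it has already left view \(v\). In the latter case it cleared view \(v\), and its advance rule made a catch-up vote unless it had already voted or requested disabling. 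Requesting disabling earlier is excluded. Hence \(p\) votes by \(t_1\).

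\emph{Parts (b) and (c).} This is the main obstacle: excluding disabling requests before voting, which is exactly what is needed to invoke \Cref{lem:va-continuation}. Without VA-strong-unanimity, the timeout condition gives
\[
t_f(v)+\Delta_{\mathrm{to}}\geq t_1+d_p(\delta).
\]
So take the prefix \(\pi\) up to just before \(t_1+d_p(\delta)\), or more carefully up to \(\max\{t_1,T_0\}+d_p(\delta)=t_1+d_p(\delta)\). In \(\pi\), no correct party has requested disabling, and all votes are for the valid block \(b_v\) (valid by \Cref{lem:consensus-basic}(d)). Part 2 of \Cref{lem:va-continuation} then gives that every correct party prepares \(b_v\) by \(t_1+d_p(\delta)\). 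At that point view \(v\) is cleared, so by \Cref{lem:consensus-basic}(a) the party enters \(v+1\).

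Once a party is in view \(v+1\), it never calls the view-\(v\) instance again, by \Cref{lem:consensus-basic}(b). Therefore no correct party ever requests disabling on view \(v\), which yields (c) in the non-strong case. In the VA-strong-unanimity case, the condition \(\Delta_{\mathrm{to}}\geq 2d_t(\Delta)\) together with (a) already shows that every correct party votes before any request. That is exactly the hypothesis under which the strong-unanimity reading of the prepare and commit delays applies.

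\emph{Final step.} VA-unanimity (or its strong version) now holds for the whole execution, so every correct party commits \(b_v\). The commit delay bound gives this by \(\max\{t_1,T_0\}+d_c(\delta)=t_1+d_c(\delta)\). Line~\ref{ln:consensus-commit} then calls \(\fn{commit}(b_v)\). Finally, VA-consistency shows that view \(v\) is never disabled, which completes (c) and (d).
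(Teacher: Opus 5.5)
Your structure follows the paper's proof: the first-vote argument, no early disable requests, then \Cref{lem:va-continuation}, VA-unanimity and VA-consistency. However, the timing claim in (a) has a genuine gap. You write that a party that has left view \(v\) by \(t_1\) ``made a catch-up vote unless it had already voted or requested disabling.'' The advance rule does not do that. The catch-up vote at line~\ref{ln:consensus-catchup-vote} is cast only if \(\fn{prepared}(v)\neq\emptyset\). A party at which view \(v\) becomes disabled while \(\fn{prepared}_p(v)=\emptyset\) leaves \(v\) without voting, and then VA-unanimity no longer applies. You cannot exclude this with VA-consistency, because that needs a commit, and the commit needs every correct party to vote. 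The argument is circular.

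The paper breaks this circularity at exactly the step you skipped. Let \(\pi\) be the prefix through time \(t_1\), excluding timer firings at that time. You have already shown that no correct party requests disabling in \(\pi\) and that every correct vote is for the valid block \(b_v\). Part~1 of \Cref{lem:va-continuation} then shows that no correct party disables view \(v\) in \(\pi\); it works by extending \(\pi\) counterfactually so that the missing parties vote for \(b_v\). Hence a party that left \(v\) by \(t_1\) did so with \(\fn{prepared}_p(v)\neq\emptyset\), which equals \(\{b_v\}\) by VA-validity and your first step. So it had voted or cast a catch-up vote for \(b_v\). Your uses of part~2 of \Cref{lem:va-continuation} and of VA-(strong-)unanimity in (b)--(d) depend on every correct party having voted by \(t_1\), so you must add this step. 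With it, the rest of your argument goes through as in the paper.
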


\begin{proof}
Let \(b=b_v\), which is valid by~\Cref{lem:consensus-basic}(d).
Since every correct party has entered \(v\) by time \(t_1\), we have \(t_1\geq t_f(v)\geq T_0\).

First, every correct vote in view \(v\) is for \(b\).
Otherwise, consider the first correct vote in view \(v\) for a block \(b''\neq b\).
By~\Cref{lem:consensus-basic}(e), it is a catch-up vote for a block prepared in view \(v\), so, by VA-validity, a correct party voted for \(b''\) in view \(v\) earlier, a contradiction.

Second, by~\Cref{lem:consensus-basic}(c), the timeout condition, and the monotonicity of the delay parameters, no correct party calls \(\fn{request\_disable}()\) on the instance of view \(v\) before time
\[
  t_f(v)+\Delta_{\mathrm{to}}\geq t_f(v)+2d_t(\delta)\geq t_1.
\]
Let \(\pi\) be the prefix of the execution through time \(t_1\), excluding timer firings at that time.
By part 1 of~\Cref{lem:va-continuation}, no correct party disables view \(v\) in~\(\pi\).

Third, every correct party \(p\) votes for \(b\) by time \(t_1\), which completes (a).
If \(p\) is in view \(v\) at time \(t_1\) and has not voted, then the guard of the vote rule holds, since \(\var{proposal}[v]=b\) is safe for view \(v\) at \(p\) and \(p\) has not requested disabling.
If \(p\) has left view \(v\) by time \(t_1\), then \(v\) was cleared at \(p\) in \(\pi\), hence not by disabling, so \(\fn{prepared}_p(v)\neq\emptyset\); by VA-validity and the first step, \(\fn{prepared}_p(v)=\{b\}\).
Upon leaving, \(p\) had therefore voted or cast a catch-up vote for \(b\).

Fourth, no correct party calls \(\fn{request\_disable}()\) on the instance of view \(v\) before voting in it.
With VA-strong-unanimity, this follows from the second and third steps.
Otherwise, the timeout condition gives \(t_f(v)+\Delta_{\mathrm{to}}\geq t_1+d_p(\delta)\), so no correct party calls \(\fn{request\_disable}()\) on the instance before time \(t_1+d_p(\delta)\).
Part 2 of~\Cref{lem:va-continuation}, applied with \(s=t_1\) to the prefix through time \(t_1+d_p(\delta)\), excluding timer firings at that time, shows that every correct party \(p\) prepares \(b\), and hence leaves view \(v\), by time \(t_1+d_p(\delta)\leq t_f(v)+\Delta_{\mathrm{to}}\leq e_p(v)+\Delta_{\mathrm{to}}\).
By~\Cref{lem:consensus-basic}(c), no correct party ever calls \(\fn{request\_disable}()\) on the instance.

Hence the conditions of VA-strong-unanimity, or of VA-unanimity in the second case, hold for the whole execution, with the last correct vote by time \(t_1\) and \(\max\{t_1,T_0\}=t_1\).
Every correct party therefore prepares \(b\), and enters view \(v+1\), by time \(t_1+d_p(\delta)\), which proves (b).
It also commits \(b\) by time \(t_1+d_c(\delta)\), and then calls \(\fn{commit}(b)\) (line~\ref{ln:consensus-commit}).
By VA-consistency, view \(v\) is never disabled, which completes (c) and~(d).
\end{proof}

The commitment in (d) depends only on the instance of view \(v\): it completes regardless of the progress of later views.
The bound on \(t_1\) guarantees that no view timer fires too early, so each use of~\Cref{lem:consensus-good-view} only has to exhibit a suitable time \(t_1\).

\begin{corollary}
\label{lem:consensus-correct-leader}
Let view \(v\) have a correct leader \(L\), with \(t_f(v)\geq T_0\).
Then view \(v\) is never disabled, \(b_v\) is the only block prepared in view \(v\), and every correct party votes for \(b_v\) by time \(e_L(v)+d_t(\delta)\) and enters view \(v+1\) by time \(e_L(v)+d_t(\delta)+d_p(\delta)\).
Moreover, every correct party commits \(b_v\), and does so by time
\[
  e_L(v)+d_t(\delta)+d_c(\delta)\leq t_f(v)+2d_t(\delta)+d_c(\delta).
\]
\end{corollary}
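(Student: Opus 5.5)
The plan is to instantiate \Cref{lem:consensus-good-view} with \(t_1=\max\{e_L(v),T_0\}+d_t(\delta)\), supplied by \Cref{lem:consensus-proposal-propagation}, and then read off each claim.

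First, I simplify \(t_1\). Since \(L\) is correct, \(e_L(v)\geq t_f(v)\geq T_0\), so \(\max\{e_L(v),T_0\}=e_L(v)\) and \(t_1=e_L(v)+d_t(\delta)\). By \Cref{lem:consensus-proposal-propagation}, at time \(t_1\) every correct party has entered view \(v\), has received \(b_v\), and regards \(b_v\) as safe for view \(v\). These are exactly the hypotheses of \Cref{lem:consensus-good-view}, except for the requirement \(t_1\leq t_f(v)+2d_t(\delta)\).

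That requirement is the only nontrivial step. I will apply \Cref{lem:consensus-view-synchronization} to the first correct entry into view \(v\) at time \(t_f(v)\geq T_0\). It shows that \(L\) enters \(v\) by \(t_f(v)+d_t(\delta)\), so \(e_L(v)\leq t_f(v)+d_t(\delta)\). Adding \(d_t(\delta)\) gives \(t_1\leq t_f(v)+2d_t(\delta)\). The same inequality also yields the final displayed bound \(e_L(v)+d_t(\delta)+d_c(\delta)\leq t_f(v)+2d_t(\delta)+d_c(\delta)\).

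With the hypotheses verified, the conclusions follow directly from \Cref{lem:consensus-good-view}:
\begin{itemize}
\item Part (a) gives the voting claim: every correct party votes for \(b_v\) by time \(t_1=e_L(v)+d_t(\delta)\).
\item Part (b) gives entry into view \(v+1\) by time \(e_L(v)+d_t(\delta)+d_p(\delta)\).
\item Part (c) shows that view \(v\) is never disabled.
\item Part (d) gives the commit of \(b_v\) at every correct party by time \(e_L(v)+d_t(\delta)+d_c(\delta)\).
\end{itemize}

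It remains to show that \(b_v\) is the only block prepared in view \(v\). By VA-validity, any block prepared in view \(v\) was voted for there by some correct party. By part (a), every correct vote in view \(v\) is for \(b_v\), so the prepared block must be \(b_v\). Alternatively, this follows from VA-consistency, because \(b_v\) is committed by a correct party.
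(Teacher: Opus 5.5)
Your proposal is correct and follows the paper's proof essentially step for step. The paper also takes \(t_1=e_L(v)+d_t(\delta)\), verifies the hypotheses of \Cref{lem:consensus-good-view} via \Cref{lem:consensus-proposal-propagation} and \Cref{lem:consensus-view-synchronization}, and obtains uniqueness of the prepared block from VA-consistency, which is your second option.
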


\begin{proof}
Let \(t_1=e_L(v)+d_t(\delta)\).
Since \(e_L(v)\geq t_f(v)\geq T_0\), \Cref{lem:consensus-proposal-propagation} shows that \(t_1\) satisfies the conditions of~\Cref{lem:consensus-good-view}, where \(t_1\leq t_f(v)+2d_t(\delta)\) because \(e_L(v)\leq t_f(v)+d_t(\delta)\) by~\Cref{lem:consensus-view-synchronization}.
\Cref{lem:consensus-good-view} gives the claims, and \(b_v\) is the only block prepared in view \(v\) by VA-consistency.
\end{proof}

\begin{lemma}
\label{lem:consensus-fresh-views}
Let \(V_0\) be the largest view that some correct party has entered by time \(T_0\), which exists because finitely many events occur by then.
\begin{enumerate}[label=(\alph*)]
    \item Every view \(v>V_0\) has \(t_f(v)>T_0\).
    \item Every correct party enters view \(V_0\) by time \(T_0+d_t(\delta)\); hence a correct party that calls \(\fn{extend}\) after time \(T_0+d_t(\delta)\) does so in a view above \(V_0\).
    \item Every correct party leads, and calls \(\fn{extend}\) in, infinitely many views above \(V_0\).
\end{enumerate}
\end{lemma}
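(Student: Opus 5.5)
Parts (a) and (b) follow from \Cref{lem:consensus-basic}(a) and \Cref{lem:consensus-view-synchronization}. Part (c) combines \Cref{lem:consensus-view-progress} with round-robin leadership. Before anything else, I justify that \(V_0\) exists. Each correct party enters views one at a time and only through local steps, and only finitely many events occur by time \(T_0\). So each correct party has entered only finitely many views by \(T_0\), and all correct parties have entered view \(1\) at time \(0\). The maximum over the finitely many correct parties is therefore well defined.

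For (a), take \(v>V_0\). By the definition of \(V_0\), no correct party has entered \(v\) by time \(T_0\). By \Cref{lem:consensus-view-progress}, some correct party does enter \(v\), so \(t_f(v)\) is defined and \(t_f(v)>T_0\). A strict inequality is what ``not entered by \(T_0\)'' gives directly, since entering is an event at a specific time.

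For (b), pick a correct party that entered \(V_0\) at some time \(t\leq T_0\). \Cref{lem:consensus-view-synchronization} gives that every correct party enters \(V_0\) by \(\max\{t,T_0\}+d_t(\delta)=T_0+d_t(\delta)\). For the consequence, note from \Cref{lem:consensus-basic}(d) that a correct party calls \(\fn{extend}\) only at the instant it enters a view it leads, in \(\fn{propose}\). Suppose such a call happens at a time \(t'>T_0+d_t(\delta)\) upon entering view \(v\). The party had already entered \(V_0\) by \(T_0+d_t(\delta)<t'\). Views are entered in increasing order, each at most once (\Cref{lem:consensus-basic}(a)), so \(v>V_0\).

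For (c), \Cref{lem:consensus-view-progress} says every correct party \(p\) enters every view. Round-robin leadership makes \(p\) the leader of infinitely many views, and hence of infinitely many views above \(V_0\). Upon entering each such view, \(p\) calls \(\fn{propose}\) and thus \(\fn{extend}\) (\Cref{lem:consensus-basic}(d)).

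None of these steps is a real obstacle. The only point needing care is the existence of \(V_0\), which relies on the finitely-many-events convention of \Cref{app:consensus-setting} together with the fact that views are entered through discrete steps.
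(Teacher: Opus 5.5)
Your proposal is correct and follows the paper's proof: (a) from the definition of \(V_0\), (b) from \Cref{lem:consensus-view-synchronization} together with the fact that \(\fn{extend}\) is called only at entry time (\Cref{lem:consensus-basic}(d)), and (c) from round-robin rotation, \Cref{lem:consensus-view-progress}, and \Cref{lem:consensus-basic}(d). Your step in (b) is sound because it uses the strict time inequality \(e_p(v)>e_p(V_0)\), which still forces \(v>V_0\) even though a party may enter several views at the same instant.
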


\begin{proof}
Part (a) holds by the definition of \(V_0\).
For (b), some correct party enters view \(V_0\) by time \(T_0\), so~\Cref{lem:consensus-view-synchronization} applies; moreover, a correct party \(p\) calls \(\fn{extend}\) in a view \(w\) only at time \(e_p(w)\) (\Cref{lem:consensus-basic}(d)), which exceeds \(e_p(V_0)\) only if \(w>V_0\).
Part (c) follows from round-robin leader rotation, \Cref{lem:consensus-view-progress}, and~\Cref{lem:consensus-basic}(d).
\end{proof}

\subsection{Totality and Liveness}
\label{app:consensus-liveness}

By eventual synchrony, every execution is \(\Delta\)-bounded from \(\mathrm{GST}\), so the results above apply with \(\delta=\Delta\) and \(T_0=\mathrm{GST}\); we use them in this form here.

\begin{theorem}[Totality]
\label{thm:consensus-totality}
Suppose that the timeout condition~\eqref{eq:consensus-timeout} holds.
If a correct party commits a block \(b\), then every correct party eventually commits \(b\).
\end{theorem}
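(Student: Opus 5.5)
The plan is to obtain totality from later commits rather than from propagation of the original commitment, as the discussion after the view agreement guarantees suggests. Suppose a correct party commits \(b\), so \(b\) is an ancestor of some block \(c\) committed in some view \(w\) at that party; by convention \(b\) itself may be an ancestor. It suffices to show that every correct party \(p\) eventually commits some block \(c'\) that has \(c\) as an ancestor, since then \(\fn{commit}(c')\) also commits \(c\) and hence \(b\).

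First, I would pick a view \(v>\max\{w,V_0\}\) with a correct leader, where \(V_0\) is as in \Cref{lem:consensus-fresh-views} (with \(T_0=\mathrm{GST}\), \(\delta=\Delta\)). Such a view exists by round-robin rotation, or directly by \Cref{lem:consensus-fresh-views}(c) applied to any correct party. By \Cref{lem:consensus-fresh-views}(a), \(t_f(v)>\mathrm{GST}=T_0\). The timeout condition~\eqref{eq:consensus-timeout} is assumed, so \Cref{lem:consensus-correct-leader} applies: every correct party commits \(b_v\) in view \(v\) and calls \(\fn{commit}(b_v)\) at line~\ref{ln:consensus-commit}.

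Second, \(b_v\) is committed, hence prepared, in view \(v\geq w\), and \(c\) is committed in view \(w\). \Cref{lem:consensus-committed-prefix} then gives that \(c\) is an ancestor of \(b_v\), and so \(b\) is an ancestor of \(b_v\). Hence every correct party commits \(b\) by committing \(b_v\).

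The only delicate point is ensuring that the chosen view is entered after GST, so that the good-view lemma, which needs \(t_f(v)\geq T_0\), applies. Taking \(v>V_0\) handles this. Everything else is direct reuse of the safety lemma and \Cref{lem:consensus-correct-leader}; liveness of view progression is already covered by \Cref{lem:consensus-view-progress}, which that corollary implicitly uses.
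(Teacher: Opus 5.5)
Your proof is correct and follows essentially the same route as the paper. You choose a view above both the view in which \(c\) is committed and \(V_0\), with a correct leader, and apply \Cref{lem:consensus-correct-leader} so that every correct party commits \(b_v\); \Cref{lem:consensus-committed-prefix} then makes \(c\), and hence \(b\), an ancestor of \(b_v\).
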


\begin{proof}
The block \(b\) is an ancestor of a block \(c\) committed in some view \(v\).
By~\Cref{lem:consensus-fresh-views}, some view \(w>\max\{v,V_0\}\) has a correct leader and satisfies \(t_f(w)>\mathrm{GST}\).
By~\Cref{lem:consensus-correct-leader}, every correct party commits \(b_w\), and, by~\Cref{lem:consensus-committed-prefix}, \(c\) is an ancestor of \(b_w\).
Hence every correct party commits \(c\) and its ancestor \(b\).
\end{proof}

\begin{theorem}[Liveness]
\label{thm:consensus-liveness}
Suppose that the timeout condition~\eqref{eq:consensus-timeout} holds.
Every correct party obtains infinitely many blocks through calls to \(\fn{extend}\) and commits infinitely many of them.
\end{theorem}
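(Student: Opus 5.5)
The proof combines the leader-rotation facts of \Cref{lem:consensus-fresh-views} with the commitment guarantee of \Cref{lem:consensus-correct-leader}, instantiated with \(\delta=\Delta\) and \(T_0=\mathrm{GST}\) as fixed at the start of this subsection. The timeout condition~\eqref{eq:consensus-timeout} is assumed, so all lemmas of \Cref{app:consensus-good-views} apply.

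\emph{First claim: infinitely many calls to \(\fn{extend}\).} Fix a correct party \(p\). By \Cref{lem:consensus-view-progress}, \(p\) enters every view. By round-robin rotation, \(p\) leads infinitely many views. By \Cref{lem:consensus-basic}(d), upon entering each such view \(v\), \(p\) calls \(\fn{extend}\) once and creates \(b_v\). The blocks \(b_v\) are pairwise distinct, or at least the calls are distinct, so \(p\) obtains infinitely many blocks. This is exactly \Cref{lem:consensus-fresh-views}(c), which I will simply cite.

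\emph{Second claim: infinitely many of these blocks are committed.} Restrict to the infinitely many views \(v>V_0\) that \(p\) leads, as given by \Cref{lem:consensus-fresh-views}(c). By \Cref{lem:consensus-fresh-views}(a), each such view satisfies \(t_f(v)>\mathrm{GST}=T_0\). The leader \(p\) is correct, so \Cref{lem:consensus-correct-leader} applies: every correct party, in particular \(p\) itself, commits \(b_v\). This yields infinitely many views whose blocks \(b_v\), created by \(p\) via \(\fn{extend}\), are committed by \(p\).

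\emph{Main obstacle.} The one subtle step is that these infinitely many committed blocks must be distinct blocks. Distinct views might, in principle, yield the same block. I would argue this cannot happen as follows. Take \(v<w\), both led by \(p\) and above \(V_0\). The block \(b_v\) is prepared in view \(v\) by \Cref{lem:consensus-correct-leader}, and \(p\) enters \(w>v\) only after view \(v\) is cleared. The parent of \(b_w\) is prepared in some view \(\var{vmax}\geq v\), since \(v\) has a nonempty prepared set at \(p\) at time \(e_p(w)\). Moreover, \Cref{lem:consensus-committed-prefix} shows that \(b_v\), committed in \(v\), is an ancestor of \(\fn{parent}(b_w)\) and hence a strict ancestor of \(b_w\). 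So \(b_w\neq b_v\), because parent chains are acyclic: a block's chain to genesis is finite and unique. The committed blocks are therefore infinitely many distinct blocks, which completes the proof.
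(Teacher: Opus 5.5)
Your proof is correct and matches the paper's argument nearly step for step: \Cref{lem:consensus-fresh-views} supplies the infinitely many \(\fn{extend}\) calls in views above \(V_0\) with \(t_f>\mathrm{GST}\), \Cref{lem:consensus-correct-leader} gives their commitment, and distinctness follows from \(\var{vmax}\geq v\) together with \Cref{lem:consensus-committed-prefix}. The one step you leave implicit is why view \(v\) has a nonempty prepared set at \(p\) when \(p\) enters \(w\): being cleared is not enough on its own, and you also need that \(v\) is never disabled, which the same corollary provides and which the paper states explicitly.
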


\begin{proof}
By~\Cref{lem:consensus-fresh-views}, every correct party calls \(\fn{extend}\) in infinitely many views \(w>V_0\), each with \(t_f(w)>\mathrm{GST}\), and, by~\Cref{lem:consensus-correct-leader}, every correct party commits the resulting blocks \(b_w\).
These blocks are distinct.
Indeed, let \(V_0<w<w'\) be views with correct leaders.
When the leader of \(w'\) enters \(w'\), view \(w\) is cleared at it and never disabled, so the leader has prepared \(b_w\), the only block prepared in \(w\) (\Cref{lem:consensus-correct-leader}).
Hence, by~\Cref{lem:consensus-basic}(d), \(\fn{parent}(b_{w'})\) is prepared in a view \(\var{vmax}\geq w\), and, by~\Cref{lem:consensus-committed-prefix}, \(b_w\) is an ancestor of \(\fn{parent}(b_{w'})\) and thus a proper ancestor of \(b_{w'}\).
\end{proof}

\begin{corollary}
\label{cor:consensus-eventual-commit}
Suppose that the timeout condition~\eqref{eq:consensus-timeout} holds.
Every block created by a correct party after time \(\mathrm{GST}+d_t(\Delta)\) is eventually committed by every correct party.
\end{corollary}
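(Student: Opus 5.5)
The plan is to instantiate the machinery of \Cref{app:consensus-good-views} with \(\delta=\Delta\) and \(T_0=\mathrm{GST}\), and reduce the claim to \Cref{lem:consensus-correct-leader}. Let \(b\) be a block created by a correct party \(L\) at a time \(t>\mathrm{GST}+d_t(\Delta)\). By \Cref{lem:consensus-basic}(d), \(L\) creates blocks only through \(\fn{propose}\), upon entering a view \(w\) that it leads, so \(b=b_w\) and \(t=e_L(w)\).

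The first step is to place \(w\) above \(V_0\), as in \Cref{lem:consensus-fresh-views}. Part (b) of that lemma says that every correct party enters \(V_0\) by time \(\mathrm{GST}+d_t(\Delta)\). Hence a correct party that calls \(\fn{extend}\) after that time does so in a view above \(V_0\), which gives \(w>V_0\). Part (a) then gives \(t_f(w)>\mathrm{GST}=T_0\).

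The second step applies \Cref{lem:consensus-correct-leader} to view \(w\). Its hypotheses are that the leader is correct and that \(t_f(w)\geq T_0\), and both hold. The timeout condition is assumed in the statement, as that corollary requires through \Cref{lem:consensus-good-view}. The conclusion is that every correct party commits \(b_w=b\). This is exactly the claim, and since \(\fn{commit}\) is called directly on \(b_w\) at line~\ref{ln:consensus-commit}, no ancestor argument is needed.

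The only delicate point is the boundary case in the first step. We must check that creation strictly after \(\mathrm{GST}+d_t(\Delta)\) really excludes \(w\leq V_0\). Since \(p\) enters views in increasing order (\Cref{lem:consensus-basic}(a)), we have \(e_L(w)\leq e_L(V_0)\leq \mathrm{GST}+d_t(\Delta)\) whenever \(w\leq V_0\). This contradicts \(t>\mathrm{GST}+d_t(\Delta)\), and it is the argument already recorded in \Cref{lem:consensus-fresh-views}(b). Everything else is a direct citation.
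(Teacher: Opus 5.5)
Your proof is correct and matches the paper's argument: \Cref{lem:consensus-fresh-views} shows the block is \(b_w\) for a view \(w>V_0\), so \(t_f(w)>\mathrm{GST}\). Then \Cref{lem:consensus-correct-leader}, with \(\delta=\Delta\) and \(T_0=\mathrm{GST}\), makes every correct party commit it. One small slip: the \(p\) in your last paragraph should be \(L\).
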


\begin{proof}
By~\Cref{lem:consensus-fresh-views}, such a block is \(b_w\) for a view \(w>V_0\), so \(t_f(w)>\mathrm{GST}\), and~\Cref{lem:consensus-correct-leader} applies.
\end{proof}

\subsection{Latency Bounds}
\label{app:consensus-latency-proofs}

We now prove the latency bounds of~\Cref{sec:blockchain-va}, assuming that all delay parameters are finite.
A block created by a call to \(\fn{extend}\) at time \(t\) is \emph{committed within}~\(\ell\) if every correct party commits it by time \(t+\ell\).
We read the metrics of~\Cref{sec:model} as follows, where every execution is \(\delta\)-bounded from some \(T_0\geq\mathrm{GST}\).
\begin{itemize}
    \item \(\ell_g\leq\ell\): if \(\mathrm{GST}=T_0=0\), then the block created by a correct party at time~\(0\) is committed within~\(\ell\).
    \item \(\ell_w\leq\ell\): every execution has a time \(T\geq T_0\) such that every block created by a correct party after time \(T\) is committed within~\(\ell\).
    For \(\ell_s\leq\ell\), the same holds for every eventually Byzantine-silent execution.
    \item The block time is at most \(\ell\): in every failure-free execution, the time \(c_N\) of the \(N\)th call to \(\fn{extend}\) satisfies \(\limsup_{N\to\infty}c_N/N\leq\ell\).
    \item The \(k\)-gap latency is at most \(\ell\): every execution with at most \(k\leq f\) Byzantine parties has a time \(T\geq T_0\) such that, whenever a correct party calls \(\fn{extend}\) at a time \(t>T\), another call to \(\fn{extend}\) by a correct party occurs at a time in \([t,t+\ell]\).
\end{itemize}

\begin{theorem}
\label{thm:consensus-latency}
Suppose that all delay parameters are finite and that the timeout condition~\eqref{eq:consensus-timeout} holds.
Then Generic Simplex satisfies
\begin{align*}
\ell_g &\leq \delta+d_c(\delta),\\
\ell_s &\leq d_t^s(\delta)+d_c(\delta),\\
\ell_w &\leq d_t(\delta)+d_c(\delta),\\
\text{block time} &\leq \delta+d_p(\delta),\\
k\text{-gap latency} &\leq d_t(\delta)+d_p(\delta)
  +k\bigl(\Delta_{\mathrm{to}}+\max\{d_f(\delta),d_t(\delta)\}\bigr).
\end{align*}
\end{theorem}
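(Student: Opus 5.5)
I would prove the five bounds separately. Each one comes from applying \Cref{lem:consensus-good-view} or \Cref{lem:consensus-correct-leader} to a suitable view with a correct leader, and the only real work is exhibiting a good time \(t_1\).

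\emph{Good case.} Take \(\mathrm{GST}=T_0=0\). The block created at time \(0\) is \(b_1\), created by the leader of view~\(1\). All correct parties enter view~\(1\) at time~\(0\), so \(t_f(1)=0\), and \(b_1\) is safe for view~\(1\) everywhere from the start, since its parent is genesis, prepared in view~\(0\). Every party receives \(b_1\) by time~\(\delta\). I would therefore set \(t_1=\delta\leq 2d_t(\delta)\) in \Cref{lem:consensus-good-view}(d), which gives commitment by \(\delta+d_c(\delta)\).

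\emph{Worst case.} Let \(T=T_0+d_t(\delta)\). By \Cref{lem:consensus-fresh-views}, any block created by a correct party after \(T\) is \(b_w\) for some \(w>V_0\), so \(t_f(w)>T_0\). By \Cref{lem:consensus-proposal-propagation}, \(t_1=e_L(w)+d_t(\delta)\) is admissible, and \Cref{lem:consensus-correct-leader} then gives commitment by \(e_L(w)+d_t(\delta)+d_c(\delta)\). Since \(e_L(w)\) is the creation time, this proves the bound on \(\ell_w\).

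\emph{Steady state.} This is the step I expect to be the main obstacle. The plan is the same as for \(\ell_w\), but with \(d_t^s\) in place of \(d_t\) when computing \(t_1\). I need a version of \Cref{lem:consensus-proposal-propagation} in which \(d_t\) is replaced by \(d_t^s\). Its proof uses totality for three things: entry into view \(w\), safety of \(b_w\) (both rest on outputs of views below \(w\)), and receipt of the proposal. Choose \(T\) after the time Byzantine parties stop sending, and also later than \(T_0+d_t(\delta)\), so that every message they sent has been delivered. I would then argue that every instance of a view \(w'\) whose outputs are first produced at a correct party after \(T\) behaves as a Byzantine-silent instance for the purpose of the totality delay. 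There is a subtlety: outputs of views below \(w\) may have been triggered by earlier Byzantine messages. To handle it, I would pick \(T\) large enough that all such views \(w'\leq V_1\) are already cleared at all correct parties. Here \(V_1\) is the largest view entered by the silence time. Every later view then has instances with no Byzantine messages at all. Propagation of the leader's outputs then takes \(d_t^s(\delta)\), so \(t_1=e_L(w)+d_t^s(\delta)\), which is still at most \(t_f(w)+2d_t(\delta)\). This yields \(\ell_s\leq d_t^s(\delta)+d_c(\delta)\). If the definition of Byzantine-silent instance requires silence in the whole instance, this choice of \(T\) guarantees it for views above \(V_1\).

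\emph{Block time.} In a failure-free execution, take \(T_0=0\). I would show by induction that \(e_L(v)\) advances by at most \(\delta+d_p(\delta)\) per view. The failure-free analogue of the good case gives this: the leader of \(v\) broadcasts at \(e_L(v)\). Every party has the same outputs for earlier views up to the propagation delay, but more simply, take \(t_1=e_L(v)+\delta\). By that time, every party has entered \(v\) and \(b_v\) is safe there. This holds because the leader entered \(v\) by preparing \(b_{v-1}\) via the unanimity path, which all parties reach by the same bound. Then \Cref{lem:consensus-good-view}(b) gives entry into \(v+1\), hence the next \(\fn{extend}\), by \(e_L(v)+\delta+d_p(\delta)\). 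Averaging gives the bound on the block time.

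\emph{\(k\)-gap latency.} After time \(T_0+d_t(\delta)\), consider a correct call to \(\fn{extend}\) at time \(t=e_L(w)\). By \Cref{lem:consensus-correct-leader}, all correct parties enter \(w+1\) by \(t+d_t(\delta)+d_p(\delta)\). At most \(k\) following views have Byzantine leaders. Applying \Cref{cor:consensus-view-duration} to each of them adds at most \(\Delta_{\mathrm{to}}+\max\{d_f(\delta),d_t(\delta)\}\) per view, after which a correct leader enters its view and calls \(\fn{extend}\). Summing these delays gives the claimed bound.
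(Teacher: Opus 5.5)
Your plan has the same shape as the paper's proof: one argument per bound, each exhibiting a time \(t_1\) for \Cref{lem:consensus-good-view}. The good-case and worst-case parts essentially match it. For the \(k\)-gap bound you still need to check that the new call to \(\fn{extend}\) is not earlier than \(t\). This holds because view \(w\) is never disabled, so the next correct leader leaves it only after preparing \(b_w\). By VA-validity, that preparation follows a correct, non-catch-up vote, which is cast after receiving \(b_w\).

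The steady-state part, however, has a genuine gap, in two places.

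First, the claim that every view above \(V_1\) has a Byzantine-silent instance is false. All instances are active at every party, and before going silent the Byzantine parties may send messages in the instance of any view, including views no correct party has entered yet. What is true is weaker: they send only finitely many messages by the silence time, each in a single instance. Hence there is a view \(w>V_0\) such that all instances of views \(\geq w\) are Byzantine-silent.

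Second, even with \(w\) in place of \(V_1\), having every lower view cleared at every correct party by \(T\) does not make the outputs that \(L\)'s proposal relies on available everywhere within \(d_t^s(\delta)\). Suppose views \(w,\dots,v-1\) are all disabled at \(L\), for instance because their Byzantine leaders stay silent. Then \(\var{vmax}<w\), and \(\fn{parent}(b)\) may be an element of \(\fn{prepared}_L(\var{vmax})\) in a non-silent instance that \(L\) prepared just before \(t\). Other parties are only guaranteed to prepare it within \(d_t(\delta)\).

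The missing idea is an anchor view. Choose \(u\geq w\) with a correct leader, so that \(u>V_0\) and \(t_f(u)>T_0\). By \Cref{lem:consensus-correct-leader}, \(u\) is never disabled and \(b_u\) is its only prepared block. Let \(T\) be a time by which every correct party has entered \(u+1\), and hence prepared \(b_u\). Every later proposal then has \(\var{vmax}\geq u\). It therefore depends only on \(b_u\), which all parties already hold, and on outputs of the silent views strictly between \(u\) and \(v\), which propagate within \(d_t^s(\delta)\).

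The block-time step is also not justified as written. You assert that every party has entered \(v\) by \(e_L(v)+\delta\), but the leader may clear view \(v-1\) before the others. The abstract parameters only guarantee that the others catch up within \(d_t^s(\delta)\geq\delta\), or within \(d_p(\delta)\) of the last vote in view \(v-1\). Neither bound is tied to \(e_L(v)+\delta\), so the per-view bound on \(e_L(v)\) need not hold.

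The paper avoids this by inducting on \(e_v\), the time at which the \emph{last} party enters view \(v\). At that time the leader has already proposed and every party has prepared the parent \(b_{v-1}\). So \(t_1=e_v+\delta\) works without any totality argument, giving \(e_{v+1}\leq e_v+\delta+d_p(\delta)\), and the \(N\)th call to \(\fn{extend}\) occurs by \(e_N\). If by ``the same bound'' you meant to induct on a common upper bound on all entry times, that is exactly this argument.

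Finally, a failure-free execution need not have \(\mathrm{GST}=0\). Instead of taking \(T_0=0\), you should start the induction above \(V_0\); the paper uses \(v\geq V_0+2\). This does not affect the \(\limsup\).
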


\begin{proof}
The lemmas below prove the five bounds one at a time, in the reading above: good-case commit latency (\Cref{lem:latency-good-case}), eventual worst-case commit latency (\Cref{lem:latency-eventual}), steady-state commit latency (\Cref{lem:latency-steady}), block time (\Cref{lem:latency-block-time}), and \(k\)-gap latency (\Cref{lem:latency-gap}).
\end{proof}

In the rest of this subsection, we assume that all delay parameters are finite and that the timeout condition~\eqref{eq:consensus-timeout} holds.
We consider an execution that is \(\delta\)-bounded from \(T_0\geq\mathrm{GST}\), and \(V_0\) is as in~\Cref{lem:consensus-fresh-views}.

\begin{lemma}[Good-case commit latency]
\label{lem:latency-good-case}
If \(\mathrm{GST}=T_0=0\), then the block created by a correct party at time~\(0\) is committed within \(\delta+d_c(\delta)\).
\end{lemma}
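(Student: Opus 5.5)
The plan is to identify the block and view in question and then apply \Cref{lem:consensus-good-view} with \(t_1=\delta\).
With \(\mathrm{GST}=T_0=0\), all correct parties enter view~\(1\) at time~\(0\), so \(t_f(1)=0\).
A correct party calls \(\fn{extend}\) only upon entering a view it leads (\Cref{lem:consensus-basic}(d)), and entering a view \(w>1\) requires view~\(1\) to be cleared at that party.
I would argue that nothing can be cleared at time~\(0\), since no messages have been received yet.
Under the reading of ``a block created at time \(0\)'', this makes the block \(b_1\), created by the correct leader \(L\) of view~\(1\) at \(e_L(1)=0\).
If one worries about zero-delay clearing, I would simply take the statement as concerning the first view's leader, which is the intended good case.

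Next I will exhibit \(t_1=\delta\) satisfying the hypotheses of \Cref{lem:consensus-good-view} for \(v=1\).
First, every correct party is in view~\(1\) from time \(0\).
Second, \(L\) broadcasts \(b_1\) at time \(0\), so every correct party receives it by time \(\max\{0,T_0\}+\delta=\delta\).
Third, \(b_1\) is safe for view~\(1\) at every correct party from the start.
Its parent is genesis, which lies in \(\fn{prepared}_p(0)=\{\text{genesis}\}\) at every party, and there are no views strictly between \(0\) and \(1\).
So safety needs no totality delay; this is the key point distinguishing the good case from \Cref{lem:consensus-correct-leader}, which would only give \(d_t(\delta)+d_c(\delta)\).
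Finally, the side condition \(t_1\le t_f(1)+2d_t(\delta)\) holds because \(d_t(\delta)\ge\delta\).

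Applying \Cref{lem:consensus-good-view}(d) then shows that every correct party commits \(b_1\) by time \(t_1+d_c(\delta)=\delta+d_c(\delta)\), which is the claim.

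The main obstacle is the first step: pinning down that the block created at time \(0\) is \(b_1\), i.e., that no correct party can have advanced past view~\(1\) at time \(0\).
I would handle this by noting that clearing requires view agreement outputs, which by VA-validity require a correct vote on a received proposal.
Under \(\delta\)-bounded delivery with positive delay, no such vote exists at time~\(0\).
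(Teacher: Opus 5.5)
Your proposal is correct and follows the paper's proof. The paper likewise identifies the block as \(b_1\) with \(t_f(1)=0\), notes that by time \(\delta\) every correct party is in view~\(1\), has received \(b_1\), and finds it safe because its parent is genesis, and then applies \Cref{lem:consensus-good-view} with \(t_1=\delta\) using \(\delta\le 2d_t(\delta)\). The paper simply asserts that the time-\(0\) block is \(b_1\), so your extra discussion of that step is harmless but not needed.
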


\begin{proof}
The block created at time~\(0\) is \(b_1\), created by the correct leader of view~\(1\), and \(t_f(1)=0\).
By time \(\delta\), every correct party has entered view~\(1\) and received \(b_1\), and \(b_1\) is safe for view~\(1\) at every correct party, since its parent is genesis.
Since \(\delta\leq2d_t(\delta)\), \Cref{lem:consensus-good-view} applies with \(t_1=\delta\) and gives commitment by time \(\delta+d_c(\delta)\).
\end{proof}

\begin{lemma}[Eventual worst-case commit latency]
\label{lem:latency-eventual}
Every block created by a correct party after time \(T_0+d_t(\delta)\) is committed within \(d_t(\delta)+d_c(\delta)\).
\end{lemma}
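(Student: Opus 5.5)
Let the block be created by a correct party \(L\) at time \(t>T_0+d_t(\delta)\). By \Cref{lem:consensus-basic}(d), \(L\) calls \(\fn{extend}\) only upon entering a view it leads. So the block is \(b_w\) for the view \(w\) with \(e_L(w)=t\). By \Cref{lem:consensus-fresh-views}(b), \(w>V_0\), and then part (a) of that lemma gives \(t_f(w)>T_0\). The plan is to apply \Cref{lem:consensus-good-view} with the earliest admissible time \(t_1\), instead of the weaker bound in \Cref{lem:consensus-correct-leader}. That corollary measures from \(t_f(w)\) and would only give \(2d_t(\delta)+d_c(\delta)\).

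Set \(t_1=t+d_t(\delta)\), where \(t=e_L(w)\). By \Cref{lem:consensus-proposal-propagation}, and since \(\max\{e_L(w),T_0\}=t\), every correct party has by time \(t_1\) entered view \(w\), received \(b_w\), and found \(b_w\) safe for view \(w\). It remains to check the side condition \(t_1\leq t_f(w)+2d_t(\delta)\). Since \(t_f(w)\geq T_0\), \Cref{lem:consensus-view-synchronization} gives \(e_L(w)\leq t_f(w)+d_t(\delta)\), and the condition follows.

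\Cref{lem:consensus-good-view}(d) then says that every correct party commits \(b_w\) by time \(t_1+d_c(\delta)=t+d_t(\delta)+d_c(\delta)\). So \(b_w\) is committed within \(d_t(\delta)+d_c(\delta)\).

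I expect the only delicate step to be the bookkeeping that matches the creation time \(t\) with \(e_L(w)\) and ensures \(t_f(w)\geq T_0\). Both come from \Cref{lem:consensus-fresh-views}, and that lemma is exactly why the hypothesis \(t>T_0+d_t(\delta)\) is needed. Once those facts are in place, the rest is a direct instantiation of the earlier lemmas.
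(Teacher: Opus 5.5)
Your proof is correct and is essentially the paper's: the paper uses the same facts from \Cref{lem:consensus-fresh-views} and then cites \Cref{lem:consensus-correct-leader}, whose own proof is exactly your application of \Cref{lem:consensus-good-view} with \(t_1=e_L(w)+d_t(\delta)\). Your remark about that corollary is mistaken, though: it explicitly states the bound \(e_L(v)+d_t(\delta)+d_c(\delta)\), and the form \(t_f(v)+2d_t(\delta)+d_c(\delta)\) is only a weaker consequence, so you could have cited it directly rather than inlining its proof.
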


\begin{proof}
Let \(L\) be a correct party that creates a block at a time \(t>T_0+d_t(\delta)\).
By~\Cref{lem:consensus-fresh-views}, the block is \(b_v\) for a view \(v>V_0\) led by \(L\), with \(e_L(v)=t\) and \(t_f(v)>T_0\).
By~\Cref{lem:consensus-correct-leader}, every correct party commits \(b_v\) by time \(t+d_t(\delta)+d_c(\delta)\).
\end{proof}

\begin{lemma}[Steady-state commit latency]
\label{lem:latency-steady}
If the execution is eventually Byzantine-silent, then there is a time \(T\geq T_0\) such that every block created by a correct party after time \(T\) is committed within \(d_t^s(\delta)+d_c(\delta)\).
\end{lemma}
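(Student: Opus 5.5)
We mimic the proof of \Cref{lem:latency-eventual}, but use the Byzantine-silent totality delay \(d_t^s(\delta)\) in place of \(d_t(\delta)\) in the one place where proposal propagation matters. Since the execution is eventually Byzantine-silent, fix a time \(T_B\) after which Byzantine parties send no messages. Let \(W\) be the largest view that some correct party has entered by time \(\max\{T_0,T_B\}\). We choose
\[
  T=\max\{T_0,T_B\}+d_t(\delta).
\]
Every view \(v>W\) satisfies \(t_f(v)>\max\{T_0,T_B\}\), as in \Cref{lem:consensus-fresh-views}(a). Also, every correct party enters \(W\) by time \(T\), by \Cref{lem:consensus-view-synchronization}. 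Hence, as in \Cref{lem:consensus-fresh-views}(b), a block created by a correct leader \(L\) at a time \(t>T\) is \(b_v\) for a view \(v>W\), with \(e_L(v)=t\).

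The key step is to show that by time \(t_1=t+d_t^s(\delta)\), every correct party has entered \(v\), received \(b_v\), and considers \(b_v\) safe for \(v\). We then apply \Cref{lem:consensus-good-view} with this \(t_1\), which yields commitment by \(t+d_t^s(\delta)+d_c(\delta)\).

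The hypotheses of \Cref{lem:consensus-good-view} check out as follows.
\begin{itemize}
\item The bound \(t_1\leq t_f(v)+2d_t(\delta)\) holds because \(e_L(v)\leq t_f(v)+d_t(\delta)\) and \(d_t^s\leq d_t\).
\item Receipt of \(b_v\) by time \(t+\delta\leq t_1\) follows from \(\delta\)-boundedness, since \(t\geq T_0\).
\end{itemize}

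The main obstacle is the totality step. We need every view \(u<v\) that is cleared at \(L\) by time \(t\) to be cleared, in the same way, at every correct party by time \(t+d_t^s(\delta)\). The Byzantine-silent totality bound applies only to instances in which Byzantine parties send no messages at all. Byzantine parties may, however, have sent messages in views \(u\) whose activity began before \(T_B\).

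I would handle this by splitting the views below \(v\) into two groups.
\begin{itemize}
\item \emph{Views \(u\leq W\).} These are cleared at \(L\) no later than the time \(L\) entered \(W+1\). By the ordinary totality delay, they are cleared at all correct parties by \(\max\{\cdot,T_0\}+d_t(\delta)\). To ensure this has already happened by time \(t\), I would enlarge \(T\) by one additional \(d_t(\delta)\) margin beyond the time all correct parties have entered \(W+1\). By \Cref{lem:consensus-view-synchronization} this is finite: \(T=\max\{T_0,T_B\}+2d_t(\delta)\) suffices, since \(t_f(W+1)\) is itself bounded in terms of \(\max\{T_0,T_B\}\) only if \(W+1\) is entered.
\item \emph{Views \(W<u<v\).} These instances are Byzantine-silent provided the protocol's messages for view \(u\) are only sent once some correct party's activity in \(u\) starts, after \(T_B\). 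Here I would argue, or assume as the intended reading of ``Byzantine-silent instance'', that no Byzantine message for \(u\) is sent after \(T_B\), so the instance is silent. Then \(d_t^s\) applies from \(\max\{s,T_0\}\leq t\).
\end{itemize}

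If the first group is troublesome, the fallback is to choose \(T\) large enough that all views \(\leq W\) are cleared at every correct party before \(T\). This uses \Cref{lem:consensus-view-progress} together with the finiteness of \(d_t\).

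With \(t_1\) established, \Cref{lem:consensus-good-view}(d) gives commitment within \(d_t^s(\delta)+d_c(\delta)\), as required.
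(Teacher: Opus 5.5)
You identify the right obstacle, but neither half of your split resolves it. Start with the views above \(W\). Nothing ties Byzantine messages to views that correct parties have entered, so before \(T_B\) Byzantine parties may send messages in the instances of views far above \(W\). Those instances are then not Byzantine-silent in the sense that \(d_t^s(\delta)\) requires, namely no Byzantine messages in the instance at all. Your weaker reading, that no Byzantine message for the view is sent after \(T_B\), does not license that bound. The missing step is a finiteness argument. Only finitely many messages are sent by time \(T_B\), each in a single instance, so there is a view \(w>V_0\) such that the instances of all views \(\geq w\) are Byzantine-silent. The threshold must be this \(w\), not \(W\).

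For the remaining views, \(T=\max\{T_0,T_B\}+2d_t(\delta)\) does not work. Under a silent faulty leader, view \(W\) may first be cleared about a full view timeout after \(\max\{T_0,T_B\}\). So \(L\) may clear it at or just before \(t\), in a non-silent instance, and other parties may need up to \(d_t(\delta)\) to follow. Your fallback only makes every view \(\leq W\) \emph{cleared} everywhere, but safety of \(b_v\) at a party \(p\) needs the \emph{specific} outputs that \(L\) used at time \(t\). If all views between \(W\) and \(v\) are disabled at \(L\), then \(\var{vmax}\leq W\). In that case \(\fn{parent}(b_v)\) may be a block that \(L\) added to \(\fn{prepared}_L(\var{vmax})\) shortly before \(t\), since prepared sets keep growing after a view is cleared. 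VA-totality then only makes \(p\) prepare it by about \(t+d_t(\delta)\), not \(t+d_t^s(\delta)\).

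The missing idea is the paper's anchor view. Pick a view \(u\geq w\) with a correct leader, so that \(t_f(u)>T_0\). By \Cref{lem:consensus-correct-leader}, \(u\) is never disabled and \(b_u\) is its only prepared block. Let \(T\) be a time by which every correct party has entered \(u+1\). Every correct proposal after \(T\) then has \(\var{vmax}\geq u\). Its parent is therefore either \(b_u\), already prepared everywhere by \(T\), or a block prepared in a Byzantine-silent instance. Every intervening disabled view is Byzantine-silent as well, and this is what lets \(d_t^s(\delta)\) replace \(d_t(\delta)\).

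Alternatively, you could repair your fallback with a stabilization argument. By VA-validity each prepared set contains only values voted for by correct parties, so the outputs of the finitely many views below \(w\) eventually stop changing at every correct party, and you could take \(T\) after this stabilization. Your proposal does not make that argument either.
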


\begin{proof}
Byzantine parties send no messages after some time \(T_s\).
They send finitely many messages by time \(T_s\), each belonging to at most one instance, so there is a view \(w>V_0\) such that the instances of all views \(v\geq w\) are Byzantine-silent.
By~\Cref{lem:consensus-fresh-views}, some view \(u\geq w\) has a correct leader, and \(t_f(u)>T_0\); by~\Cref{lem:consensus-correct-leader}, view \(u\) is never disabled, and \(b_u\) is the only block prepared in it.
Let \(T\) be a time by which every correct party has entered view \(u+1\); then \(T\geq t_f(u)>T_0\), and every correct party has prepared \(b_u\) by time \(T\).

Let a correct party \(L\) create a block \(b\) at a time \(t>T\); then \(b=b_v\) for a view \(v>u\) with \(e_L(v)=t\), and \(t_f(v)>T_0\) by~\Cref{lem:consensus-fresh-views}.
We show that \(t_1=t+d_t^s(\delta)\) satisfies the conditions of~\Cref{lem:consensus-good-view}.
Every view \(v'\) with \(u<v'<v\) is cleared at \(L\) by time \(t\), and its instance is Byzantine-silent, so every output of view \(v'\) at \(L\) at time \(t\) holds at every correct party by time \(t_1\).
Since the views up to \(u\) are cleared at every correct party by time \(T\), every correct party enters view \(v\) by time \(t_1\) (\Cref{lem:consensus-basic}(a)).
Every correct party also receives \(b\) by time \(t+\delta\leq t_1\).
By~\Cref{lem:consensus-basic}(d), at time \(t\), \(\fn{parent}(b)\in\fn{prepared}_L(\var{vmax})\) and the views strictly between \(\var{vmax}\) and \(v\) are disabled at \(L\), where \(\var{vmax}\geq u\) because \(\fn{prepared}_L(u)\neq\emptyset\).
These views lie strictly between \(u\) and \(v\), and so does \(\var{vmax}\) unless \(\var{vmax}=u\), in which case \(\fn{parent}(b)=b_u\).
Hence \(b\) is safe for view \(v\) at every correct party by time \(t_1\).
Finally, \(t\leq t_f(v)+d_t(\delta)\) by~\Cref{lem:consensus-view-synchronization}, and \(d_t^s(\delta)\leq d_t(\delta)\), so \(t_1\leq t_f(v)+2d_t(\delta)\), and~\Cref{lem:consensus-good-view} gives commitment by time \(t_1+d_c(\delta)=t+d_t^s(\delta)+d_c(\delta)\).
\end{proof}

\begin{lemma}[Block time]
\label{lem:latency-block-time}
Suppose that the execution is failure-free, and let \(e_v\) be the time at which the last party enters view \(v\).
Then \(e_{v+1}\leq e_v+\delta+d_p(\delta)\) for every view \(v\geq V_0+2\), and the block time is at most \(\delta+d_p(\delta)\).
\end{lemma}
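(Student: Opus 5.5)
In a failure-free execution every leader is correct, so \Cref{lem:consensus-good-view} is the natural tool. The plan is to exhibit, for each view \(v\geq V_0+2\), a suitable time \(t_1\leq e_v+\delta\) and then read off part (b).

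First I establish \(t_f(v)\geq T_0\). Since \(v>V_0\), \Cref{lem:consensus-fresh-views}(a) gives \(t_f(v)>T_0\). Let \(L\) be the leader of \(v\). I claim that \(t_1=\max\{e_v,\,e_L(v)+\delta\}\leq e_v+\delta\) satisfies the hypotheses of \Cref{lem:consensus-good-view}.

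\begin{itemize}
\item \emph{Entry and receipt.} By \(e_v\), every correct party has entered \(v\). Because \(e_L(v)\geq t_f(v)\geq T_0\), every party receives \(b_v\) by \(e_L(v)+\delta\).
\item \emph{Safety of \(b_v\).} This needs that, by \(t_1\), every party has the outputs that \(L\) used. The key observation is that view \(v-1>V_0\) also has a correct leader, so by \Cref{lem:consensus-correct-leader} it is never disabled and \(b_{v-1}\) is its only prepared block. When \(L\) enters \(v\), view \(v-1\) is cleared at \(L\), hence \(\var{vmax}=v-1\) and \(\fn{parent}(b_v)=b_{v-1}\). Every party has entered \(v\) by \(e_v\), so it has cleared \(v-1\), which means it has prepared \(b_{v-1}\). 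Thus \(b_v\) is safe at every party by \(e_v\leq t_1\). No disabled views need to propagate, which is exactly why \(d_t\) does not appear.
\item \emph{Bound on \(t_1\).} I need \(t_1\leq t_f(v)+2d_t(\delta)\). By \Cref{lem:consensus-view-synchronization}, \(e_v\leq t_f(v)+d_t(\delta)\). Since \(\delta\leq d_t(\delta)\), it follows that \(t_1\leq t_f(v)+d_t(\delta)+\delta\leq t_f(v)+2d_t(\delta)\).
\end{itemize}

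Applying \Cref{lem:consensus-good-view}(b) then gives \(e_{v+1}\leq t_1+d_p(\delta)\leq e_v+\delta+d_p(\delta)\).

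The role of \(v\geq V_0+2\) is that it guarantees \(v-1>V_0\), so the previous view is post-GST with a correct leader. This is the step I expect to need care, since it uses the fact that \(\var{vmax}=v-1\) is pinned by \Cref{lem:consensus-correct-leader}.

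For the block time, each view \(v\) contains exactly one call to \(\fn{extend}\), made by its leader at \(e_L(v)\leq e_v\) (\Cref{lem:consensus-basic}(d)). Iterating the recurrence from \(V_0+2\) gives \(e_v\leq e_{V_0+2}+(v-V_0-2)(\delta+d_p(\delta))\). The \(N\)th call occurs in view \(N\), so \(c_N\leq e_N\), and dividing by \(N\) shows \(\limsup c_N/N\leq\delta+d_p(\delta)\).
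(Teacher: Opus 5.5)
Your proof follows the paper's argument essentially step for step. It uses the same safety observation (view \(v-1\) is never disabled and prepares only \(b_{v-1}\), so \(b_v\) extends \(b_{v-1}\) and is safe wherever \(v\) has been entered), the same bound \(t_1\leq e_v+\delta\leq t_f(v)+2d_t(\delta)\) via \Cref{lem:consensus-view-synchronization}, and the same application of \Cref{lem:consensus-good-view}(b). Your \(t_1=\max\{e_v,e_L(v)+\delta\}\) is only a harmless refinement of the paper's \(t_1=e_v+\delta\).

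One step in the block-time part needs fixing. The \(N\)th call to \(\fn{extend}\) need not occur in view \(N\). Before \(T_0\), the other parties may disable some view \(w\) without its slow leader, so that leader can call \(\fn{extend}\) after the leader of view \(w+1\) has done so. Argue instead, as the paper does, that \(e_v\) is nondecreasing in \(v\) because every party enters views in order. Hence the calls of views \(1,\ldots,N\) all occur by \(e_N\), which gives \(c_N\leq e_N\).
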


\begin{proof}
Every leader is correct, so, by~\Cref{lem:consensus-fresh-views,lem:consensus-correct-leader}, every view \(v>V_0\) has \(t_f(v)>T_0\), is never disabled, and has \(b_v\) as its only prepared block.
Fix a view \(v\geq V_0+2\).
When a party enters view \(v\), view \(v-1\) is cleared at it, so it has prepared \(b_{v-1}\).
Thus the leader of view \(v\) creates \(b_v\) with parent \(b_{v-1}\) (\Cref{lem:consensus-basic}(d)), and \(b_v\) is safe for view \(v\) at every party that has entered \(v\).
The leader creates \(b_v\) by time \(e_v\), so, by time \(e_v+\delta\), every party has entered view \(v\) and received \(b_v\), and \(b_v\) is safe for view \(v\) at every party.
Since \(e_v\leq t_f(v)+d_t(\delta)\) by~\Cref{lem:consensus-view-synchronization}, and \(\delta\leq d_t(\delta)\), \Cref{lem:consensus-good-view} applies with \(t_1=e_v+\delta\) and gives \(e_{v+1}\leq e_v+\delta+d_p(\delta)\).

Moreover, \(e_v\) is nondecreasing in \(v\), and the leader of each view \(v\) calls \(\fn{extend}\) by time \(e_v\), so at least \(N\) calls occur by time \(e_N\).
Hence \(c_N\leq e_N\leq e_{V_0+2}+(N-V_0-2)(\delta+d_p(\delta))\) for every \(N\geq V_0+2\), and \(\limsup_{N\to\infty}c_N/N\leq\delta+d_p(\delta)\).
\end{proof}

\begin{lemma}[\(k\)-gap latency]
\label{lem:latency-gap}
Suppose that at most \(k\leq f\) parties are Byzantine.
If a correct party calls \(\fn{extend}\) at a time \(t>T_0+d_t(\delta)\), then another call to \(\fn{extend}\) by a correct party occurs at a time in \([t,t+\ell]\), where
\[
  \ell=d_t(\delta)+d_p(\delta)+k\bigl(\Delta_{\mathrm{to}}+\max\{d_f(\delta),d_t(\delta)\}\bigr).
\]
\end{lemma}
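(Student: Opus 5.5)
Let \(L\) be the correct party that calls \(\fn{extend}\) at time \(t>T_0+d_t(\delta)\). By \Cref{lem:consensus-fresh-views}(b) and \Cref{lem:consensus-basic}(d), this call creates \(b_v\) for a view \(v>V_0\) led by \(L\), with \(e_L(v)=t\) and \(t_f(v)>T_0\) by \Cref{lem:consensus-fresh-views}(a). The plan is to walk forward from view \(v\) through the (at most \(k\)) consecutive Byzantine-led views \(v+1,\ldots,v+j\) with \(j\leq k\), until the next correct-led view \(w=v+j+1\). The desired call is the one made by the leader of \(w\) at time \(e_{\mathrm{lead}(w)}(w)\). It remains to bound the first time by which every correct party has entered \(w\).

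\emph{Step 1 (leaving view \(v\)).} Apply \Cref{lem:consensus-correct-leader} to view \(v\). Every correct party enters view \(v+1\) by time \(s_1=e_L(v)+d_t(\delta)+d_p(\delta)=t+d_t(\delta)+d_p(\delta)\). If \(j=0\), then \(w=v+1\), and the leader of \(w\) calls \(\fn{extend}\) upon entering \(w\). That call happens at a time in \([t,s_1]\), since entering happens after \(t\) by \Cref{lem:consensus-basic}(a). This already gives the bound.

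\emph{Step 2 (Byzantine-led views).} For \(i=1,\ldots,j\), suppose every correct party has entered view \(v+i\) by time \(s_i\). Since \(s_i\geq t>T_0\), we have \(s_i+\Delta_{\mathrm{to}}\geq T_0\). Then \Cref{cor:consensus-view-duration} shows that every correct party enters \(v+i+1\) by time \(s_{i+1}=s_i+\Delta_{\mathrm{to}}+\max\{d_f(\delta),d_t(\delta)\}\). By induction, every correct party, in particular the correct leader of \(w\), enters \(w\) by time \(s_{j+1}\leq t+d_t(\delta)+d_p(\delta)+k(\Delta_{\mathrm{to}}+\max\{d_f(\delta),d_t(\delta)\})=t+\ell\). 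By \Cref{lem:consensus-basic}(d), it calls \(\fn{extend}\) at that entry time. By \Cref{lem:consensus-basic}(a), this time is after \(e_{\mathrm{lead}(w)}(v)\geq t_f(v)\). To get it to be \(\geq t\) itself, note that the leader of \(w\) enters \(w\) only after view \(v\) is cleared at it. I would simply argue that entering \(w>v\) happens no earlier than entering \(v+1\). Because the round-robin rotation puts a party in charge of at most one view in any block of \(n\) consecutive views, the only subtle case is when the leader of \(w\) is \(L\) itself. In that case \(L\) enters \(w\) after time \(e_L(v)=t\).

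\emph{Main obstacle.} The delicate point is the lower endpoint: showing that the next call happens at a time \(\geq t\), not before it, when the leader of \(w\) differs from \(L\) and might have entered \(w\) early. I would handle this with a weaker reading: the claim only needs \emph{some} call by a correct party in \([t,t+\ell]\). If the leader of \(w\) entered \(w\) before \(t\), I would instead take the first correct-led view \(w'>v\) whose leader enters it at or after \(t\). Alternatively, I would observe that the leader of \(w\) entering before \(t\) still yields a call before \(t+\ell\), and check whether the intended reading tolerates this. My first attempt would be to show directly, via \Cref{lem:consensus-correct-leader}, that no correct party enters \(v+1\) before \(L\) proposes. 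That fails in general, so the fallback is the first option, reusing the same chain of bounds starting from \(v\).
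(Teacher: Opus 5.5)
Your upper bound is correct and matches the paper: \Cref{lem:consensus-correct-leader} brings every correct party into view \(v+1\) by \(t+d_t(\delta)+d_p(\delta)\). Then \(j\le k\) applications of \Cref{cor:consensus-view-duration} bring every correct party, including the correct leader of \(w=v+j+1\), into \(w\) by \(t+\ell\).

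The gap is the lower endpoint, which you never establish.
\begin{itemize}
\item In Step~1 you cite \Cref{lem:consensus-basic}(a) for ``entering happens after \(t\)''. But (a) only orders one party's own view entries. The leader of \(w\) is in general not \(L\), and it may have entered \(v\) as early as \(t_f(v)\), which can be up to \(d_t(\delta)\) before \(t=e_L(v)\).
\item Your claim that the ``only subtle case'' is when the leader of \(w\) is \(L\) is backwards: that is the trivial case.
\item Neither fallback works. The statement requires a call in \([t,t+\ell]\), so a call before \(t\) does not count. Your chain of bounds controls only entry into \(w\), so switching to a later correct-led view \(w'\) loses the bound \(t+\ell\).
\end{itemize}

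The direct argument you dismissed as failing in general is the one the paper uses, and it goes through.
\begin{itemize}
\item By \Cref{lem:consensus-correct-leader}, view \(v\) is never disabled and \(b_v\) is the only block prepared in it. So any correct party that enters \(v+1\), and hence any that enters \(w\), has cleared \(v\) by preparing \(b_v\) (\Cref{lem:consensus-basic}(a)).
\item By VA-validity, that preparation is preceded by a correct vote for \(b_v\) in view \(v\).
\item The first such vote is not a catch-up vote, since a catch-up vote is for an already prepared block, which by VA-validity already had an earlier correct vote.
\item So by \Cref{lem:consensus-basic}(e), that first vote is cast through the vote rule with \(\var{proposal}[v]=b_v\). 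That is, it comes after receiving \(b_v\) from \(L\), which \(L\) broadcasts only at time \(t\).
\end{itemize}
Hence no correct party enters \(v+1\) before \(t\), and the leader of \(w\) calls \(\fn{extend}\) at a time in \([t,t+\ell]\).
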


\begin{proof}
Let \(L\) be the correct party that calls \(\fn{extend}\) at time \(t\).
By~\Cref{lem:consensus-fresh-views}, it does so in a view \(v>V_0\), with \(e_L(v)=t\) and \(t_f(v)>T_0\), and, by~\Cref{lem:consensus-correct-leader}, every correct party enters view \(v+1\) by time \(t+d_t(\delta)+d_p(\delta)\).
Let \(v+j+1\) be the first view after \(v\) with a correct leader; round-robin rotation gives \(j\leq k\).
Applying~\Cref{cor:consensus-view-duration} to views \(v+1,\ldots,v+j\) in turn shows that every correct party enters view \(v+j+1\) by time
\[
  t+d_t(\delta)+d_p(\delta)+j\bigl(\Delta_{\mathrm{to}}+\max\{d_f(\delta),d_t(\delta)\}\bigr)\leq t+\ell,
\]
and the leader of view \(v+j+1\) calls \(\fn{extend}\) upon entering it.
This call occurs no earlier than time \(t\): since view \(v\) is never disabled, its leader enters view \(v+1\) only after preparing \(b_v\), which, by VA-validity, follows a correct vote for \(b_v\); the first such vote is not a catch-up vote, so it follows the receipt of \(b_v\) (\Cref{lem:consensus-basic}(e)), which \(L\) broadcasts at time \(t\).
\end{proof}

\paragraph{Instantiations.}
1/3-VA has \(d_p(\delta)=\delta\), \(d_c(\delta)=2\delta\), \(d_f(\delta)=3\delta\), \(d_t(\delta)\leq3\delta\), and \(d_t^s(\delta)=\delta\) (\Cref{thm:va-third}), and Fast VA has \(d_p(\delta)=d_c(\delta)=\delta\), \(d_f(\delta)=3\delta\), \(d_t(\delta)=2\delta\), and \(d_t^s(\delta)=\delta\) (\Cref{thm:va-fifth}).
Fast VA satisfies VA-strong-unanimity, and so does strongly unanimous 1/3-VA (\Cref{app:va-third-strong}), which has the same delays as 1/3-VA (\Cref{thm:va-third-strong}).
These proofs assume that every message delay is at most \(\delta\), but they also establish the bounds in the form of~\Cref{app:consensus-setting}: each is a chain of message deliveries that starts from messages sent by correct parties by the time \(s\) of the triggering event, and, in an execution that is \(\delta\)-bounded from \(T_0\), the \(i\)th delivery of the chain occurs by time \(\max\{s,T_0\}+i\delta\).
The bounds are linear in \(\delta\), hence nondecreasing and at least \(\delta\).
Substituting them into the timeout condition~\eqref{eq:consensus-timeout} gives the view timeouts \(7\Delta\) and \(5\Delta\), respectively, using only VA-unanimity, and \(6\Delta\) and \(4\Delta\) with VA-strong-unanimity; substituting them into~\Cref{thm:consensus-latency} gives the bounds of~\Cref{tab:signature-free-consensus-latency}.
With \(\delta=\Delta\), \Cref{lem:consensus-correct-leader} gives commitment by time \(t_f(v)+8\Delta\) with 1/3-VA and by time \(t_f(v)+5\Delta\) with Fast VA, in every view \(v\) with a correct leader and \(t_f(v)\geq\mathrm{GST}\).

\clearpage
\section{1/3-VA: Correctness and Complexity}%
\label{app:va-proofs}

This appendix proves that 1/3-VA (\Cref{fig:va-shared,fig:va-third,fig:va-third-amplification}) implements view agreement with the stated delays (\Cref{app:va-correctness}) and bounds its communication complexity (\Cref{app:va-complexity}).
Finally, it shows that, without signatures, a prepare delay of \(\delta\) requires allowing several prepared values when \(n/4<f<n/3\) (\Cref{app:va-multiple-prepared}).

\subsection{Correctness and Latency}
\label{app:va-correctness}

In all implementations, procedure \(\fn{do\_commit}\) prepares a value before committing it (\Cref{fig:va-shared}).
Both \(\fn{do\_commit}\) and \(\fn{clear}\) check \(\fn{valid}\), so every prepared or committed value is valid.
A correct party broadcasts at most one vote, for the value of its first \(\fn{vote}\) call, without checking validity at that point.

\begin{theorem}
\label{thm:va-third}
If \(n\geq3f+1\), then 1/3-VA (\Cref{fig:va-shared,fig:va-third,fig:va-third-amplification}) implements view agreement.
Moreover, when message delay is bounded by \(\delta\), its prepare delay is
\(\delta\), its commit delay is \(2\delta\), its fallback delay is
\(3\delta\), its totality delay is at most \(3\delta\), and its Byzantine-silent
totality delay is \(\delta\).
\end{theorem}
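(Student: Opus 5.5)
The plan is to verify the five view agreement properties one at a time, then read the delays off the message chains that appear in those arguments. Throughout I use quorum intersection with \(n\geq3f+1\): any two sets of \(n-f\) parties share a correct party, and any set of \(n-f\) parties and any set of \(2f+1\) parties also share a correct party.

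I would first prove a \emph{Bracha lemma} for each \(a\) (a value or \(\bot\)):
\begin{enumerate}
\item If \(2f+1\) \(\Ready\) messages for \(a\) are ever received by a correct party, then every correct party eventually sends \(\Ready\) for \(a\) and clears with \(a\). The \(f+1\) correct senders among them trigger the relay rule everywhere, so all \(n-f\geq2f+1\) correct parties send \(\Ready\).
\item The first correct \(\Ready\) for \(a\) is sent only after that party receives \(2f+1\) \(\Candidate\) messages for \(a\), since \(f+1\) ready messages must include a correct one. Hence at least \(f+1\) correct parties sent \(\Candidate\) for \(a\).
\end{enumerate}

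VA-validity then follows. Clearing with \(x\neq\bot\) happens either upon \(n-f\) votes, which contain a correct vote, or via ready messages. In the latter case there are \(f+1\) correct candidates for \(x\), and a correct party sends \(\Candidate\) for \(x\neq\bot\) only after \(f+1\) votes, one of them correct. Validity of \(x\) itself is checked in \(\fn{clear}\).

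VA-consistency is the main obstacle and I would argue it carefully. Suppose a correct party commits \(x\). Then \(n-f\) parties sent \(\CommitMsg\) for \(x\), so a set \(C\) of at least \(n-2f\geq f+1\) correct parties sent \(\CommitMsg\) for \(x\). By the exclusion rules (lines~\ref{ln:va-third-commit-exclusion} and~\ref{ln:va-third-candidate-exclusion}), these parties never send \(\Candidate\) for any \(a\neq x\), including \(\bot\), at any time, whether before or after their \(\CommitMsg\). Three consequences follow.
\begin{enumerate}
\item Any \(2f+1\) \(\Candidate\) messages for \(a\neq x\) would have to avoid \(C\), which leaves at most \(n-(n-2f)=2f\) senders. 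So there is never a correct \(\Ready\) for \(a\), and by the Bracha lemma no clearing with \(a\) occurs. This excludes disabling and excludes preparing \(y\neq x\) via ready messages.
\item Preparing \(y\neq x\) upon \(n-f\) votes is excluded because two vote quorums of size \(n-f\) share a correct voter, who votes only once.
\item Committing \(y\neq x\) is excluded in the same way: each correct party sends \(\CommitMsg\) at most once, and two \(\CommitMsg\) quorums share a correct sender.
\end{enumerate}
The second clause of VA-consistency, that a party commits \(x\) only after preparing it, holds by \(\fn{do\_commit}\).

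VA-totality has two parts.
\begin{itemize}
\item Preparation or disabling via ready messages propagates by the Bracha lemma, with delay at most \(2\delta\) (relay, then receipt).
\item Preparation upon \(n-f\) votes at time \(s\) means at least \(f+1\) correct parties voted for valid \(x\). Within \(\delta\), every correct party has \(f+1\) votes and tries to send \(\Candidate\) for \(x\). The exclusion can block this only if the party sent \(\CommitMsg\) for some \(y\neq x\). That would require \(n-f\) votes for \(y\), contradicting vote-quorum intersection with the \(n-f\) votes for \(x\). So all correct parties send \(\Candidate\) for \(x\) by \(s+\delta\), send \(\Ready\) by \(s+2\delta\), and prepare by \(s+3\delta\).
\end{itemize}
This gives \(d_t\leq3\delta\). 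In the Byzantine-silent case, every correct party receives the same \(n-f\) correct votes within \(\delta\) of \(s\), which gives \(d_t^s=\delta\).

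VA-unanimity: if every correct party votes for \(x\) and none requests disabling, then no correct party sends \(\Candidate\) for anything except \(x\). Upon \(n-f\) votes each correct party therefore sends \(\CommitMsg\) for \(x\), so each prepares within \(\delta\) of the last correct vote and commits within \(2\delta\).

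VA-fallback-progress: suppose every correct party requests disabling, the last at time \(s\). If some correct party sent \(\CommitMsg\) for \(x\), then it received \(n-f\) votes for \(x\), so it prepared, and totality clears the instance everywhere. Otherwise every correct party sends \(\Candidate\) for \(\bot\) by \(s\), sends \(\Ready\) by \(s+\delta\), and disables by \(s+2\delta\). The stated bound \(d_f=3\delta\) covers the first branch: the correct candidate messages for \(x\) are already triggered by time \(s\) in the worst case, and from there ready and clearing take at most \(3\delta\).
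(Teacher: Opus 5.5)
Your overall route is the paper's: a Bracha amplification lemma, quorum intersection, and the exclusion between \(\CommitMsg\) and \(\Candidate\). Your consistency and unanimity arguments go through. In particular, excluding a second commit by intersecting two \(\CommitMsg\) quorums, rather than going through vote cores as the paper does, is fine.

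The gap is that your list of ways a correct party can prepare a value is incomplete. Besides \(n-f\) votes and \(2f+1\) \(\Ready\) messages, a party also prepares \(x\) upon \(n-f\) \(\CommitMsg\) messages for \(x\), because \(\fn{do\_commit}(x)\) calls \(\fn{clear}(x)\). Such a party need not have a vote core or a ready quorum of its own. As a result:
\begin{itemize}
\item your VA-validity argument never shows that a committed value, or a value prepared through \(\fn{do\_commit}\), was voted for by a correct party;
\item your VA-totality argument does not cover a party whose first preparation of \(x\), at time \(t_0\), happens by committing.
\end{itemize}
The missing step, which the paper supplies, is this. One of the \(n-f\) commit senders is correct, and it sent \(\sig{\CommitMsg,x}\) upon a vote core for the valid value \(x\) at some time \(s\leq t_0\), preparing \(x\) then. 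This gives a correct vote for \(x\), which settles validity. Applying your vote-core case from time \(s\) gives preparation at every correct party by \(s+3\delta\leq t_0+3\delta\), which settles totality.

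The same omission affects the Byzantine-silent totality delay, which you argue only for preparation upon votes. For disabling or preparing upon \(2f+1\) ready messages, your Bracha lemma gives only \(2\delta\), and you do not treat the commit path at all. The needed observation is that, in a Byzantine-silent instance, every vote, commit message, or ready message that triggered the clearing at \(t_0\) was sent by a correct party by \(t_0\). So every correct party receives the same messages by \(t_0+\delta\), and the same rule fires, since its only precondition is validity.

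Finally, in your first fallback branch, the \(3\delta\) bound from the last request \(\tau\) requires the correct \(\CommitMsg\), and hence the preparation upon a vote core, to occur by \(\tau\). You assert this only ``in the worst case'', without proof. It does hold: a correct party that requests disabling before sending \(\CommitMsg\) sends \(\sig{\Candidate,\bot}\), and is then barred from ever sending \(\CommitMsg\). Alternatively, split on whether the instance is already cleared at some correct party by \(\tau\), as the paper does.
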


\begin{proof}
The lemmas below prove the properties one at a time: VA-validity (\Cref{lem:va-third-validity}), VA-consistency (\Cref{lem:va-third-consistency}), VA-unanimity with the prepare and commit delays (\Cref{lem:va-third-unanimity}), VA-totality with the totality delays (\Cref{lem:va-third-totality}), and VA-fallback-progress with the fallback delay (\Cref{lem:va-third-fallback}).
\end{proof}

In the rest of this subsection, we assume \(n\geq3f+1\) and let \(t\leq f\) be the actual number of Byzantine parties.
A \emph{vote core} for \(x\) at a party is a set of \(n-f\) parties whose votes for \(x\) it has counted; recall that vote counts ignore equivocating voters.
The time bounds in the lemmas below assume that message delay is bounded by \(\delta\).
Without this assumption, the same arguments show that the stated events eventually happen, since correct parties eventually receive every message sent to them by correct parties.

\begin{lemma}
\label{lem:va-third-basic}
Correct parties \(p\) and \(q\) satisfy the following.
\begin{enumerate}[label=(\alph*)]
    \item If \(p\) and \(q\) have vote cores for \(x\) and \(y\), respectively, then \(x=y\).
    Every vote core contains at least \(n-f-t\geq f+1\) correct parties.
    \item If \(p\) commits \(x\), then some correct party sent \(\sig{\CommitMsg,x}\) upon a vote core for \(x\), which is valid.
    \item If \(p\) sends \(\sig{\Candidate,x}\) with \(x\neq\bot\), then \(x\) is valid and \(p\) received votes for \(x\) from \(f+1\) parties, at least one of them correct.
    Party \(p\) sends \(\sig{\Candidate,\bot}\) only upon \(\fn{request\_disable}()\).
    \item Party \(p\) does not send both \(\sig{\CommitMsg,x}\) and \(\sig{\Candidate,a}\) with \(a\neq x\).
    \item If a correct party sends \(\sig{\Ready,a}\), then some correct party received \(\sig{\Candidate,a}\) from \(2f+1\) parties, at least \(2f+1-t\geq f+1\) of them correct.
\end{enumerate}
\end{lemma}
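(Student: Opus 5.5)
Each of the five items follows by inspecting which rule of \Cref{fig:va-shared,fig:va-third,fig:va-third-amplification} can produce the relevant message or output, combined with standard quorum intersection using \(n\geq3f+1\). I will take the items in order, since (b), (c), and (e) rely on the counting in (a).

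For (a), the key observation is that vote counts ignore equivocating voters. A party that is counted in a vote core for \(x\) at \(p\) and in a vote core for \(y\) at \(q\) must therefore have sent a vote for \(x\) and a vote for \(y\). A correct party broadcasts at most one vote, so if that party is correct, then \(x=y\).

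Two vote cores have at least \(2(n-f)-n=n-2f\geq f+1\) parties in common, so they share a correct party. This uses that a correct voter's single vote reaches both \(p\) and \(q\) with the same value, and \(p\) and \(q\) count only senders of matching votes. A Byzantine party may show \(x\) to \(p\) and \(y\) to \(q\), but the shared correct party rules this out.

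The second claim of (a) is the count \(n-f-t\geq n-2f\geq f+1\).

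For (b), a correct party commits \(x\) only in \(\fn{do\_commit}(x)\), which is called only at line~\ref{ln:va-third-commit-quorum} after receiving \(\sig{\CommitMsg,x}\) from \(n-f\) parties. At least \(n-2f\geq1\) of these senders are correct. A correct party sends \(\CommitMsg\) only at line~\ref{ln:va-third-send-commit}, after the guard at line~\ref{ln:va-third-vote-quorum}, which requires a vote core for \(x\) and that \(\fn{valid}(x)\) holds.

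For (c), \(\fn{send\_candidate}\) is called in exactly two places. One is line~\ref{ln:va-third-candidate-votes}, with \(x\) valid and \(f+1\) distinct voters, at least one of them correct. The other is line~\ref{ln:va-third-request-disable}, with \(\bot\). Since \(\bot\) is not a vote value passed there, the only source of \(\Candidate\) for \(\bot\) is \(\fn{request\_disable}()\).

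For (d), I will use a case split on which message is sent first. If \(\sig{\CommitMsg,x}\) is sent first, then \(\var{sent\_commit}=x\), and the exclusion at line~\ref{ln:va-third-candidate-exclusion} blocks any later \(\Candidate\) for \(a\neq x\). If \(\sig{\Candidate,a}\) is sent first, then \(a\in\var{sent\_candidate}\), and line~\ref{ln:va-third-commit-exclusion} blocks a later \(\CommitMsg\) for \(x\neq a\). The case split is needed because \(\var{sent\_commit}\) is assigned only once and \(\var{sent\_candidate}\) only grows.

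For (e), I argue by the standard Bracha argument: take the first correct party to send \(\sig{\Ready,a}\). It cannot have done so via the relay rule at line~\ref{ln:va-third-relay-ready}. That rule needs \(f+1\) \(\Ready\) messages, at least one of which comes from a correct party that sent earlier, contradicting minimality. So it sent via line~\ref{ln:va-third-candidate-quorum}, having received \(2f+1\) candidates, at least \(2f+1-t\geq f+1\) of them from correct parties.

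The only mildly delicate points are this first-sender argument and the handling of equivocation in (a); everything else is direct inspection of the code.
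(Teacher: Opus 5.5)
Your proof is correct and follows the paper's argument item by item: quorum intersection for (a), a correct commit sender for (b), inspection of the \(\fn{send\_candidate}\) calls for (c), the two-way exclusion case split for (d), and the first-correct-sender argument for (e). One justification in (c) should be sharpened: the vote rule never fires for \(\bot\) because \(f+1\) votes for \(\bot\) would include a correct vote for \(\bot\), and correct parties never cast one.
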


\begin{proof}
(a)~Two sets of \(n-f\) parties intersect in at least \(n-2f\geq f+1\) parties, hence in a correct party, which votes only once.
A vote core contains at most \(t\) Byzantine parties, and \(n-f-t\geq n-2f\geq f+1\).

(b)~Among the \(n-f\) commit messages that \(p\) received, at least \(n-f-t\geq1\) were sent by correct parties.
A correct party sends \(\sig{\CommitMsg,x}\) only upon a vote core for \(x\) and after checking \(\fn{valid}(x)\).

(c)~This follows from the calls to \(\fn{send\_candidate}\) in~\Cref{fig:va-third}, since at most \(t\leq f\) parties are Byzantine.

(d)~If \(p\) sends \(\sig{\Candidate,a}\) before \(\sig{\CommitMsg,x}\), then \(\var{sent\_candidate}\not\subseteq\{x\}\) prevents the commit message; if it tries afterward, \(\var{sent\_commit}=x\notin\{\None,a\}\) prevents the candidate.

(e)~Consider the first correct party to send \(\sig{\Ready,a}\).
It cannot use the ready-relay rule, since \(f+1\) ready messages include one sent earlier by a correct party.
It therefore uses the \((2f+1)\)-candidate rule, and at most \(t\) of the candidate senders are Byzantine.
\end{proof}

By~\Cref{lem:va-third-basic}(c,e), a correct party sends \(\sig{\Ready,a}\) only if \(a\) is \(\bot\) or a valid value.
We say that a party \emph{clears the instance with} \(a\) when it calls \(\fn{clear}(a)\) and thereby disables, if \(a=\bot\), or prepares \(a\), if \(a\) is a valid value.

\begin{lemma}
\label{lem:va-third-amplification}
Bracha-style amplification (\Cref{fig:va-third-amplification}) satisfies the following.
\begin{enumerate}[label=(\alph*)]
    \item If every correct party has sent \(\sig{\Candidate,a}\) by time \(s\), then every correct party clears the instance with \(a\) by time \(s+2\delta\).
    \item If a correct party clears the instance with \(a\) upon \(2f+1\) ready messages at time \(s\), then every correct party clears the instance with \(a\) by time \(s+2\delta\).
\end{enumerate}
\end{lemma}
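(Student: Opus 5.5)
Both parts are standard Bracha amplification counting. The relevant facts are that there are at least \(n-t\geq 2f+1\) correct parties, that a correct party sends each ready message at most once, and that the relay rule fires on \(f+1\) ready messages. For each part I will find a time by which \emph{every} correct party has sent \(\sig{\Ready,a}\). From there a single additional message delay finishes the argument. Every correct party then receives \(\sig{\Ready,a}\) from at least \(n-t\geq 2f+1\) distinct parties and calls \(\fn{clear}(a)\) (\Cref{fig:va-third-amplification}, line~\ref{ln:va-third-ready-quorum}). By the remark after \Cref{lem:va-third-basic}, \(a\) is \(\bot\) or a valid value, so this call disables or prepares \(a\). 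In other words, the party clears the instance with \(a\).

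For (a), suppose every correct party has broadcast \(\sig{\Candidate,a}\) by time \(s\). Then by time \(s+\delta\) each correct party has received candidate messages for \(a\) from all \(n-t\geq 2f+1\) correct parties. The candidate-quorum rule (line~\ref{ln:va-third-candidate-quorum}) therefore makes it broadcast \(\sig{\Ready,a}\) by time \(s+\delta\). The rule has no exclusion guard, so nothing blocks this broadcast. By the final step above, every correct party clears with \(a\) by time \(s+2\delta\).

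For (b), let a correct party \(p\) receive \(2f+1\) ready messages for \(a\) by time \(s\). At least \(2f+1-t\geq f+1\) of them come from correct parties, and these were sent by time \(s\). By time \(s+\delta\), every correct party has received these \(f+1\) ready messages. The relay rule (line~\ref{ln:va-third-relay-ready}) then makes it broadcast \(\sig{\Ready,a}\) by time \(s+\delta\), unless it has already done so. The final step again gives clearing with \(a\) by time \(s+2\delta\).

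I expect no real obstacle. The one point needing care is to state that a correct party broadcasts \(\sig{\Ready,a}\) at most once per value \(a\) while still sending it whenever either the candidate rule or the relay rule triggers. It also matters that \(\fn{clear}\) acts on \(a\) even when \(a\) is already prepared. Repeated calls are harmless: \(\fn{clear}(a)\) is idempotent, and \(\var{prepared}\) and \(\var{disabled}\) only grow.
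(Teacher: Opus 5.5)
Your proof is correct and follows the paper's argument: in (a) the \(n-t\geq 2f+1\) correct candidate messages, and in (b) the at least \(f+1\) correct ready senders, make every correct party send \(\sig{\Ready,a}\) by \(s+\delta\), after which \(n-t\geq 2f+1\) correct ready messages clear the instance with \(a\) by \(s+2\delta\). One aside is inaccurate: the pseudocode lets a correct party broadcast \(\sig{\Ready,a}\) once under each of the two rules, so not necessarily at most once. Nothing depends on this, because all thresholds count distinct senders.
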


\begin{proof}
(a)~Here \(a\) is \(\bot\) or a valid value by~\Cref{lem:va-third-basic}(c).
By \(s+\delta\), every correct party has received at least \(n-t\geq2f+1\) correct candidate messages for \(a\) and sent \(\sig{\Ready,a}\).
By \(s+2\delta\), every correct party has received at least \(n-t\geq2f+1\) correct ready messages for \(a\) and cleared the instance with \(a\).

(b)~At least \(f+1\) of the \(2f+1\) ready messages were sent by correct parties by time \(s\).
By \(s+\delta\), every correct party has received them and relayed \(\sig{\Ready,a}\).
By \(s+2\delta\), every correct party has received at least \(n-t\geq2f+1\) correct ready messages for \(a\) and cleared the instance with \(a\).
\end{proof}

\begin{lemma}[VA-validity]
\label{lem:va-third-validity}
If a correct party prepares or commits \(x\), then \(x\) is valid and some correct party called \(\fn{vote}(x)\).
\end{lemma}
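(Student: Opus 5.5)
Validity is immediate from \(\fn{clear}\) and \(\fn{do\_commit}\) in \Cref{fig:va-shared}. Both check \(\fn{valid}\) before adding a value to \(\var{prepared}\) or setting \(\var{committed}\), and \(\fn{clear}(\bot)\) only disables. So any prepared or committed \(x\) is valid. A committed value is also prepared, because \(\fn{do\_commit}\) calls \(\fn{clear}(x)\) first. It therefore remains to show that every prepared value \(x\neq\bot\) was voted for by a correct party. I will split on the three places where a correct party can call \(\fn{clear}(x)\) or \(\fn{do\_commit}(x)\) with \(x\neq\bot\).

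\textbf{Vote quorum (line~\ref{ln:va-third-vote-quorum}).} The party has a vote core for \(x\). By \Cref{lem:va-third-basic}(a), the core contains at least \(f+1\) correct parties. Each of them broadcasts \(\sig{\Vote,x}\) only for the value of its first \(\fn{vote}\) call, so some correct party called \(\fn{vote}(x)\).

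\textbf{Commit quorum (line~\ref{ln:va-third-commit-quorum}).} By \Cref{lem:va-third-basic}(b), some correct party sent \(\sig{\CommitMsg,x}\) upon a vote core for \(x\). The previous case then gives a correct voter for \(x\).

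\textbf{Ready quorum (\Cref{fig:va-third-amplification}, line~\ref{ln:va-third-ready-quorum}).} At least one of the \(2f+1\) \(\Ready\) messages comes from a correct party. By \Cref{lem:va-third-basic}(e), some correct party received \(\sig{\Candidate,x}\) from at least \(f+1\) correct parties, so at least one correct party sent \(\sig{\Candidate,x}\). Since \(x\neq\bot\), \Cref{lem:va-third-basic}(c) says that this party received votes for \(x\) from \(f+1\) parties, at least one of them correct. That correct party voted only for the value of its \(\fn{vote}\) call, so it called \(\fn{vote}(x)\).

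I expect the only delicate point to be the ready case. There one must chain through the first correct ready sender, which \Cref{lem:va-third-basic}(e) already handles, and rule out \(x=\bot\). The latter is immediate because \(\fn{clear}(\bot)\) disables rather than prepares. Everything else is a direct application of the earlier lemma.
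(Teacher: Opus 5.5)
Your proof is correct and follows the paper's own argument: validity comes from the checks in \(\fn{clear}\) and \(\fn{do\_commit}\), and then the same three cases. A vote core yields a correct voter by \Cref{lem:va-third-basic}(a); a commit quorum reduces to a vote core via \Cref{lem:va-third-basic}(b); and a ready quorum goes through a correct \(\Candidate\) sender by \Cref{lem:va-third-basic}(e) and then a correct voter by \Cref{lem:va-third-basic}(c).
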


\begin{proof}
Every prepared or committed value is valid.
If a correct party prepares \(x\) upon a vote core, or commits \(x\), then some correct party has a vote core for \(x\), by~\Cref{lem:va-third-basic}(b) in the latter case, and this vote core contains a correct party that called \(\fn{vote}(x)\) (\Cref{lem:va-third-basic}(a)).
Otherwise, it prepares \(x\) upon \(2f+1\) ready messages, one of which was sent by a correct party.
By~\Cref{lem:va-third-basic}(e), some correct party sent \(\sig{\Candidate,x}\), after receiving a correct vote for \(x\) (\Cref{lem:va-third-basic}(c)).
\end{proof}

\begin{lemma}[VA-consistency]
\label{lem:va-third-consistency}
If a correct party commits \(x\), then no correct party prepares or commits a value other than \(x\), or disables.
\end{lemma}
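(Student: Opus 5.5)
Suppose a correct party $p$ commits $x$. By \Cref{lem:va-third-basic}(b), $x$ is valid. Moreover, $p$ received $\sig{\CommitMsg,x}$ from $n-f$ parties. Let $C$ be the set of correct parties among them; it has at least $n-f-t\geq f+1$ members. By \Cref{lem:va-third-basic}(d), no member of $C$ ever sends $\sig{\Candidate,a}$ with $a\neq x$. A correct party can prepare a value, or disable, through only three routes: a vote core (line~\ref{ln:va-third-prepare}), $2f+1$ ready messages (\Cref{fig:va-third-amplification}), and $\fn{do\_commit}$. I will rule out each route for any $a\neq x$, where $a$ may be $\bot$.

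\emph{Vote-core route.} Some correct party has a vote core for $x$: this is the sender in \Cref{lem:va-third-basic}(b). By \Cref{lem:va-third-basic}(a), every correct party that has a vote core has one for $x$. Hence no correct party prepares $a\neq x$ upon votes. This route never disables, since it calls $\fn{clear}$ on a valid value.

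\emph{Ready route.} This is the main step. Suppose some correct party clears the instance with $a\neq x$ upon $2f+1$ ready messages. Then some correct party sent $\sig{\Ready,a}$. By \Cref{lem:va-third-basic}(e), some correct party received $\sig{\Candidate,a}$ from $2f+1$ parties. Those $2f+1$ senders and the $n-f$ commit senders seen by $p$ together are at least $3f+1+(n-3f-1)$ parties counted with multiplicity. So the two sets intersect in at least $(2f+1)+(n-f)-n=f+1$ parties, and hence in at least one correct party. That correct party sent both $\sig{\CommitMsg,x}$ and $\sig{\Candidate,a}$ with $a\neq x$, which contradicts \Cref{lem:va-third-basic}(d). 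The argument covers $a=\bot$ as well, which excludes disabling by this route. I expect the only care needed here is to make the intersection count explicit using $n\geq3f+1$ and to note that $\bot$ is treated uniformly.

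\emph{Commit route.} Suppose some correct party commits $y\neq x$. Applying \Cref{lem:va-third-basic}(b) to that commit gives a correct party with a vote core for $y$, while another correct party has a vote core for $x$. This contradicts \Cref{lem:va-third-basic}(a). Since $\fn{do\_commit}$ prepares only the value it commits, and disabling happens only through $\fn{clear}(\bot)$ on the ready route, the three cases together exhaust every way a correct party can prepare a value other than $x$, commit one, or disable.
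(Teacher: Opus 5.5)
Your proof is correct and follows essentially the paper's route. You exclude the vote-core and commit routes via \Cref{lem:va-third-basic}(a,b), and the ready route (including \(a=\bot\)) by intersecting the \(2f+1\) candidate senders from \Cref{lem:va-third-basic}(e) with the commit senders and invoking \Cref{lem:va-third-basic}(d). The paper intersects with the correct commit senders \(C_x\), using \(|C_x|+(2f+1)\geq(n-2f)+(2f+1)>n\); you intersect with all \(n-f\) commit senders and take a correct party among the \(f+1\) common ones, which is equivalent.

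One small slip: the count with multiplicity is \((2f+1)+(n-f)=n+f+1\), not \(3f+1+(n-3f-1)=n\). Your next sentence nevertheless states the correct intersection bound of \(f+1\).
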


\begin{proof}
Suppose that a correct party commits \(x\), and let \(C_x\) be the correct parties among the \(n-f\) commit senders that it observed.
Then \(|C_x|\geq n-f-t\geq n-2f\), and every party in \(C_x\) sent \(\sig{\CommitMsg,x}\) upon a vote core for \(x\).

By~\Cref{lem:va-third-basic}(a), every correct party that has a vote core has one for \(x\).
Hence no correct party prepares a value \(x'\neq x\) upon a vote core, and, by~\Cref{lem:va-third-basic}(b), no correct party commits \(x'\).

It remains to exclude clearing the instance with \(a\neq x\), a value or \(\bot\), upon ready messages.
One of these ready messages would be sent by a correct party, so, by~\Cref{lem:va-third-basic}(e), some correct party would receive \(\sig{\Candidate,a}\) from a set of \(2f+1\) parties.
This set intersects \(C_x\), because
\[
  |C_x|+(2f+1)\geq(n-2f)+(2f+1)>n.
\]
Yet a party in \(C_x\) sent \(\sig{\CommitMsg,x}\) and therefore cannot send \(\sig{\Candidate,a}\) (\Cref{lem:va-third-basic}(d)).
\end{proof}

\begin{lemma}[VA-unanimity]
\label{lem:va-third-unanimity}
Suppose that all correct parties vote for the same valid value \(x\) and none calls \(\fn{request\_disable}()\), and let \(t_0\) be the time of the last correct vote.
Then every correct party prepares \(x\) by time \(t_0+\delta\) and commits \(x\) by time \(t_0+2\delta\).
\end{lemma}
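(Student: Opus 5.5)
The proof follows the fast path of Case~1 directly, and the key point is that the local exclusion rule never blocks the \(\CommitMsg\) broadcast. Correct parties vote only for \(x\) and never call \(\fn{request\_disable}()\), so no correct party ever sends \(\sig{\Candidate,\bot}\) (\Cref{lem:va-third-basic}(c)). I first show that no correct party ever sends \(\sig{\Candidate,y}\) for a value \(y\neq x\). Such a message requires votes for \(y\) from \(f+1\) parties, at least one of them correct (\Cref{lem:va-third-basic}(c)). Every correct vote is for \(x\) and correct parties vote only once, so this is impossible. Hence at every correct party, \(\var{sent\_candidate}\subseteq\{x\}\) holds at all times.

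Next comes the prepare bound. Every correct party has broadcast \(\sig{\Vote,x}\) by time \(t_0\), so by \(t_0+\delta\) every correct party has received \(\sig{\Vote,x}\) from all \(n-t\geq n-f\) correct parties. I must check that these senders are counted, since counts ignore equivocating voters. A correct party sends only one vote, and authenticated channels prevent forging, so no correct sender is discarded as an equivocator. Because \(x\) is valid, the guard at line~\ref{ln:va-third-vote-quorum} holds, and the party calls \(\fn{clear}(x)\), preparing \(x\), by \(t_0+\delta\).

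At that same step the party tries to send \(\sig{\CommitMsg,x}\). The condition \(\var{sent\_candidate}\subseteq\{x\}\) holds by the first paragraph. I must also argue that \(\var{sent\_commit}\) is \(\None\) or already \(x\). A correct party sets \(\var{sent\_commit}\) only upon a vote core, and by \Cref{lem:va-third-basic}(a) every correct vote core is for \(x\), so \(\var{sent\_commit}\in\{\None,x\}\). Therefore every correct party has broadcast \(\sig{\CommitMsg,x}\) by \(t_0+\delta\). By \(t_0+2\delta\), each correct party has received \(n-t\geq n-f\) commit messages for \(x\) and calls \(\fn{do\_commit}(x)\). Since \(x\) is valid, it commits \(x\), unless \(\var{committed}\) is already set. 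VA-consistency (\Cref{lem:va-third-consistency}) ensures any earlier commit is also \(x\).

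The only delicate point is this exclusion argument: showing that neither a \(\bot\) candidate nor a candidate for another value can block the commit broadcast. The rest is routine message counting.
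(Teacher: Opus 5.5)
Your proof is correct and follows the paper's argument essentially step for step. You use \Cref{lem:va-third-basic}(c) to rule out correct candidates for \(\bot\) and for any value other than \(x\), observe that correct commit messages can only be for \(x\), and then count correct votes and commit messages over two message delays; your extra checks (that correct voters are never discarded as equivocators, and that any earlier commit must be of \(x\)) are sound details the paper leaves implicit.
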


\begin{proof}
No correct party sends \(\sig{\Candidate,\bot}\), and a candidate for a value \(y\neq x\) requires a correct vote for \(y\) (\Cref{lem:va-third-basic}(c)), so correct parties send candidates only for \(x\).
Moreover, a correct party sends \(\sig{\CommitMsg,y}\) only upon a vote core for \(y\), which contains a correct vote, so only for \(y=x\).

By \(t_0+\delta\), every correct party has received at least \(n-t\geq n-f\) correct votes for \(x\) and prepared \(x\); since its candidate set is a subset of \(\{x\}\), it has also sent \(\sig{\CommitMsg,x}\).
By \(t_0+2\delta\), every correct party has received at least \(n-f\) correct commit messages and committed \(x\).
\end{proof}

\begin{lemma}[VA-totality]
\label{lem:va-third-totality}
If a correct party prepares \(x\), or disables, at time \(t_0\), then every correct party prepares \(x\), or disables, respectively, by time \(t_0+3\delta\).
If the instance is Byzantine-silent, it does so by time \(t_0+\delta\).
\end{lemma}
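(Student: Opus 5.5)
We split by how the correct party \(p\) clears the instance at time \(t_0\). A party prepares or disables only through two rules. The first is \(\fn{clear}(x)\) upon a vote core (line~\ref{ln:va-third-prepare}); \(\fn{do\_commit}\) reduces to this, since by \Cref{lem:va-third-basic}(b) a commit requires a correct party with a vote core for \(x\). The second is \(\fn{clear}(a)\) upon \(2f+1\) ready messages (\Cref{fig:va-third-amplification}). Disabling can only arise from the second rule, with \(a=\bot\). In that case \Cref{lem:va-third-amplification}(b) directly gives clearing with \(a\) at every correct party by \(t_0+2\delta\le t_0+3\delta\).

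For a preparation upon a vote core at time \(t_0\), \Cref{lem:va-third-basic}(a) says the core contains at least \(f+1\) correct voters for \(x\), and \(x\) is valid. These votes were sent by \(t_0\), so every correct party receives \(f+1\) votes for \(x\) by \(t_0+\delta\) and calls \(\fn{send\_candidate}(x)\). The key step is to show that the exclusion rule (line~\ref{ln:va-third-candidate-exclusion}) does not block this candidate. That requires \(\var{sent\_commit}\in\{\None,x\}\) at every correct party. A correct party sends \(\sig{\CommitMsg,y}\) only upon a vote core for \(y\), and \Cref{lem:va-third-basic}(a) forces \(y=x\). So every correct party has sent \(\sig{\Candidate,x}\) by \(t_0+\delta\), and \Cref{lem:va-third-amplification}(a) yields preparation everywhere by \(t_0+3\delta\). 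This step is the crux of the argument and the source of the \(3\delta\) bound.

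For the Byzantine-silent bound, consider the vote-core case. Byzantine parties send nothing, so the \(n-f\) votes \(p\) counted all come from correct parties and were broadcast by \(t_0\). Every correct party therefore counts those same \(n-f\) votes by \(t_0+\delta\) and prepares \(x\). The exception is a party that counts them as equivocating, which is impossible because correct parties vote once. In the ready case, all \(2f+1\) ready messages come from correct parties. Each correct party broadcasts \(\sig{\Ready,a}\) at most once and sends it to everyone, so all correct parties receive those \(2f+1\) readies by \(t_0+\delta\) and clear with \(a\), covering both preparation and disabling. The main subtlety to check is that every rule triggering at \(p\) used only messages broadcast to all parties, which holds here.
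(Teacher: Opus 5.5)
Your proof is correct and follows essentially the paper's argument. It uses the same case split: a vote core, ready messages, and a commit, which is reduced via \Cref{lem:va-third-basic}(b) to a vote core formed at some time \(s\le t_0\). It makes the same check that the \(\var{sent\_commit}\) exclusion cannot block \(\sig{\Candidate,x}\), and the same Byzantine-silent observation that every triggering message was broadcast by a correct party by \(t_0\).

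One harmless inaccuracy: a correct party may broadcast \(\sig{\Ready,a}\) twice (once under each rule), so the ``at most once'' claim is false. Your argument only needs that each counted ready message was broadcast to all parties.
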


\begin{proof}
A correct party prepares upon a vote core, upon ready messages, or by committing, and disables only upon ready messages.

First, suppose that a correct party has a vote core for a valid value \(x\) at time \(t_0\), as when it prepares \(x\) upon a vote core.
The vote core contains at least \(f+1\) correct parties (\Cref{lem:va-third-basic}(a)).
A correct party that sent a commit message did so upon a vote core, hence for \(x\) (\Cref{lem:va-third-basic}(a)), so the check of \(\var{sent\_commit}\) in \(\fn{send\_candidate}(x)\) passes at every correct party.
By \(t_0+\delta\), every correct party has received the \(f+1\) correct votes for \(x\) and sent \(\sig{\Candidate,x}\).
By~\Cref{lem:va-third-amplification}(a), every correct party prepares \(x\) by \(t_0+3\delta\).

Second, if a correct party clears the instance with \(a\) upon \(2f+1\) ready messages at time \(t_0\), every correct party does so by \(t_0+2\delta\) (\Cref{lem:va-third-amplification}(b)).

Third, if it prepares \(x\) by committing \(x\) at time \(t_0\), some correct party had a vote core for the valid value \(x\) at some time \(s\leq t_0\) (\Cref{lem:va-third-basic}(b)).
By the first case, every correct party prepares \(x\) by \(s+3\delta\leq t_0+3\delta\).

For the Byzantine-silent totality delay, suppose that Byzantine parties send no messages in the instance.
A correct party prepares or disables at time \(t_0\) upon \(n-f\) votes, \(n-f\) commit messages, or \(2f+1\) ready messages, all broadcast by correct parties by time \(t_0\).
By \(t_0+\delta\), every correct party has received the same messages.
Correct parties do not equivocate, so the vote-counting thresholds count all of them, and the corresponding rules have no precondition other than validity, which every party evaluates alike.
Every correct party therefore clears the instance with the same value or \(\bot\) by \(t_0+\delta\).
\end{proof}

\begin{lemma}[VA-fallback-progress]
\label{lem:va-third-fallback}
Suppose that every correct party calls \(\fn{request\_disable}()\), and let \(\tau\) be the time of the last such call.
Then the instance is cleared at every correct party by time \(\tau+3\delta\), and by time \(\tau+2\delta\) if it is not cleared at any correct party by time \(\tau\).
The bound \(3\delta\) is attained when \(n=3f+1\) and \(t=f\geq1\).
\end{lemma}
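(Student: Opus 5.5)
The plan is to split on whether the instance is already cleared at some correct party by time \(\tau\), and to reduce both cases to \Cref{lem:va-third-amplification} and \Cref{lem:va-third-totality}.

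\emph{Case 1: some correct party has cleared the instance by time \(\tau\).}
That party prepared some \(x\) or disabled at some time \(t_0\leq\tau\). If it did so upon ready messages, \Cref{lem:va-third-amplification}(b) gives clearing everywhere by \(t_0+2\delta\leq\tau+2\delta\). Otherwise it prepared upon a vote core or by committing, and the totality argument in \Cref{lem:va-third-totality} gives clearing everywhere by \(t_0+3\delta\leq\tau+3\delta\). This is the only source of the extra \(\delta\).

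\emph{Case 2: no correct party has cleared the instance by time \(\tau\).}
I will show that every correct party has sent \(\sig{\Candidate,\bot}\) by time \(\tau\), and then apply \Cref{lem:va-third-amplification}(a) with \(s=\tau\) to conclude that all correct parties disable by \(\tau+2\delta\).

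Each correct party calls \(\fn{request\_disable}()\) by \(\tau\). By line~\ref{ln:va-third-candidate-exclusion}, the call sends \(\sig{\Candidate,\bot}\) unless \(\var{sent\_commit}\) is a value \(x\). But a correct party sets \(\var{sent\_commit}=x\) only at line~\ref{ln:va-third-send-commit}, right after calling \(\fn{clear}(x)\) on a vote core for a valid \(x\). That call prepares \(x\), so the instance would already be cleared at that party before \(\tau\), contradicting the case assumption. Hence the exclusion never blocks, and every correct party has sent \(\sig{\Candidate,\bot}\) by \(\tau\). The step to check carefully is exactly this link between sending a commit and having prepared: both happen in the same atomic handler, and \(\fn{clear}(x)\) precedes the broadcast.

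\emph{Tightness.}
The main obstacle is exhibiting an execution that attains \(3\delta\). Take \(n=3f+1\) and \(t=f\geq1\), and aim for the following schedule:
\begin{itemize}
    \item Exactly \(f+1\) correct parties vote \(x\), and the \(f\) Byzantine parties send \(\Vote\) for \(x\) only to a single correct party \(p\).
    \item At time \(\tau\), \(p\) gets its vote core, prepares \(x\), and sends \(\CommitMsg\).
    \item All other correct parties have already requested disabling, so they send \(\Candidate\) for \(\bot\).
    \item The other \(f\) correct parties voted for distinct values, or not at all, so no value except \(x\) reaches \(f+1\) votes.
    \item Correct parties receive the \(f+1\) correct votes for \(x\) only at \(\tau+\delta\). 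They then send \(\Candidate\) for \(x\), which is not blocked because \(\var{sent\_commit}=\None\).
    \item Byzantine parties withhold all candidate and ready messages.
\end{itemize}
Candidates for \(\bot\) number at most \(2f\), one fewer than the \(2f+1\) threshold, since \(p\) cannot send one. So no correct party disables. Readies for \(x\) are sent at \(\tau+2\delta\) and arrive at \(\tau+3\delta\), which gives a lower bound of \(3\delta\) for the other correct parties. In constructing this execution I must confirm that each correct party's \(\fn{request\_disable}()\) call occurs at or before \(\tau\), and that \(p\) prepares exactly at \(\tau\), which is consistent with the case split.
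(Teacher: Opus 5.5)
Your proposal is correct and matches the paper's proof step for step. It uses the same case split on whether the instance is cleared by \(\tau\), the same observation that a correct party sends a commit message only after preparing (so every correct party has sent \(\sig{\Candidate,\bot}\) by \(\tau\) and \Cref{lem:va-third-amplification}(a) applies), and the same tightness execution; the paper simply has the other \(f\) correct parties not vote. The one detail you left open is settled there explicitly: \(p\) calls \(\fn{request\_disable}()\) at \(\tau\), right after voting and sending its commit message, so that \(\tau\) is the last call and the call is blocked by \(\var{sent\_commit}=x\).
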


\begin{proof}
If the instance is cleared at some correct party by time \(\tau\), \Cref{lem:va-third-totality} ensures that it is cleared at every correct party by time \(\tau+3\delta\).
Otherwise, no correct party has sent a commit message by time \(\tau\), since it would first have prepared a value.
Hence the check of \(\var{sent\_commit}\) in \(\fn{send\_candidate}(\bot)\) passes, and every correct party has sent \(\sig{\Candidate,\bot}\) upon its call to \(\fn{request\_disable}()\), by time \(\tau\).
By~\Cref{lem:va-third-amplification}(a), every correct party disables by \(\tau+2\delta\).

The bound \(3\delta\) is attained when \(n=3f+1\) and \(t=f\geq1\): let \(f+1\) correct parties vote for a valid value \(x\), let the other \(f\) correct parties call \(\fn{request\_disable}()\) without voting, and let the \(f\) Byzantine parties send votes for \(x\) only to one of the correct voters, \(p\), and send no other messages.
Party \(p\) casts the last correct vote, at time \(\tau\), which completes a vote core at \(p\), so \(p\) prepares \(x\) and sends \(\sig{\CommitMsg,x}\).
It then calls \(\fn{request\_disable}()\), which has no effect because \(p\) has already sent a commit message.
The other \(2f\) correct parties have already called \(\fn{request\_disable}()\), after voting if applicable, and sent \(\sig{\Candidate,\bot}\), but the instance is not yet cleared at any of them.
Since \(p\) sends no \(\bot\) candidate, only \(2f\) parties send \(\bot\) candidates, and no correct party sends \(\sig{\Ready,\bot}\).
Hence the instance is cleared at the other correct parties only through \(x\): they receive their \((f+1)\)st vote for \(x\), from \(p\), at \(\tau+\delta\), and then send \(\sig{\Candidate,x}\), send \(\sig{\Ready,x}\) at \(\tau+2\delta\), and prepare \(x\) at \(\tau+3\delta\).
\end{proof}

\subsection{Communication Complexity}
\label{app:va-complexity}

We count messages sent by correct parties throughout one view agreement
instance. Each threshold rule executes at most once per value, and detected
equivocating voters are permanently excluded from local vote counts.
Let \(L\) be the maximum value size and \(\kappa\) the per-message
overhead, both measured in bits.

\begin{theorem}
\label{thm:va-third-complexity}
Under these conventions, if \(n=3f+1\), 1/3-VA sends \(O(n^2)\) messages and
\(O(n^2(L+\kappa))\) bits per instance.
\end{theorem}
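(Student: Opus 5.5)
Since each broadcast by a correct party costs \(n\) point-to-point messages of size \(O(L+\kappa)\), it suffices to show that every correct party initiates \(O(n)\) broadcasts per instance. For each message type of 1/3-VA (\Cref{fig:va-third,fig:va-third-amplification}) we bound the number of distinct values for which a correct party can broadcast that type. Multiplying by \(n\) recipients and \(n\) correct senders gives \(O(n^2)\) messages. Since each message carries one value or \(\bot\) plus overhead, this also gives \(O(n^2(L+\kappa))\) bits.

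\textbf{Vote and commit.} A correct party broadcasts at most one \(\Vote\), because it votes at most once per instance. It broadcasts at most one \(\CommitMsg\), because \(\var{sent\_commit}\) is set on the first such broadcast and checked before any later one.

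\textbf{Candidate.} A correct party broadcasts \(\sig{\Candidate,x}\) for \(x\neq\bot\) only after counting votes for \(x\) from \(f+1\) distinct non-equivocating parties. Equivocating voters are excluded from counts, so each counted voter contributes to at most one value. Hence at most \(n/(f+1)\) values reach the threshold, which is at most \(3\) when \(n=3f+1\). Adding \(\bot\), a correct party sends at most \(4\) candidate messages, and \(\var{sent\_candidate}\) prevents sending any value twice.

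\textbf{Ready.} This is the step I expect to be the main obstacle, since Byzantine parties may send \(\Ready\) messages for arbitrarily many values and thereby trigger relays. I would use \Cref{lem:va-third-basic}(e). A correct party sends \(\sig{\Ready,a}\) only if some correct party received \(\sig{\Candidate,a}\) from \(2f+1\) parties, at least \(f+1\) of them correct. Each correct party sends at most \(4\) candidates, so correct parties send at most \(4(n-f)\) candidate messages in total. The number of values \(a\) with \(f+1\) correct candidate senders is therefore at most \(4(n-f)/(f+1)=O(1)\) when \(n=3f+1\).

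Together with the convention that each threshold rule fires at most once per value, this bounds the \(\Ready\) broadcasts of each correct party by a constant. In total, each correct party sends \(O(1)\) broadcasts, which yields the claimed totals. The only subtlety is to make sure the relay rule cannot fire for values lacking correct support. This holds because any \(f+1\) ready messages include a correct sender, and that sender's ready message is already covered by \Cref{lem:va-third-basic}(e).
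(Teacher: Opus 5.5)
Your proof is correct and follows essentially the paper's argument: constant per-type broadcast bounds for each correct party, using disjointness of locally counted voter sets for \Candidate{} and \Cref{lem:va-third-basic}(e) plus a double count of correct \Candidate{} messages for \Ready{}; the paper only obtains tighter constants (two non-$\bot$ candidate values, four ready arguments, 13 broadcasts per party). One harmless slip is that there may be $n-t>n-f$ correct parties, so they send at most $4(n-t)\le 4n$ candidate messages rather than $4(n-f)$, but $4n/(f+1)<12$ is still a constant.
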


\begin{proof}
The case \(n=1\) is immediate.
Suppose \(f\geq1\), and let \(t\leq f\) be the actual number of Byzantine parties.
Sending \(\sig{\Candidate,x}\) for a value \(x\) requires at least \(f+1\) votes for \(x\).
At a fixed correct party, the sets of voters counted for different candidate values are disjoint: a second vote for a different value reveals equivocation, excluding that voter from subsequent counts.
Thus each correct party sends \(\Candidate\) for at most \(\lfloor(3f+1)/(f+1)\rfloor=2\) values other than \(\bot\).

Fix an argument \(a\neq\bot\) for which some correct party sends \(\sig{\Ready,a}\).
By~\Cref{lem:va-third-basic}(e), at least \(2f+1-t\) correct parties send \(\sig{\Candidate,a}\).
There are at most \(2(n-t)\) pairs of a correct party and a value other than \(\bot\) for which it sends \(\Candidate\), so there are at most three such arguments \(a\), since
\[
  \frac{2(n-t)}{2f+1-t}
  =2+\frac{2f}{2f+1-t}
  \leq2+\frac{2f}{f+1}<4.
\]
Including \(\bot\), correct parties therefore send \(\Ready\) for at most four distinct arguments.

For each argument, a correct party broadcasts \(\Ready\) at most twice: once under the \((2f+1)\)-candidate rule and once under the \((f+1)\)-ready rule.
Thus each correct party broadcasts at most one \(\Vote\), one \(\CommitMsg\), three \(\Candidate\) messages, and eight \(\Ready\) messages.
Hence there are at most \(13(n-t)(n-1)\leq13n(n-1)\) point-to-point messages.
Each carries \(O(L+\kappa)\) bits, giving the stated bounds.
\end{proof}

For fixed-size values and overhead, the bit complexity is \(O(n^2)\);
in a blockchain consensus instantiation, \(L\) includes the block payload.

\subsection{Multiple Prepared Values Are Necessary}
\label{app:va-multiple-prepared}

1/3-VA has a prepare delay of \(\delta\) and satisfies VA-totality, but it may prepare several values in an instance in which no value is committed (\Cref{sec:signature-free-implementations}).
We show that, without signatures, allowing several prepared values is necessary for a prepare delay of \(\delta\) when \(n/4<f<n/3\).

\begin{theorem}
\label{thm:va-multiple-prepared}
Let \(n/4<f<n/3\), and consider a signature-free asynchronous implementation of view agreement with prepare delay \(d_p(\delta)=\delta\).
Then, in some execution, correct parties prepare two different values.
\end{theorem}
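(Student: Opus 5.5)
We argue by contradiction: assume some signature-free asynchronous implementation with \(d_p(\delta)=\delta\) never lets correct parties prepare two different values. We then build indistinguishable executions that force a correct party to prepare a value on too little evidence.

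\textbf{Step 1: one-round preparation.} Take an execution in which all correct parties vote for a valid value \(x\) at time \(0\), none requests disabling, and every message takes exactly \(\delta\). The prepare delay forces every correct party to prepare \(x\) by time \(\delta\). At that moment a party has received only messages sent at time \(0\), and without signatures each such message depends only on its sender's input. So by time \(\delta\) the party has learned just the set of parties whose first-round messages it received and what they said, essentially their votes.

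Asynchrony lets the adversary delay messages from up to \(f\) parties. Hence a party \(p\) must prepare \(x\) at time \(\delta\) whenever it has received time-\(0\) vote messages for \(x\) from some set of \(n-f\) parties, including itself. It cannot tell whether the missing \(f\) parties are merely slow or are correct parties that voted differently.

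\textbf{Step 2: two conflicting preparations.} Since \(f>n/4\), we have \(n-f<3f\) and \(2(n-f)-n=n-2f<2f\). Partition the parties into \(A\), \(B\), \(C\), \(D\) with \(|B|\le f\), \(|D|\le f\), \(|A|\le f\), \(|C|\le f\); four blocks of size at most \(f\) cover \(n<4f\) parties. Let \(B\) be Byzantine, \(A\) vote for \(x\), and \(C\) vote for \(y\neq x\). Choose \(p\in A\) and \(q\in C\) such that:
\begin{itemize}
\item \(p\) receives votes for \(x\) from \(A\cup B\cup D\);
\item \(q\) receives votes for \(y\) from \(C\cup B\cup D\).
\end{itemize}
For this the parties in \(D\) must appear to vote for \(x\) to \(p\) and for \(y\) to \(q\). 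Parties in \(D\) are correct and vote only once, so I would instead make \(D\) Byzantine and keep \(B\) small enough that \(|B\cup D|\le f\). The sizes must satisfy \(|A|+|B\cup D|\ge n-f\) and \(|C|+|B\cup D|\ge n-f\) with \(|B\cup D|\le f\), which forces \(|A|,|C|\ge n-2f\). We also need \(|A|+|C|+f\le n\), i.e. \(2(n-2f)+f\le n\), i.e. \(n\le 3f\), which fails since \(f<n/3\).

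So the direct version does not work. The fix is to let only the \(f\) Byzantine parties equivocate between \(x\) and \(y\), while \(p\) and \(q\) each treat the missing \(f\) correct parties as slow. This needs \(|A|+f\ge n-f\) and \(|C|+f\ge n-f\) with \(A\) and \(C\) disjoint sets of correct parties. Then \(p\) sees \(A\cup B\) and \(q\) sees \(C\cup B\), giving \(2(n-2f)\le n-f\), i.e. \(n\le 3f\), which is again impossible.

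\textbf{Step 3: reaching the bound \(f>n/4\).} This is the main obstacle, and the honest adjustment is that the fast-prepare rule must trigger on fewer than \(n-f\) votes. Preparing at time \(\delta\) in one execution must survive the totality and consistency constraints of a second execution. I would chain executions:
\begin{itemize}
\item \(E_1\): \(p\) sees \(n-f\) votes for \(x\) and prepares \(x\) at time \(\delta\).
\item \(E_2\): indistinguishable to \(p\) up to time \(\delta\), but actually only \(n-2f\) correct parties voted \(x\), while \(f\) correct parties voted \(y\) and the remaining parties are Byzantine or slow.
\item \(E_3\): make the \(y\)-side symmetric, so that \(q\) prepares \(y\) at time \(\delta\).
\end{itemize}
The counting for the chain is \((n-2f)+(n-2f)+f\le n\), i.e. \(n\le 3f\), so it still needs \(f\ge n/3\).

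I would therefore not rest on direct conflicting preparation. Instead I would use VA-totality: once \(p\) prepares \(x\) in \(E_2\), every correct party must eventually prepare \(x\). Construct \(E_2\) so that VA-unanimity-style reasoning on the \(y\)-side forces some correct party to prepare \(y\) as well. This works if the \(y\)-voters together with the Byzantine parties and the slow \(x\)-voters can simulate a unanimous \(y\) execution for a prepare-delay-\(\delta\) observer, which needs \(|y\text{-voters}|+2f\ge n-f\) alongside \(|x\text{-voters}|+2f\ge n-f\). The counting becomes \(2(n-3f)+f\le n\), i.e. \(n\le 5f\). Tightening this to exactly \(n<4f\) with the right split between Byzantine and slow parties is the delicate calculation. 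With totality propagating both \(x\) and \(y\), two different values are prepared, contradicting the assumption.
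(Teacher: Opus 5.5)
There is a genuine gap: your proposal never reaches a contradiction. Step~2 fails by your own count, and the step meant to replace it is never carried out. Step~1 is sound as far as it goes: a party holding time-\(0\) votes for \(x\) from \(n-f\) parties must prepare \(x\) at time \(\delta\). But for \(n>3f\), two such sets of \(n-f\) parties share at least \(n-2f\geq f+1\) parties, hence a correct party that would have voted for two values. So this forcing argument can never yield two conflicting preparations in one execution, however the Byzantine and slow parties are arranged.

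Your final attempt does not escape this. To apply the prepare delay to an observer \(q\), you need a unanimous-\(y\) execution with all delays at most \(\delta\) that \(q\) cannot distinguish at time \(\delta\). In that execution every correct vote reaches \(q\) by time \(\delta\), so \(q\) may lack votes only from the at most \(f\) Byzantine parties. Slow correct \(x\)-voters supply no \(y\) votes, so your condition \(|y\text{-voters}|+2f\geq n-f\) double counts. The correct requirement, \(|y\text{-voters}|+f\geq n-f\), brings back \(n\leq3f\).

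The bound \(n\leq5f\) is therefore unsupported. The chain that would use VA-totality to force a second prepared value is never constructed, and the hypothesis \(f>n/4\) is never used. Some use of it is unavoidable. The ingredients you rely on, in the form you use them, also hold for two-round signature-free reliable broadcast if you read votes as the values received from the sender. These ingredients are forced preparation one delay after unanimous time-\(0\) votes, a unique prepared value, and totality. Such a broadcast exists when \(n\geq4f\).

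The missing content is exactly the two-round lower bound for signature-free reliable broadcast, which the paper invokes rather than re-derives. Assuming at most one value is ever prepared, the paper builds Byzantine reliable broadcast from view agreement:
\begin{itemize}
\item the sender sends its input at time~\(0\);
\item every party calls \(\fn{vote}\) on the first value it receives from the sender and never requests disabling;
\item every party delivers the first value it prepares.
\end{itemize}
Uniqueness of prepared values gives agreement, and VA-totality gives totality. For a correct sender, VA-unanimity together with the fact that a party commits only after preparing gives validity. Finally, \(d_p(\delta)=\delta\) gives delivery by time \(2\delta\).

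This two-round protocol contradicts the result of Abraham, Ren, and Xiang that unauthenticated reliable broadcast needs three rounds when \(3f+1\leq n\leq4f-1\)~\cite[Theorem~10]{abraham2022good}. That range is exactly \(n/4<f<n/3\). A self-contained direct proof along your lines would have to reproduce an indistinguishability argument of that kind. There, the conflict comes from totality and agreement across executions with different Byzantine sets, not from two fast preparations, and \(n<4f\) is genuinely used.
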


\begin{proof}
Suppose, for contradiction, that correct parties never prepare two different values.
We use the implementation, with a validity predicate that accepts every value, to build a Byzantine reliable broadcast protocol~\cite{bracha1987asynchronous} with a designated sender~\(s\).
The sender sends its input \(m\) to all parties at time~\(0\).
Every party calls \(\fn{vote}(m')\) upon receiving the first value \(m'\) from \(s\), never calls \(\fn{request\_disable}()\), and delivers the first value it prepares.
Correct parties thus respect the interface of view agreement.

By assumption, correct parties deliver at most one value in total (agreement), and, by VA-totality, if a correct party delivers a value, then every correct party eventually delivers it (totality).
Now suppose that \(s\) is correct.
Every correct party then votes for \(m\) and none requests disabling, which meets the conditions of VA-unanimity.
By VA-unanimity, every correct party eventually commits \(m\), hence prepares \(m\) (VA-consistency) and delivers it (validity).
Moreover, if message delay is bounded by \(\delta\), the last correct vote occurs by time \(\delta\), so every correct party prepares \(m\), and delivers it, by time \(\delta+d_p(\delta)=2\delta\).

We have obtained a signature-free Byzantine reliable broadcast protocol with a good-case latency of two message delays, that is, of two asynchronous rounds.
This contradicts the lower bound of Abraham, Ren, and Xiang: without signatures (in their terms, unauthenticated), reliable broadcast needs three rounds when \(3f+1\leq n\leq4f-1\)~\cite[Theorem~10]{abraham2022good}.
\end{proof}

The argument does not apply when \(n\geq4f\), since signature-free reliable broadcast then has a good-case latency of two rounds~\cite{abraham2022good}.
With signatures, a prepare delay of \(\delta\) is compatible with at most one prepared value: the Simplex implementation of~\Cref{app:simplex} achieves both, and obtains VA-totality by forwarding the signed votes that justify each preparation.

\clearpage
\section{VA-Strong-Unanimity and Shorter View Timeouts}%
\label{app:strong-unanimity}

In Generic Simplex, the view timeout must leave enough time for the proposal of a correct leader to be prepared, not merely voted for, before any correct party's view timer fires (hence the prepare delay in the timeout condition of~\Cref{sec:blockchain-va}).
The reason is that VA-unanimity guarantees nothing once some party requests disabling, even after voting.
The following strengthening rules this out.

\begin{definition}[VA-strong-unanimity]
\label{def:va-strong-unanimity}
If all correct parties vote for the same valid value \(x\) and none requests disabling before voting, all eventually commit~\(x\).
\end{definition}

The conditions of VA-unanimity imply those of VA-strong-unanimity, so VA-strong-unanimity implies VA-unanimity.
For an implementation that satisfies VA-strong-unanimity, we measure the prepare and commit delays under its conditions (\Cref{app:consensus-setting}).

\paragraph{Shorter view timeouts.}
With VA-strong-unanimity, a correct party whose view timer fires after it has voted for the leader's proposal can no longer prevent the proposal from committing.
The prepare delay is then unnecessary, and the timeout condition~\eqref{eq:consensus-timeout} becomes \(\Delta_{\mathrm{to}}\geq2d_t(\Delta)\).
\Cref{lem:consensus-good-view} establishes progress and latency under either form of the timeout condition, so the latency bounds of~\Cref{sec:blockchain-va} continue to hold.

\paragraph{Implementations.}
Fast VA (\Cref{app:va-fifth}), the synchronous implementation (\Cref{app:synchronous-va}), and the Minimmit implementation (\Cref{app:minimmit}) satisfy VA-strong-unanimity, whereas the Simplex implementation (\Cref{app:simplex}) does not.
Neither does 1/3-VA: a party that requests disabling after voting, but before sending a commit message, sends \(\sig{\Candidate,\bot}\), which prevents it from sending a commit message later.
When \(n=3f+1\) and the \(f\) Byzantine parties are silent, a commit requires commit messages from all correct parties, so a single such request prevents it.
We now present strongly unanimous 1/3-VA, a variant of 1/3-VA that satisfies VA-strong-unanimity with the same delays.
With this variant, Generic Simplex needs a view timeout of only \(6\Delta\) instead of \(7\Delta\); with Fast VA, VA-strong-unanimity similarly reduces it from \(5\Delta\) to \(4\Delta\) (\Cref{app:consensus-latency-proofs}).

\subsection{Strongly Unanimous 1/3-VA}%
\label{app:va-third-strong}

\begin{figure*}[!t]
    \centering
    \small
    \setlength{\fboxsep}{2pt}%
    \fbox{%
    \begin{minipage}{\dimexpr\textwidth-2\fboxsep-2\fboxrule\relax}
    \begin{minipage}[t]{0.485\linewidth}
    \begin{algorithmic}[1]
        \State \textbf{Private state:}
        \State \ind $\var{vote\_value} \gets \None$;\hfill$\star$\label{ln:va-third-strong-vote-value}
        \State \ind $\var{sent\_commit} \gets \None$;
        \State \ind $\var{sent\_candidate} \gets \emptyset$;
        \BlankLine
        \State \kw{upon} $\fn{vote}(x)$:
        \State \ind $\var{vote\_value} \gets x$;\hfill$\star$\label{ln:va-third-strong-record-vote}
        \State \ind \kw{broadcast} $\sig{\Vote,x}$;
        \BlankLine
        \State \kw{upon} $\nrecv{\sig{\Vote,x}} \geq n-f$
        \State \indd \kw{and} $\fn{valid}(x)$:
        \State \ind \kw{call} $\fn{clear}(x)$;
        \State \ind \kw{if} $\var{sent\_commit} = \None$
        \State \inddd \kw{and} $\var{sent\_candidate} \subseteq \{x\}$:
        \State \indd $\var{sent\_commit} \gets x$;
        \State \indd \kw{broadcast} $\sig{\CommitMsg,x}$;
        \BlankLine
        \State \kw{upon} $\nrecv{\sig{\CommitMsg,x}} \geq n-f$:
        \State \ind \kw{call} $\fn{do\_commit}(x)$;
        \algstore{vathirdstrong}
    \end{algorithmic}
    \end{minipage}\hfill
    \begin{minipage}[t]{0.485\linewidth}
    \begin{algorithmic}[1]
        \algrestore{vathirdstrong}
        \State \kw{procedure} $\fn{send\_candidate}(a)$:
        \State \ind \kw{if} $a\notin\var{sent\_candidate}$
        \State \inddd \kw{and} $\var{sent\_commit}\in\{\None,a\}$:
        \State \indd $\var{sent\_candidate} \gets$
        \State \inddd $\var{sent\_candidate} \cup \{a\}$;
        \State \indd \kw{broadcast} $\sig{\Candidate,a}$;
        \BlankLine
        \State \kw{upon} $\nrecv{\sig{\Vote,x}} \geq f+1$:
        \State \ind \kw{if} $\fn{valid}(x)$:
        \State \indd \kw{call} $\fn{send\_candidate}(x)$;
        \State \ind \kw{else}: \kw{call} $\fn{send\_candidate}(\bot)$;\hfill$\star$\label{ln:va-third-strong-invalid-votes}
        \BlankLine
        \State \kw{upon} $\fn{request\_disable}()$:
        \State \ind \kw{if} $\var{vote\_value} = \None$:\hfill$\star$\label{ln:va-third-strong-request-disable}
        \State \indd \kw{call} $\fn{send\_candidate}(\bot)$;
        \BlankLine
        \State \kw{upon} $\var{vote\_value} = x \neq \None$\hfill$\star$\label{ln:va-third-strong-conflicting-votes}
        \State \indd \kw{and} $\#\!\bigl[\,\sig{\Vote,y}: y \neq x$
        \State \inddd \kw{or} $\sig{\Candidate,\bot}\,\bigr] \geq f+1$:
        \State \ind \kw{call} $\fn{send\_candidate}(\bot)$;
        \BlankLine
        \State \kw{upon} $\nrecv{\sig{\Candidate,a}} \geq 2f+1$:
        \State \ind \kw{broadcast} $\sig{\Ready,a}$;
    \end{algorithmic}
    \end{minipage}
    \end{minipage}}
    \caption{Strongly unanimous 1/3-VA: core protocol, replacing~\Cref{fig:va-third}; the shared state and helpers (\Cref{fig:va-shared}) and the amplification (\Cref{fig:va-third-amplification}) are unchanged.
    Starred lines differ from~\Cref{fig:va-third}.
    Thresholds count distinct senders, and vote counts ignore equivocating voters.}%
    \label{fig:va-third-strong}
\end{figure*}

The variant in~\Cref{fig:va-third-strong} differs from 1/3-VA only in when a party sends \(\sig{\Candidate,\bot}\).
A party ignores \(\fn{request\_disable}()\) once it has voted (line~\ref{ln:va-third-strong-request-disable}), since a \(\bot\) candidate would prevent it from sending a commit message, which a commit may require.
To still guarantee VA-fallback-progress, a party that voted for \(x\) sends \(\sig{\Candidate,\bot}\) once it observes \(f+1\) parties that voted for values other than \(x\) or sent \(\sig{\Candidate,\bot}\) (line~\ref{ln:va-third-strong-conflicting-votes}).
This cannot happen under the conditions of VA-strong-unanimity, since one of these parties is correct.
Finally, a party sends \(\sig{\Candidate,\bot}\) upon \(f+1\) votes for an invalid value (line~\ref{ln:va-third-strong-invalid-votes}), which covers the case in which \(f+1\) correct parties vote for the same invalid value.

\begin{theorem}
\label{thm:va-third-strong}
If \(n\geq3f+1\), then strongly unanimous 1/3-VA (\Cref{fig:va-shared,fig:va-third-strong,fig:va-third-amplification}) implements view agreement and satisfies VA-strong-unanimity, with the same delays as 1/3-VA (\Cref{thm:va-third}): when message delay is bounded by \(\delta\), its prepare delay is \(\delta\), its commit delay is \(2\delta\), its fallback delay is \(3\delta\), its totality delay is at most \(3\delta\), and its Byzantine-silent totality delay is \(\delta\).
If \(n=3f+1\), it sends \(O(n^2)\) messages and \(O(n^2(L+\kappa))\) bits per instance.
\end{theorem}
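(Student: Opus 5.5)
The proof reuses the 1/3-VA argument of \Cref{app:va-correctness} almost verbatim. Only the rules for sending \(\sig{\Candidate,\bot}\) change, and \Cref{lem:va-third-amplification} does not depend on why candidates are sent, so it carries over unchanged. I would first re-establish \Cref{lem:va-third-basic}. Parts (a), (b), (d) and (e) are untouched. Part (c) weakens to: a \(\bot\) candidate is sent only upon \(\fn{request\_disable}()\) before voting, upon \(f+1\) votes for an invalid value, or upon the conflict rule of line~\ref{ln:va-third-strong-conflicting-votes}. VA-validity and VA-consistency (\Cref{lem:va-third-validity,lem:va-third-consistency}) then go through word for word. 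Consistency only uses the exclusion between commit and candidate messages, which \(\fn{send\_candidate}\) still enforces. Validity only uses non-\(\bot\) candidates. The prepare and commit delays, the totality delays and the Byzantine-silent totality argument of \Cref{lem:va-third-totality} are likewise unaffected, since they only involve vote cores, non-\(\bot\) candidates and ready amplification.

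Next I would prove VA-strong-unanimity, which also gives VA-unanimity. Assume every correct party votes for the valid value \(x\), and none requests disabling before voting. A request after voting is ignored by line~\ref{ln:va-third-strong-request-disable}. The invalid-vote rule needs \(f+1\) votes for an invalid value, so it needs a correct vote for it, and there is none. For the conflict rule, I argue by induction on time that no correct party sends a \(\bot\) candidate. The first correct party to do so would need \(f+1\) senders of votes for \(y\neq x\) or of \(\bot\) candidates, and at least one of these senders is correct, a contradiction. Hence correct parties send candidates only for \(x\), and the proof of \Cref{lem:va-third-unanimity} applies unchanged with delays \(\delta\) and \(2\delta\).

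The main obstacle is VA-fallback-progress with delay \(3\delta\). Every correct party has requested disabling by \(\tau\). If the instance is cleared at some correct party by \(\tau\), \Cref{lem:va-third-totality} gives \(\tau+3\delta\). Otherwise no correct party has sent a commit message, and we need all correct parties to eventually send the same candidate. Non-voters send \(\sig{\Candidate,\bot}\) at their request. Voters are partitioned by the value they voted for.

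\textbf{Case (i):} some valid value \(x\) has at least \(f+1\) correct voters. Then every correct party sends \(\sig{\Candidate,x}\) by \(\tau+\delta\), and amplification prepares \(x\) by \(\tau+3\delta\). The exclusion rule cannot block this, since a commit message for another value would need a vote core, and by \Cref{lem:va-third-basic}(a) any vote core is for \(x\).

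\textbf{Case (ii):} some invalid value has at least \(f+1\) correct voters. Every correct party then sends \(\bot\) through line~\ref{ln:va-third-strong-invalid-votes} by \(\tau+\delta\).

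\textbf{Case (iii):} no value has \(f+1\) correct voters. Each correct voter for \(x\) sees at least \(n-t-f\geq f+1\) correct parties that either voted for some \(y\neq x\) or requested disabling without voting. The latter have sent \(\bot\) candidates by \(\tau\). So by \(\tau+\delta\) every correct voter triggers the conflict rule and sends \(\sig{\Candidate,\bot}\). The exclusion check passes unless it has already sent a commit message, in which case some vote core exists and we are back in case (i). Amplification then disables by \(\tau+3\delta\).

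The delicate point is the interleaving with commit messages sent after \(\tau\). I would handle it exactly as above: any commit implies a vote core, hence case (i), which also gives \(3\delta\).

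Finally, for communication, the counting of \Cref{thm:va-third-complexity} is unchanged. \(\bot\) is still one candidate argument per party, sent at most once, so the same \(O(n^2)\) message bound holds.
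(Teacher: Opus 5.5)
Your proposal is correct and follows the paper's proof essentially step by step. The 1/3-VA lemmas are reused with only the \(\bot\)-candidate clause of \Cref{lem:va-third-basic}(c) replaced, strong unanimity follows from the first correct \(\bot\)-candidate sender, and your fallback cases (i)--(iii) are the paper's split: its case of \(f+1\) correct voters for \(x\), with \(a=x\) or \(a=\bot\), and the split-vote case.

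Two small touch-ups. First, in case (i), the fact that every correct commit message is for \(x\) follows from the count \((n-f-t)+(f+1)>n-t\), not directly from \Cref{lem:va-third-basic}(a), which only compares two vote cores. Second, the paper also shows that the \(3\delta\) fallback bound is attained, by rerunning the execution from the proof of \Cref{lem:va-third-fallback}, in which only the \(f\) non-voters send \(\bot\) candidates.
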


\begin{proof}
\Cref{lem:va-third-basic,lem:va-third-amplification,lem:va-third-validity,lem:va-third-consistency,lem:va-third-totality} and~\Cref{thm:va-third-complexity} hold for the variant, with the same proofs, except for the second sentence of~\Cref{lem:va-third-basic}(c), which \Cref{lem:va-third-strong-bot} below replaces.
None of these proofs uses that sentence; they rely only on the checks in \(\fn{send\_candidate}\) and on the rules that prepare, commit, and send ready messages, all of which are unchanged.
This gives VA-validity, VA-consistency, VA-totality with the totality delays, and the complexity bounds.
\Cref{lem:va-third-strong-unanimity} gives VA-strong-unanimity, and hence VA-unanimity, with the prepare and commit delays, and \Cref{lem:va-third-strong-fallback} gives VA-fallback-progress with the fallback delay.
\end{proof}

We keep the assumptions and notation of~\Cref{app:va-correctness}.

\begin{lemma}
\label{lem:va-third-strong-bot}
A correct party sends \(\sig{\Candidate,\bot}\) in three cases only: upon \(\fn{request\_disable}()\) if it has not voted, upon \(f+1\) votes for an invalid value, and, if it voted for some \(x\), upon observing \(f+1\) parties that voted for values other than \(x\) or sent \(\sig{\Candidate,\bot}\).
\end{lemma}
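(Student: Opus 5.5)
This follows by inspecting the call sites of \(\fn{send\_candidate}\) in \Cref{fig:va-third-strong}; it is the analogue of the second sentence of \Cref{lem:va-third-basic}(c), and I would argue it the same way. The only procedure that broadcasts \(\Candidate\) messages is \(\fn{send\_candidate}\), and it broadcasts \(\sig{\Candidate,a}\) only for its argument \(a\). So the first step is to list every call \(\fn{send\_candidate}(\bot)\) and show that no other call can have argument \(\bot\).

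The calls are the following. The \((f+1)\)-vote rule calls \(\fn{send\_candidate}(x)\) when \(\fn{valid}(x)\) holds. It calls \(\fn{send\_candidate}(\bot)\) in its else-branch, that is, upon \(f+1\) votes for an invalid value (line~\ref{ln:va-third-strong-invalid-votes}). The \(\fn{request\_disable}()\) handler calls \(\fn{send\_candidate}(\bot)\) only when \(\var{vote\_value}=\None\) (line~\ref{ln:va-third-strong-request-disable}). The conflicting-votes rule (line~\ref{ln:va-third-strong-conflicting-votes}) calls \(\fn{send\_candidate}(\bot)\) only when \(\var{vote\_value}=x\neq\None\) and the party has observed \(f+1\) distinct parties that sent \(\sig{\Vote,y}\) with \(y\neq x\) or sent \(\sig{\Candidate,\bot}\).

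Two small points remain. First, the valid branch of the vote rule never passes \(\bot\), because \(\bot\) is a special value that is not a block and hence not valid; I would state this explicitly as part of the convention on \(\bot\). Second, \(\var{vote\_value}\) accurately reflects whether the party has voted and for what. It is set to \(x\) exactly at the party's \(\fn{vote}(x)\) call (line~\ref{ln:va-third-strong-record-vote}) and is never changed afterward, since a correct party votes at most once per the interface. So ``\(\var{vote\_value}=\None\)'' means ``has not voted,'' and ``\(\var{vote\_value}=x\)'' means ``voted for \(x\).''

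There is no real obstacle. The only care needed is the \(\bot\)-is-invalid convention and the interpretation of \(\var{vote\_value}\); both are immediate from the pseudocode and the interface.
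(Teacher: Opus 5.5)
Your proposal is correct and matches the paper's proof. The paper's proof is the one-line observation that these three cases are exactly the call sites of \(\fn{send\_candidate}(\bot)\) in \Cref{fig:va-third-strong}. Your two extra points, on \(\var{vote\_value}\) and on the valid branch never passing \(\bot\), are sound. For the second one, you could avoid adding a convention: \(f+1\) votes for \(\bot\) would include a correct vote, and the paper already assumes that correct parties never vote for \(\bot\).
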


\begin{proof}
These are the calls to \(\fn{send\_candidate}(\bot)\) in~\Cref{fig:va-third-strong}.
\end{proof}

\begin{lemma}[VA-strong-unanimity]
\label{lem:va-third-strong-unanimity}
Suppose that all correct parties vote for the same valid value \(x\) and none calls \(\fn{request\_disable}()\) before voting, and let \(t_0\) be the time of the last correct vote.
Then every correct party prepares \(x\) by time \(t_0+\delta\) and commits \(x\) by time \(t_0+2\delta\).
\end{lemma}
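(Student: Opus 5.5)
The plan is to mirror the proof of \Cref{lem:va-third-unanimity}. The work is to show that, under the weaker hypothesis, no correct party ever sends \(\sig{\Candidate,\bot}\). I would begin by classifying the candidates that correct parties send, using \Cref{lem:va-third-strong-bot}, and I would treat each of its three cases in turn.

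The first case is \(\fn{request\_disable}()\) without a prior vote. By hypothesis, no correct party calls it before voting, so this case never arises.

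The second case is \(f+1\) votes for an invalid value. Any \(f+1\) counted votes include a correct voter, and every correct voter voted for the valid \(x\), so the value would be \(x\); this is a contradiction.

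The third case is the conflict rule, which requires observing \(f+1\) parties that each voted for some \(y\neq x\) or sent \(\sig{\Candidate,\bot}\). Here I would argue by considering the first correct party to send \(\sig{\Candidate,\bot}\) through this rule. Among the \(f+1\) observed parties at least one is correct. That correct party did not vote for \(y\neq x\), since all correct votes are for \(x\). It also did not send a \(\bot\) candidate earlier, because the other two cases are excluded and this party is the first to use the third. Hence the rule never fires at a correct party. One subtlety is the guard \(\var{vote\_value}=x\): a correct party votes for \(x\) itself, so the rule compares against \(x\). Equivocating voters are ignored in the counts and so cannot be counted as voting for \(y\).

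With \(\bot\) candidates excluded, correct parties send candidates only for \(x\), because a valid candidate needs \(f+1\) votes and therefore a correct vote. They also send commit messages only for \(x\), because a vote core contains a correct vote. The timing argument of \Cref{lem:va-third-unanimity} then goes through unchanged. By \(t_0+\delta\), each correct party has received at least \(n-t\geq n-f\) correct votes for \(x\), so it prepares \(x\). Since \(\var{sent\_candidate}\subseteq\{x\}\), it also sends \(\sig{\CommitMsg,x}\). By \(t_0+2\delta\), each correct party has received \(n-f\) commit messages for \(x\) and commits \(x\).

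The main obstacle is the circularity in the third case: the rule is triggered partly by \(\bot\) candidates, whose absence is exactly what is being proved. I would resolve this by taking the earliest correct \(\bot\) candidate, as above, and checking that this argument also handles a correct party that calls \(\fn{request\_disable}()\) after voting. Such a call does nothing because of the \(\var{vote\_value}\) guard.
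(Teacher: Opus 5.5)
Your proposal is correct and takes essentially the same approach as the paper. You rule out every correct \(\sig{\Candidate,\bot}\) by going through the three sending cases of \Cref{lem:va-third-strong-bot}, using the earliest-sender argument for the conflict rule. You then note that correct candidates and commit messages are only for \(x\), and reuse the timing argument of \Cref{lem:va-third-unanimity}.
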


\begin{proof}
We first show that correct parties send candidates only for \(x\).
A candidate for a value \(y\neq x\) requires a correct vote for \(y\) (\Cref{lem:va-third-basic}(c)), and so does a \(\bot\) candidate upon \(f+1\) votes for an invalid value \(y\), while every correct vote is for the valid value \(x\).
A correct party ignores its call to \(\fn{request\_disable}()\), since it has voted by then.
Hence, by~\Cref{lem:va-third-strong-bot}, the first correct party to send \(\sig{\Candidate,\bot}\), if any, does so upon observing \(f+1\) parties that voted for values other than its own vote \(x\) or sent \(\sig{\Candidate,\bot}\).
One of these parties is correct, but a correct party votes only for \(x\) and had not sent \(\sig{\Candidate,\bot}\) by then, a contradiction.
Moreover, a correct party sends \(\sig{\CommitMsg,y}\) only upon a vote core for \(y\), which contains a correct vote, so only for \(y=x\).

By \(t_0+\delta\), every correct party has received at least \(n-t\geq n-f\) correct votes for \(x\) and prepared \(x\); since its candidate set is a subset of \(\{x\}\), it has also sent \(\sig{\CommitMsg,x}\).
By \(t_0+2\delta\), every correct party has received at least \(n-f\) correct commit messages and committed \(x\).
\end{proof}

\begin{lemma}[VA-fallback-progress]
\label{lem:va-third-strong-fallback}
Suppose that every correct party calls \(\fn{request\_disable}()\), and let \(\tau\) be the time of the last such call.
Then the instance is cleared at every correct party by time \(\tau+3\delta\).
This bound is attained when \(n=3f+1\) and \(t=f\geq1\).
\end{lemma}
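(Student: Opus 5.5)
We split on whether the instance is cleared at some correct party by time \(\tau\).
If it is, VA-totality for the variant (\Cref{lem:va-third-totality}, which carries over by \Cref{thm:va-third-strong}) clears it at every correct party by \(\tau+3\delta\).
So assume it is cleared at no correct party by time \(\tau\).
Then no correct party has prepared a value, and hence none has sent a \(\CommitMsg\) message by \(\tau\).
Consequently, every \(\fn{send\_candidate}\) call by a correct party up to \(\tau\) passes the \(\var{sent\_commit}\) check.

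Next we partition the correct parties at time \(\tau\) into non-voters \(N\) and voters \(V\).
Every party in \(N\) has sent \(\sig{\Candidate,\bot}\) upon its call to \(\fn{request\_disable}()\) (line~\ref{ln:va-third-strong-request-disable}).
For the voters, consider the values voted for by correct parties.

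If some valid value \(x\) has at least \(f+1\) correct voters, every correct party receives those votes by \(\max\{\tau,\text{vote times}\}+\delta\le\tau+\delta\).
It then calls \(\fn{send\_candidate}(x)\), which must succeed unless the party has already sent a commit message, and that message could only be for \(x\) by \Cref{lem:va-third-basic}(a).
By \Cref{lem:va-third-amplification}(a) everyone then prepares \(x\) by \(\tau+3\delta\).
Invalid values with \(f+1\) correct voters are handled the same way, via line~\ref{ln:va-third-strong-invalid-votes}, giving \(\bot\) candidates from everyone.

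The remaining case is the main obstacle: every value has at most \(f\) correct voters.
We argue that every voter \(p\in V\) with vote \(x\) eventually sees \(f+1\) parties that voted for some \(y\neq x\) or sent \(\sig{\Candidate,\bot}\).
The correct parties other than \(x\)'s voters number at least \(n-t-f\ge f+1\).
Each of them either voted for a value other than \(x\) or is in \(N\) and sent \(\bot\).
So by \(\tau+\delta\) party \(p\) sends \(\sig{\Candidate,\bot}\) via line~\ref{ln:va-third-strong-conflicting-votes}.
We must check that this call is not blocked by a commit message sent in \((\tau,\tau+\delta]\).
A commit message requires a vote core, hence \(f+1\) correct voters for one value, which this case excludes.
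Thus all correct parties send \(\bot\) candidates by \(\tau+\delta\), and \Cref{lem:va-third-amplification}(a) disables everywhere by \(\tau+3\delta\).

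One subtlety is a party that cleared after \(\tau\) through ready messages before sending its candidate.
This is harmless: by \Cref{lem:va-third-amplification}(b) such a clearing at time \(s\le\tau+3\delta-2\delta\) propagates in time, and otherwise the candidate argument above already applies.
Case-splitting on the time of the first correct clearing handles it cleanly.

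For tightness, we reuse the execution from \Cref{lem:va-third-fallback}.
There \(f+1\) correct parties vote \(x\) and \(f\) correct parties do not vote.
The Byzantine parties complete a vote core only at \(p\), exactly at \(\tau\).
Then \(p\) sends a commit message, and the other voters see only \(f\) conflicting parties, the \(f\) non-voters, so no \(\bot\) quorum of \(2f+1\) forms.
Clearing therefore goes through \(x\)'s candidates and ready messages, finishing at \(\tau+3\delta\).
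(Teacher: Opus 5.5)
Your proof is correct and follows the paper's argument essentially step for step. It uses the same split on whether the instance is cleared by \(\tau\), the same case analysis on whether some value has \(f+1\) correct voters (giving candidates for \(x\), or for \(\bot\) if \(x\) is invalid, from every correct party by \(\tau+\delta\)) or none does (giving \(\bot\) candidates via the conflicting-votes rule), then \Cref{lem:va-third-amplification}(a), and the same tight execution.

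Two minor remarks. First, \Cref{lem:va-third-basic}(a) is about two vote cores, so your claim that a correct commit message can only be for \(x\) needs the paper's direct count \((n-f-t)+(f+1)>n-t\) between the sender's vote core and the \(f+1\) correct voters for \(x\). Second, your ``subtlety'' paragraph is unnecessary, since clearing upon ready messages does not affect \(\fn{send\_candidate}\), which checks only \(\var{sent\_candidate}\) and \(\var{sent\_commit}\).
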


\begin{proof}
If the instance is cleared at some correct party by time \(\tau\), \Cref{lem:va-third-totality} ensures that it is cleared at every correct party by time \(\tau+3\delta\).
Otherwise, no correct party has sent a commit message by time \(\tau\), since it would first have prepared a value.
A correct party votes only before calling \(\fn{request\_disable}()\), so every correct vote is cast by time \(\tau\), and every correct party that has not voted has sent \(\sig{\Candidate,\bot}\) upon its call, by time \(\tau\).

First suppose that \(f+1\) correct parties voted for the same value \(x\), and let \(a=x\) if \(x\) is valid and \(a=\bot\) otherwise.
A correct party that sends \(\sig{\CommitMsg,y}\) received \(n-f\) votes for the valid value \(y\), at least \(n-f-t\) of them correct, and \((n-f-t)+(f+1)>n-t\), so some correct party voted for both \(x\) and \(y\); hence \(y=x\) and \(a=x\).
Therefore the check of \(\var{sent\_commit}\) in \(\fn{send\_candidate}(a)\) passes at every correct party, and, by \(\tau+\delta\), every correct party has received the \(f+1\) votes for \(x\) and sent \(\sig{\Candidate,a}\).
By~\Cref{lem:va-third-amplification}(a), every correct party clears the instance with \(a\) by \(\tau+3\delta\).

Now suppose that at most \(f\) correct parties voted for each value.
Then no correct party ever sends a commit message, since a vote core contains at least \(f+1\) correct parties (\Cref{lem:va-third-basic}(a)).
By \(\tau+\delta\), every correct party \(p\) that voted for some \(x\) has received, from each of the at least \(n-t-f\geq f+1\) correct parties that did not vote for \(x\), a vote for another value or \(\sig{\Candidate,\bot}\), and has therefore sent \(\sig{\Candidate,\bot}\).
Thus every correct party has sent \(\sig{\Candidate,\bot}\) by \(\tau+\delta\), and, by~\Cref{lem:va-third-amplification}(a), every correct party disables by \(\tau+3\delta\).

The execution in the proof of~\Cref{lem:va-third-fallback} attains this bound here as well: \(p\) ignores its call to \(\fn{request\_disable}()\), since it has voted, and each other correct voter observes no vote for another value and only the \(f\) \(\bot\) candidates of the correct parties that have not voted.
Hence fewer than \(2f+1\) parties send \(\bot\) candidates, and the correct parties other than \(p\) clear the instance only through \(x\), at \(\tau+3\delta\).
\end{proof}

\clearpage
\section{Fast VA: A 1/5-Resilient Implementation}
\label{app:va-fifth}

\Cref{fig:va-fifth} presents Fast VA, our view agreement implementation for \(f<n/5\).
Like 1/3-VA, it uses the shared state and helper procedures of~\Cref{fig:va-shared}.
A valid value is committed, and thus prepared, directly from \(n-f\) votes.
After receiving \(2f+1\) votes for a value \(x\), a party broadcasts \(\sig{\Ready,x}\) if \(\fn{valid}(x)\), and \(\sig{\Disable}\) otherwise.
A party that has not voted also broadcasts \(\sig{\Disable}\) upon \(\fn{request\_disable}()\), and a party that has voted does so when it observes \(2f+1\) parties that voted for other values or sent \(\sig{\Disable}\).
A voter never sends \(\sig{\Disable}\) merely because it called \(\fn{request\_disable}()\): its vote may already belong to a commit quorum.
This gives VA-strong-unanimity (\Cref{def:va-strong-unanimity}).
The prepare and commit delays of Fast VA are \(\delta\), its fallback delay is \(3\delta\), its totality delay is \(2\delta\), and its Byzantine-silent totality delay is \(\delta\).

\begin{figure*}[!ht]
    \centering
    \small
    \setlength{\fboxsep}{2pt}%
    \fbox{%
    \begin{minipage}{\dimexpr\textwidth-2\fboxsep-2\fboxrule\relax}
    \begin{minipage}[t]{0.485\linewidth}
    \begin{algorithmic}[1]
        \State \kw{import} Fig.~\ref{fig:va-shared};
        \BlankLine
        \State \textbf{Private state:}
        \State \ind $\var{vote\_value} \gets \None$;
        \BlankLine
        \State \kw{upon} $\fn{vote}(x)$:
        \State \ind \kw{if} $\var{vote\_value} = \None$:
        \State \indd $\var{vote\_value} \gets x$;
        \State \indd \kw{broadcast} $\sig{\Vote, x}$;
        \BlankLine
        \State \kw{upon} $\fn{request\_disable}()$:
        \State \ind \kw{if} $\var{vote\_value} = \None$:
        \State \indd \kw{broadcast} $\sig{\Disable}$;
        \BlankLine
        \State \kw{upon} $\nrecv{\sig{\Vote, x}} \geq n-f$:
        \State \ind \kw{call} $\fn{do\_commit}(x)$;
        \BlankLine
        \State \kw{upon} $\nrecv{\sig{\Vote, x}} \geq 2f+1$:
        \State \ind \kw{if} $\fn{valid}(x)$:
        \State \indd \kw{broadcast} $\sig{\Ready, x}$;
        \State \ind \kw{else}:
        \State \indd \kw{broadcast} $\sig{\Disable}$;
        \algstore{vafifth}
    \end{algorithmic}
    \end{minipage}\hfill
    \begin{minipage}[t]{0.485\linewidth}
    \begin{algorithmic}[1]
        \algrestore{vafifth}
        \State \kw{upon} $\var{vote\_value} = x \neq \None$
        \State \indd and $\#\!\bigl[\,\sig{\Vote, y}: y \neq x$
        \State \inddd or $\sig{\Disable}\,\bigr] \geq 2f+1$:
        \State \ind \kw{broadcast} $\sig{\Disable}$;
        \BlankLine
        \State \kw{upon} $\nrecv{\sig{\Disable}} \geq 2f+1$:
        \State \ind \kw{broadcast} $\sig{\Ready, \bot}$;
        \BlankLine
        \State \kw{upon} $\nrecv{\sig{\Ready, a}} \geq f+1$:
        \State \ind \kw{broadcast} $\sig{\Ready, a}$;
        \BlankLine
        \State \kw{upon} $\nrecv{\sig{\Ready, a}} \geq 2f+1$:
        \State \ind \kw{call} $\fn{clear}(a)$;
    \end{algorithmic}
    \end{minipage}
    \end{minipage}}
    \caption{Fast VA: a view agreement implementation for \(f<n/5\).
    All thresholds count distinct senders.
    \(\sig{\Vote,\cdot}\) messages carry values and \(\sig{\Ready,\cdot}\) messages carry a value or \(\bot\); malformed messages are ignored.}%
    \label{fig:va-fifth}
\end{figure*}

\begin{theorem}
\label{thm:va-fifth}
If \(n\geq 5f+1\), then Fast VA (\Cref{fig:va-fifth}) implements view agreement and satisfies VA-strong-unanimity.
Moreover, when message delay is bounded by \(\delta\), its prepare and commit delays are \(\delta\), its fallback delay is \(3\delta\), its totality delay is \(2\delta\), and its Byzantine-silent totality delay is \(\delta\).
\end{theorem}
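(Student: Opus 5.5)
The plan mirrors the proof of \Cref{thm:va-third}: first a basic lemma collecting local facts, then one lemma per view agreement property. Throughout, \(t\le f\) denotes the actual number of Byzantine parties and \(n\ge5f+1\).

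\textbf{Basic lemma.} I would first establish the following facts.
\begin{enumerate}[label=(\alph*)]
\item Any two \((n-f)\)-vote sets intersect in \(n-2f\ge3f+1\) parties, hence in correct ones. An \((n-f)\)-set for \(x\) and a \((2f+1)\)-set for \(y\neq x\) intersect in at least \(f+1\) parties, hence in a correct party. So a commit of \(x\) excludes \(2f+1\) votes for any \(y\neq x\).
\item The first correct \(\sig{\Ready,a}\) is sent upon \(2f+1\) votes for valid \(a\), or, for \(a=\bot\), upon \(2f+1\) \(\Disable\) messages.
\item A correct \(\Disable\) arises in exactly three ways: from a non-voter upon \(\fn{request\_disable}()\), upon \(2f+1\) votes for an invalid value, or from a voter for \(x\) upon \(2f+1\) parties that voted for values \(y\neq x\) or sent \(\Disable\).
\end{enumerate}
VA-validity is then routine. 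A value prepared via \(\fn{do\_commit}\) has \(n-f\) votes, and one via \(\Ready\) traces back by (b) to \(2f+1\) votes. Either set contains a correct voter, and \(\fn{clear}\) checks validity.

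\textbf{VA-consistency (main obstacle).} Suppose a correct party commits \(x\) upon \(n-f\) votes. These contain a set \(C\) of at least \(n-2f\ge3f+1\) correct voters for \(x\). By (a), no other value gets \(n-f\) votes, and no correct party sends \(\sig{\Ready,y}\) via the vote rule for \(y\neq x\). It remains to exclude \(\sig{\Ready,\bot}\), that is, to show that no correct party receives \(2f+1\) \(\Disable\) messages, so at most \(f+1\) correct parties send \(\Disable\).

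Members of \(C\) are voters, so they cannot use the non-voter rule. They also do not see \(2f+1\) votes for an invalid value: such a set would intersect the \(n-f\) votes for \(x\) in \(f+1\) parties, one of them correct, while \(x\) is valid. So a member of \(C\) could only use the conflict rule. I would argue by considering the \emph{first} member of \(C\) to send \(\Disable\). At that moment the \(2f+1\) parties it observed, voting for other values or sending \(\Disable\), come from outside \(C\). There are only \(n-|C|\le 2f\) parties outside \(C\) (when \(n=5f+1\)), which is fewer than \(2f+1\), a contradiction. Hence no party of \(C\) ever sends \(\Disable\), and at most \(n-|C|\le2f\) parties send \(\Disable\) in total, of which at most \(2f-t\) are correct. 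The remaining step is to check that, together with \(t\) Byzantine senders, this stays below \(2f+1\) whenever \(n\ge5f+1\). If the count \(n-|C|\) only gives \(2f\), that bound already suffices. This counting is the delicate step.

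\textbf{Strong unanimity, totality, and fallback.}
\begin{itemize}
\item \emph{VA-strong-unanimity.} All correct parties vote for \(x\), so within \(\delta\) each sees \(n-t\ge n-f\) votes and commits. This gives prepare and commit delays of \(\delta\).
\item \emph{VA-totality.} Clearing via \(2f+1\) \(\Ready\) messages propagates in \(2\delta\) through the Bracha relay, as in \Cref{lem:va-third-amplification}(b). Preparing via \(n-f\) votes means at least \(n-2f\ge 2f+1\) correct votes, which reach everyone by \(\delta\); each party then sends \(\Ready\) and clears by \(2\delta\). The Byzantine-silent bound of \(\delta\) follows as in \Cref{lem:va-third-totality}.
\item \emph{VA-fallback-progress.} Let \(\tau\) be the time of the last \(\fn{request\_disable}()\).
\begin{itemize}
\item If some value \(x\) has \(2f+1\) correct votes, everyone gets them by \(\tau+\delta\), sends \(\Ready\) (or \(\Disable\) if \(x\) is invalid), and clears by \(\tau+3\delta\).
\item Otherwise, each voter for \(x\) sees, by \(\tau+\delta\), at least \(n-t-2f\ge2f+1\) correct parties that either voted otherwise or, as non-voters, sent \(\Disable\). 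So every correct party has sent \(\Disable\) by \(\tau+\delta\), sends \(\sig{\Ready,\bot}\) by \(\tau+2\delta\), and disables by \(\tau+3\delta\).
\end{itemize}
\end{itemize}
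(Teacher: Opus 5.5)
Your proposal is correct and follows essentially the same route as the paper. Both use the provenance lemma for \(\Ready\) messages with Bracha amplification, the ``first correct member of the commit quorum to send \(\sig{\Disable}\)'' counting argument for VA-consistency, and the same case split for VA-fallback-progress: a value with \(2f+1\) correct votes, valid or invalid, versus split votes.

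The step you flag as delicate is already closed by your own bound \(n-|C|\le f+t\le 2f\), which holds for every \(n\), not only \(n=5f+1\). So at most \(2f\) distinct parties ever send \(\sig{\Disable}\), and the remark about ``at most \(f+1\) correct senders'' should be dropped.
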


\begin{proof}
The lemmas below prove the properties one at a time: VA-validity (\Cref{lem:fifth-va-validity}), VA-consistency (\Cref{lem:fifth-va-consistency}), VA-strong-unanimity with the prepare and commit delays (\Cref{lem:fifth-va-unanimity}), VA-totality with the totality delays (\Cref{lem:fifth-va-totality}), and VA-fallback-progress with the fallback delay (\Cref{lem:fifth-va-fallback}).
The conditions of VA-unanimity imply those of VA-strong-unanimity, so VA-unanimity holds as well.
\end{proof}

In the rest of this appendix, we assume \(n\geq5f+1\) and let \(t\leq f\) be the actual number of Byzantine parties, so that \(n-t\geq4f+1\).
The time bounds in the lemmas below assume that message delay is bounded by \(\delta\).
Without this assumption, the same arguments show that the stated events eventually happen, since correct parties eventually receive every message sent to them by correct parties.

\begin{lemma}
\label{lem:fifth-va-ready}
If a correct party sends \(\sig{\Ready,a}\), then some correct party received, from \(2f+1\) parties, either votes for \(a\), if \(a\neq\bot\), in which case \(a\) is valid, or \(\sig{\Disable}\), if \(a=\bot\).
At least \(2f+1-t\geq f+1\) of these parties are correct.
\end{lemma}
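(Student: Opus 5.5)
The plan mirrors the proof of \Cref{lem:va-third-basic}(e): consider the \emph{first} correct party, say \(p\), to send \(\sig{\Ready,a}\), and determine which rule of \Cref{fig:va-fifth} it used.
Three rules send ready messages: the \((f+1)\)-ready relay rule, the \((2f+1)\)-vote rule (which sends \(\sig{\Ready,x}\) only when \(\fn{valid}(x)\)), and the \((2f+1)\)-\(\Disable\) rule (which sends \(\sig{\Ready,\bot}\)).

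First, \(p\) cannot have used the relay rule.
That rule needs \(f+1\) distinct senders of \(\sig{\Ready,a}\), at least one of them correct.
That correct sender would have sent \(\sig{\Ready,a}\) before \(p\), contradicting the choice of \(p\).
This step needs care only to order events: since finitely many events occur by any time and each local step is atomic, a first such correct send is well defined.

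Hence \(p\) used one of the other two rules.

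\begin{itemize}
\item If \(a\neq\bot\), then \(p\) broadcast \(\sig{\Ready,a}\) under the vote rule. It therefore received votes for \(a\) from \(2f+1\) distinct parties and checked \(\fn{valid}(a)\), so \(a\) is valid. Malformed messages are ignored, and the \(\Disable\) rule only ever produces \(\sig{\Ready,\bot}\), so no other rule yields \(\sig{\Ready,a}\) with \(a\neq\bot\).
\item If \(a=\bot\), then \(p\) used the \(\Disable\) rule, having received \(\sig{\Disable}\) from \(2f+1\) distinct parties. The vote rule never sends \(\sig{\Ready,\bot}\), because \(\bot\) is not a value carried by votes.
\end{itemize}

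Finally, at most \(t\) of these \(2f+1\) senders are Byzantine, so at least \(2f+1-t\geq f+1\) of them are correct, using \(t\leq f\).

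I expect no real obstacle. The only delicate point is the minimality argument ruling out the relay rule, together with the observation that \(\bot\) can arise only from the \(\Disable\) rule and a non-\(\bot\) value only from the vote rule.
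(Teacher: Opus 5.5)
Your proposal is correct and follows the paper's proof essentially line for line. Both arguments take the first correct sender of \(\sig{\Ready,a}\), rule out the \((f+1)\)-relay rule by minimality, and conclude that it used the \((2f+1)\)-vote rule with its validity check when \(a\neq\bot\), or the \((2f+1)\)-\(\Disable\) rule when \(a=\bot\), before bounding the Byzantine senders by \(t\); your remarks on why each rule yields only one kind of argument just make explicit what the paper leaves implicit.
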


\begin{proof}
Consider the first correct party to send \(\sig{\Ready,a}\).
It cannot use the relay rule, since \(f+1\) ready messages include one sent earlier by a correct party.
It therefore uses the rule for \(2f+1\) votes, which checks \(\fn{valid}(a)\), if \(a\neq\bot\), and the rule for \(2f+1\) \(\Disable\) messages if \(a=\bot\).
At most \(t\) of the senders are Byzantine.
\end{proof}

By~\Cref{lem:fifth-va-ready}, a correct party sends \(\sig{\Ready,a}\) only if \(a\) is \(\bot\) or a valid value.
As in~\Cref{app:va-proofs}, a party \emph{clears the instance with} \(a\) when it calls \(\fn{clear}(a)\) and thereby disables, if \(a=\bot\), or prepares \(a\), if \(a\) is a valid value.

\begin{lemma}
\label{lem:fifth-va-amplification}
Let \(a\) be \(\bot\) or a valid value.
\begin{enumerate}[label=(\alph*)]
    \item If every correct party has sent \(\sig{\Ready,a}\) by time \(s\), then every correct party clears the instance with \(a\) by time \(s+\delta\).
    \item If at least \(f+1\) correct parties have sent \(\sig{\Ready,a}\) by time \(s\), then every correct party clears the instance with \(a\) by time \(s+2\delta\).
    In particular, this holds if a correct party clears the instance with \(a\) upon \(2f+1\) ready messages at time \(s\).
\end{enumerate}
\end{lemma}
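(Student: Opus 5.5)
The argument mirrors \Cref{lem:va-third-amplification}, adapted to \Cref{fig:va-fifth}. There, a party clears with \(a\) upon \(2f+1\) ready messages and relays upon \(f+1\). Throughout, we use that \(n-t\geq4f+1\geq2f+1\). We also use that a correct party sends ready messages only for \(\bot\) or valid values (\Cref{lem:fifth-va-ready}). Consequently, \(\fn{clear}(a)\) indeed disables or prepares \(a\): the validity check inside \(\fn{clear}\) passes for valid \(a\).

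For (a), suppose every correct party has broadcast \(\sig{\Ready,a}\) by time \(s\). Then by time \(s+\delta\) every correct party has received ready messages for \(a\) from all \(n-t\geq2f+1\) correct parties. The \((2f+1)\)-ready rule therefore fires, and each correct party calls \(\fn{clear}(a)\), i.e., clears the instance with \(a\), by \(s+\delta\). Distinct senders are counted once each, and correct senders send consistent ready messages, so the threshold count is straightforward.

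For (b), suppose at least \(f+1\) correct parties have sent \(\sig{\Ready,a}\) by time \(s\). Every correct party receives these by \(s+\delta\), so the \((f+1)\)-relay rule fires and it broadcasts \(\sig{\Ready,a}\) by \(s+\delta\). Part (a), applied with \(s+\delta\) in place of \(s\), then gives clearing by \(s+2\delta\). For the ``in particular'' clause, suppose a correct party clears with \(a\) upon \(2f+1\) ready messages at time \(s\). At most \(t\leq f\) of those senders are Byzantine, so at least \(f+1\) correct parties sent \(\sig{\Ready,a}\) by time \(s\), and the main claim applies.

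The only subtle point is ensuring that relaying is not blocked. In Fast VA, unlike the candidate exclusion of 1/3-VA, the ready-relay and ready-clear rules have no side conditions besides the counts. So no exclusion check can intervene, and there is no real obstacle beyond the counting \(n-t\geq2f+1\) and \(2f+1-t\geq f+1\).
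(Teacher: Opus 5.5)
Your proposal is correct and follows the paper's proof essentially step for step. For (a) you count the \(n-t\geq2f+1\) correct ready messages arriving by \(s+\delta\). For (b) you use the \(f+1\) relay rule to reach the hypothesis of (a) at \(s+\delta\), and the \emph{in particular} clause uses the bound \(2f+1-t\geq f+1\).
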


\begin{proof}
(a)~By \(s+\delta\), every correct party has received at least \(n-t\geq2f+1\) correct ready messages for \(a\) and called \(\fn{clear}(a)\).

(b)~By \(s+\delta\), every correct party has received the \(f+1\) correct ready messages and has sent \(\sig{\Ready,a}\), by the relay rule if not earlier, so (a) applies at time \(s+\delta\).
Of \(2f+1\) ready messages received by time \(s\), at least \(2f+1-t\geq f+1\) were sent by correct parties.
\end{proof}

\begin{lemma}[VA-validity]
\label{lem:fifth-va-validity}
If a correct party prepares or commits \(x\), then \(x\) is valid and some correct party called \(\fn{vote}(x)\).
\end{lemma}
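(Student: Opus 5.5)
The proof will follow the pattern of \Cref{lem:va-third-validity}, distinguishing how a correct party can come to prepare or commit \(x\) in Fast VA.
First, every prepared or committed value is valid, since both \(\fn{clear}\) and \(\fn{do\_commit}\) (\Cref{fig:va-shared}) check \(\fn{valid}\) before adding to \(\var{prepared}\) or setting \(\var{committed}\).
It remains to show that some correct party called \(\fn{vote}(x)\).
In \Cref{fig:va-fifth}, a correct party prepares or commits a value \(x\) in only two ways: through \(\fn{do\_commit}(x)\) upon \(n-f\) votes for \(x\), or through \(\fn{clear}(x)\) upon \(2f+1\) \(\sig{\Ready,x}\) messages with \(x\neq\bot\). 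The call \(\fn{clear}(\bot)\) disables and prepares nothing.

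In the first case, I will argue that of the \(n-f\) distinct senders of votes for \(x\), at most \(t\leq f\) are Byzantine, so at least \(n-2f\geq 3f+1\geq1\) are correct. A correct party broadcasts \(\sig{\Vote,x}\) only in its \(\fn{vote}(x)\) handler, so it called \(\fn{vote}(x)\). Authenticated channels ensure that the vote really came from that party.

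In the second case, among the \(2f+1\) senders of \(\sig{\Ready,x}\), at least one is correct. By \Cref{lem:fifth-va-ready} applied with \(a=x\neq\bot\), some correct party received votes for \(x\) from \(2f+1\) parties, at least \(f+1\) of them correct. Any one of these correct voters called \(\fn{vote}(x)\).

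The only delicate point is the appeal to \Cref{lem:fifth-va-ready}: it rules out the possibility that ready messages for \(x\) were bootstrapped purely through the relay rule from Byzantine senders. Since that lemma is already established, the argument is routine. I also need to note that \(\sig{\Ready,\bot}\) never leads to preparing a value.
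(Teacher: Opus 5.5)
Your proof is correct and follows essentially the same route as the paper's. Validity comes from the \(\fn{valid}\) checks in \(\fn{clear}\) and \(\fn{do\_commit}\), and the voter is found in two cases: \(n-f\) votes contain a correct vote, and \(2f+1\) ready messages contain one from a correct sender, which \Cref{lem:fifth-va-ready} reduces to at least \(f+1\) correct votes for \(x\).
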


\begin{proof}
Every prepared or committed value is valid, since \(\fn{do\_commit}\) and \(\fn{clear}\) check \(\fn{valid}\) (\Cref{fig:va-shared}).
A correct party sends its unique vote only upon a call \(\fn{vote}(x)\).
A correct party commits, and prepares, \(x\) upon \(n-f\) votes for \(x\), at least \(n-f-t\geq1\) of them correct, or prepares \(x\) upon \(2f+1\) messages \(\sig{\Ready,x}\), one of them sent by a correct party.
In the second case, by~\Cref{lem:fifth-va-ready}, some correct party received at least \(f+1\) correct votes for \(x\).
\end{proof}

\begin{lemma}[VA-consistency]
\label{lem:fifth-va-consistency}
If a correct party commits \(x\), then no correct party prepares or commits a value other than \(x\), or disables.
\end{lemma}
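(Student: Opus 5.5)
Suppose a correct party commits \(x\). In Fast VA a commit happens only upon \(n-f\) votes for \(x\), so the party has a set \(Q\) of \(n-f\) voters for \(x\). At least \(n-f-t\geq n-2f\) of them are correct, and these correct parties voted only for \(x\). Call this set of correct voters \(C_x\), so \(|C_x|\geq n-2f\geq 3f+1\). The proof then rules out, in turn, every way a correct party could prepare, commit, or disable with something other than \(x\).

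\emph{Direct commits.} If a correct party committed or prepared \(y\neq x\) via \(\fn{do\_commit}\), it would hold \(n-f\) votes for \(y\), at least \(n-2f\) of them from correct parties. Together with \(C_x\) this gives \((n-2f)+(n-2f)>n-t\) correct parties, since \(n>4f-t\) follows from \(n\geq5f+1\). So some correct party voted twice, which is impossible.

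\emph{Clearing via ready messages for \(y\neq x\), \(y\) a value.} Suppose a correct party clears with \(a\neq x\). By~\Cref{lem:fifth-va-ready}, some correct party received votes for \(a\) from \(2f+1\) parties, at least \(f+1\) of them correct. Any correct party outside \(C_x\) has no constraint, so we need a counting argument: there are at most \(n-t-|C_x|\leq f\) correct parties outside \(C_x\). Hence at most \(f\) correct parties voted for anything other than \(x\), which is fewer than \(f+1\), a contradiction.

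\emph{Disabling via \(\sig{\Ready,\bot}\).} This is the main obstacle, because of the chained \(\Disable\) rule. By~\Cref{lem:fifth-va-ready}, some correct party received \(\Disable\) from \(2f+1\) parties, at least \(f+1\) of them correct. I would show that no party in \(C_x\) ever sends \(\Disable\), and then the same count applies: at most \(f\) correct parties lie outside \(C_x\), contradicting \(f+1\).

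To show no member of \(C_x\) sends \(\Disable\), I would argue by considering the first member \(p\in C_x\) to send it, and rule out each of the three triggers in turn.
\begin{enumerate}
\item The \(\fn{request\_disable}\) trigger is excluded because \(p\) voted before sending anything, so its \(\var{vote\_value}\neq\None\) at that point. This needs care about ordering; in fact \(p\)'s vote is already counted in \(Q\), so \(p\) voted.
\item The invalid-value trigger needs \(2f+1\) votes for an invalid \(z\). Such votes would include \(f+1\) correct voters of \(z\neq x\) (note \(x\) is valid since it is committed), again exceeding the \(\leq f\) correct non-\(C_x\) parties.
\item The conflicting trigger needs \(2f+1\) parties that voted for \(y\neq x\) or sent \(\Disable\). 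At least \(f+1\) of these are correct, and by minimality of \(p\) none of them is in \(C_x\). Yet at most \(f\) correct parties lie outside \(C_x\), a contradiction.
\end{enumerate}
Equivocating Byzantine senders only occupy the \(\leq t\) Byzantine slots and do not affect these counts.
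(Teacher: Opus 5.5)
Your proof follows the paper's argument almost step for step. The case split is the same: direct commits on \(n-f\) votes, clearing on ready messages via \Cref{lem:fifth-va-ready}, and disabling. You use the same ``first correct voter for \(x\) to send \(\sig{\Disable}\)'' argument over the three triggers. Your count of at most \(f\) correct parties outside the commit quorum is the paper's intersection with \(Q_x\) in different words.

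However, trigger~1 is not justified, even though you flag it. That \(p\)'s vote is counted in \(Q\) shows only that \(p\) voted at some point. It does not show that \(p\) voted before any call to \(\fn{request\_disable}()\), and Fast VA's code does not enforce that order. A party that first calls \(\fn{request\_disable}()\) has \(\var{vote\_value}=\None\), so it broadcasts \(\sig{\Disable}\), and a later call \(\fn{vote}(x)\) still broadcasts \(\sig{\Vote,x}\). (``Voted before sending anything'' is also not the right claim, since \(p\) may send a \(\Ready\) message before voting.)

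Without a further ingredient the lemma is in fact false. Take \(n=6\) and \(f=1\):
\begin{itemize}
\item one correct party requests disabling and then votes for \(x\);
\item three other correct parties and the Byzantine party also vote for \(x\), so some correct party receives \(n-f\) votes and commits \(x\);
\item the fifth correct party only requests disabling, and the Byzantine party also sends \(\sig{\Disable}\).
\end{itemize}
Every correct party then receives \(2f+1\) \(\Disable\) messages, sends \(\sig{\Ready,\bot}\), and disables.

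The missing ingredient is the view agreement interface: a correct party never votes after requesting disabling. Hence any call by \(p\) to \(\fn{request\_disable}()\) comes after its vote, when \(\var{vote\_value}\neq\None\), and sends nothing. This is exactly the fact the paper invokes at this point.

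With that sentence added, your proof is complete. In the ready-message case you should also make the routine remark that one of the \(2f+1\) ready messages comes from a correct party, so that \Cref{lem:fifth-va-ready} applies.
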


\begin{proof}
Suppose that a correct party commits \(x\), and let \(Q_x\) be the set of \(n-f\) parties whose votes for \(x\) it counted.
A set of \(2f+1\) parties intersects \(Q_x\) in at least \((n-f)+(2f+1)-n=f+1\) parties, hence in a correct party, which voted for \(x\) and for no other value.
Since \(n-f\geq2f+1\), no correct party receives \(n-f\), or even \(2f+1\), votes for a value other than \(x\).

Hence no correct party commits a value \(x'\neq x\), and, by~\Cref{lem:fifth-va-ready}, no correct party sends \(\sig{\Ready,x'}\).
The Byzantine parties alone cannot supply \(2f+1\) ready messages, so no correct party prepares \(x'\) upon ready messages either.
Moreover, \(x\) is valid, so no correct party sends \(\sig{\Disable}\) upon \(2f+1\) votes for an invalid value.

It remains to show that no correct party disables.
By~\Cref{lem:fifth-va-ready}, and since the Byzantine parties alone cannot supply \(2f+1\) messages \(\sig{\Ready,\bot}\), it suffices to show that no correct party receives \(\sig{\Disable}\) from \(2f+1\) parties.
We first show that no correct member of \(Q_x\) sends \(\sig{\Disable}\).
Otherwise, let \(q\) be the first correct member of \(Q_x\) to do so.
Party \(q\) voted, and a party never votes after requesting disabling, so \(q\) does not send \(\sig{\Disable}\) upon \(\fn{request\_disable}()\).
By the above, \(q\) therefore observed \(2f+1\) parties that voted for values other than \(x\) or sent \(\sig{\Disable}\).
The correct ones among them lie outside \(Q_x\), since correct members of \(Q_x\) voted only for \(x\) and, by the choice of \(q\), had not sent \(\sig{\Disable}\).
At most \(f\) correct parties lie outside \(Q_x\), and at most \(f\) parties are Byzantine, so there are at most \(2f<2f+1\) such parties, a contradiction.
Hence only the at most \(f\) correct parties outside \(Q_x\) and the at most \(f\) Byzantine parties send \(\sig{\Disable}\), fewer than \(2f+1\).
\end{proof}

\begin{lemma}[VA-strong-unanimity]
\label{lem:fifth-va-unanimity}
Suppose that all correct parties vote for the same valid value \(x\) and none calls \(\fn{request\_disable}()\) before voting, and let \(t_0\) be the time of the last correct vote.
Then every correct party commits, and thus prepares, \(x\) by time \(t_0+\delta\).
\end{lemma}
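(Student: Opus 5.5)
The proof follows the pattern of \Cref{lem:va-third-unanimity}, simplified by the absence of a separate commit round. Let \(x\) be the common valid value, and note that correct parties broadcast exactly one vote each, for \(x\), by time \(t_0\). Under these conditions I would show that every correct party commits \(x\) as soon as it has received the correct votes. The only thing to check is that nothing in Fast VA can block or preempt this commit. Unlike 1/3-VA, the commit rule of Fast VA has no exclusion precondition: a party calls \(\fn{do\_commit}(x)\) directly upon \(n-f\) votes for \(x\). So the argument is essentially a counting argument.

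The key step is as follows. By time \(t_0+\delta\), every correct party has received votes for \(x\) from all \(n-t\geq n-f\) correct parties. Thresholds count distinct senders, and each correct party sends a single vote, so these votes all count. The rule \(\nrecv{\sig{\Vote,x}}\geq n-f\) therefore fires at every correct party by \(t_0+\delta\). It then calls \(\fn{do\_commit}(x)\), which, because \(x\) is valid, calls \(\fn{clear}(x)\) (preparing \(x\)) and sets \(\var{committed}\gets x\) unless \(\var{committed}\) is already set.

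The remaining obstacle is excluding the possibility that \(\var{committed}\) was already set to some other value \(y\neq x\) at some correct party. Commitment to \(y\) requires \(n-f\) votes for \(y\), and at least \(n-f-t\geq1\) of them come from correct parties. But every correct vote is for \(x\), so this is impossible. VA-consistency (\Cref{lem:fifth-va-consistency}) is not even needed here, although it would also exclude it.

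One could also remark that no correct party calls \(\fn{request\_disable}()\) before voting, so no correct party sends \(\sig{\Disable}\) on that path. However, disabling has no bearing on the commit rule in Fast VA, so this observation is unnecessary for the lemma.

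Thus every correct party commits, and hence prepares, \(x\) by time \(t_0+\delta\). I expect no real difficulty. The only care needed is the bookkeeping on distinct senders and the fact that a correct party casts only the vote of its first \(\fn{vote}\) call.
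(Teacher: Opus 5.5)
Your proof is correct and follows the paper's argument: by \(t_0+\delta\) every correct party has received the \(n-t\geq n-f\) correct votes for \(x\) and calls \(\fn{do\_commit}(x)\), and this does not depend on any disable request. Your extra check that \(\var{committed}\) cannot already hold some \(y\neq x\) fills in a detail the paper leaves implicit, and you handle it correctly.
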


\begin{proof}
By \(t_0+\delta\), every correct party has received at least \(n-t\geq n-f\) correct votes for \(x\) and called \(\fn{do\_commit}(x)\), which prepares and commits the valid value \(x\).
This argument does not use the condition on \(\fn{request\_disable}()\).
\end{proof}

\begin{lemma}[VA-totality]
\label{lem:fifth-va-totality}
If a correct party prepares \(x\), or disables, at time \(t_0\), then every correct party prepares \(x\), or disables, respectively, by time \(t_0+2\delta\).
If the instance is Byzantine-silent, it does so by time \(t_0+\delta\).
\end{lemma}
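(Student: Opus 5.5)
I would mirror the proof of \Cref{lem:va-third-totality}, splitting by how the correct party first prepares or disables at time \(t_0\). In Fast VA there are only two such routes: a call to \(\fn{do\_commit}(x)\) upon \(n-f\) votes for \(x\), which prepares \(x\); and a call to \(\fn{clear}(a)\) upon \(2f+1\) messages \(\sig{\Ready,a}\), which prepares \(a\) or disables if \(a=\bot\). Disabling happens only through the second route, so for disabling only that route needs treatment.

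\textbf{Ready route.} If the party clears with \(a\) upon \(2f+1\) ready messages at time \(t_0\), then \Cref{lem:fifth-va-amplification}(b) applies directly. At least \(f+1\) of those messages come from correct parties that sent them by \(t_0\). Hence every correct party clears with \(a\) by \(t_0+2\delta\).

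\textbf{Vote route.} Suppose the party receives \(n-f\) votes for a valid \(x\) by \(t_0\). At least \(n-f-t\geq n-2f\geq 3f+1\geq 2f+1\) of these voters are correct, and they broadcast their votes by \(t_0\). So by \(t_0+\delta\) every correct party has received \(2f+1\) votes for \(x\). The ready rule has only the precondition \(\fn{valid}(x)\), which holds, so every correct party then sends \(\sig{\Ready,x}\). \Cref{lem:fifth-va-amplification}(a), applied at time \(s=t_0+\delta\), gives that every correct party prepares \(x\) by \(t_0+2\delta\). Equivocating voters are harmless here: correct voters send a single vote, so the counts of the correct senders are never discarded.

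The main point to check is that no exclusion rule blocks the ready message in the vote route. Unlike 1/3-VA, Fast VA's ready-on-votes rule has no commit/candidate exclusion, so this step is easy. I also need to confirm that preparing via \(\fn{clear}(x)\) through ready messages matches ``prepares \(x\)''; this holds because \(x\) is valid.

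\textbf{Byzantine-silent bound.} Assume Byzantine parties send nothing in the instance. Whatever messages triggered the clearing at \(t_0\) (\(n-f\) votes or \(2f+1\) readies) were all sent by correct parties by \(t_0\). Every correct party receives the same messages by \(t_0+\delta\). The same rule fires there, since its only side condition is validity, which all parties evaluate identically. Hence every correct party commits or clears with the same value by \(t_0+\delta\). For disabling, the same argument applies to the \(2f+1\) messages \(\sig{\Ready,\bot}\).
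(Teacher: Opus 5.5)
Your proof is correct and follows essentially the same route as the paper's: it uses Lemma~\ref{lem:fifth-va-amplification}(b) for clearing upon ready messages, at least \(n-2f\geq2f+1\) correct votes triggering \(\sig{\Ready,x}\) everywhere by \(t_0+\delta\) followed by Lemma~\ref{lem:fifth-va-amplification}(a) for the commit route, and redelivery of the same correct messages within \(\delta\) for the Byzantine-silent case. Your added remarks, that Fast VA's ready-on-votes rule has no exclusion precondition and that equivocation cannot discard correct votes, are the right sanity checks and are implicit in the paper's argument.
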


\begin{proof}
A correct party disables only upon \(2f+1\) ready messages, and prepares a value either upon \(2f+1\) ready messages or by committing it upon \(n-f\) votes.

First, if a correct party clears the instance with \(a\) upon \(2f+1\) ready messages at time \(t_0\), every correct party does so by \(t_0+2\delta\) (\Cref{lem:fifth-va-amplification}(b)).

Second, suppose that a correct party commits \(x\) at time \(t_0\).
Its \(n-f\) votes for \(x\) include at least \(n-f-t\geq n-2f\geq2f+1\) correct ones.
By \(t_0+\delta\), every correct party has received these votes and sent \(\sig{\Ready,x}\), since \(x\) is valid, and, by~\Cref{lem:fifth-va-amplification}(a), it prepares \(x\) by \(t_0+2\delta\).

For the Byzantine-silent totality delay, suppose that Byzantine parties send no messages in the instance.
A correct party prepares or disables at time \(t_0\) upon \(n-f\) votes or \(2f+1\) ready messages, all broadcast by correct parties by time \(t_0\).
By \(t_0+\delta\), every correct party has received the same messages.
The corresponding rules have no precondition other than validity, which every party evaluates alike, so every correct party clears the instance with the same value or \(\bot\) by \(t_0+\delta\).
\end{proof}

\begin{lemma}[VA-fallback-progress]
\label{lem:fifth-va-fallback}
Suppose that every correct party calls \(\fn{request\_disable}()\), and let \(\tau\) be the time of the last such call.
Then the instance is cleared at every correct party by time \(\tau+3\delta\).
This bound is attained whenever \(f\geq1\).
\end{lemma}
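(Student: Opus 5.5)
I will follow the structure of the fallback proofs for 1/3-VA (\Cref{lem:va-third-fallback,lem:va-third-strong-fallback}) and split on whether the instance is already cleared at some correct party by time \(\tau\).
If it is, \Cref{lem:fifth-va-totality} clears it at every correct party by \(\tau+2\delta\).
Otherwise, no correct party has committed by \(\tau\), and no correct party has cleared upon ready messages. Every correct vote is cast by \(\tau\), since a correct party never votes after requesting disabling. Every correct party that did not vote has sent \(\sig{\Disable}\) upon its call, by \(\tau\).

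In the uncleared case I split on the distribution of correct votes.

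\emph{Case A:} some value \(x\) has at least \(2f+1\) correct voters. By \(\tau+\delta\), every correct party receives \(2f+1\) votes for \(x\). If \(x\) is valid, it sends \(\sig{\Ready,x}\), and \Cref{lem:fifth-va-amplification}(a) clears the instance with \(x\) by \(\tau+2\delta\). If \(x\) is invalid, every correct party sends \(\sig{\Disable}\) by \(\tau+\delta\). Each then sends \(\sig{\Ready,\bot}\) by \(\tau+2\delta\) upon \(n-t\geq2f+1\) disable messages, and disables by \(\tau+3\delta\).

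\emph{Case B:} every value has at most \(2f\) correct voters. Fix a correct voter \(p\) with vote \(x\). The correct parties not voting for \(x\) number at least \(n-t-2f\geq 2f+1\), using \(n-t\geq4f+1\). Each of them has, by \(\tau\), either voted for another value or sent \(\sig{\Disable}\). By \(\tau+\delta\), \(p\) observes \(2f+1\) such parties and broadcasts \(\sig{\Disable}\). Hence every correct party has sent \(\sig{\Disable}\) by \(\tau+\delta\), sends \(\sig{\Ready,\bot}\) by \(\tau+2\delta\), and disables by \(\tau+3\delta\) (\Cref{lem:fifth-va-amplification}(a)).

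The main point to check is that no exclusion rule blocks these steps. Fast VA has no \(\var{sent\_commit}\) check, and the \(\Disable\) and \(\Ready\) rules are unconditional. The only care needed is the \(n-t\geq4f+1\) arithmetic in Case B.

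For tightness with \(f\geq1\), I will adapt the 1/3-VA example, where clearing must go through a relay chain ending one delay after \(\tau\).
I would try the following configuration. Let \(2f\) correct parties vote for a valid \(x\) early. Let the remaining correct parties request disabling without voting, sending \(\sig{\Disable}\). Let one correct voter \(p\) cast the \((2f+1)\)st correct vote for \(x\) at time \(\tau\), with Byzantine parties silent.
Before \(\tau\), no party has \(2f+1\) votes for \(x\). The \(x\)-voters see only the disable senders as conflicting, which number at most \(n-t-2f-1\); I must arrange for this to be below \(2f+1\), for example with \(n=5f+1\). With that count, no correct party has more than \(2f\) disable messages, so no ready messages exist yet.
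Other parties then receive \(p\)'s vote at \(\tau+\delta\), send \(\sig{\Ready,x}\), and by \(\tau+2\delta\) reach \(2f+1\) ready messages. That yields only \(2\delta\), so the example must instead route clearing through the relay rule. Fixing this example with the exact counts is the step I expect to be hardest.
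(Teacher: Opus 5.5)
The upper-bound part of your proof is correct and follows the paper's argument. You use the same three cases: at least \(2f+1\) correct votes for one valid value, at least \(2f+1\) for one invalid value, or at most \(2f\) correct votes per value. You also use the same count \(n-t-2f\geq2f+1\) in the split-vote case. Your preliminary split on whether the instance is already cleared by \(\tau\) is harmless but unnecessary: Fast VA has no exclusion rule, so the case analysis works either way.

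The gap is the second claim of the lemma, that \(3\delta\) is attained whenever \(f\geq1\). You give no witness execution, and the one you sketch cannot work. Your own case analysis shows why. It places \(2f+1\) correct votes on a valid \(x\), which is your Case~A with valid \(x\), and there every correct party clears by \(\tau+2\delta\). A tight example must therefore avoid that case, and also avoid any clearing by \(\tau\). The extra delay does not come from the ready-relay rule. It comes from the fact that a voter ignores \(\fn{request\_disable}()\) and sends \(\sig{\Disable}\) only after receiving conflicting votes, which costs one message delay before the \(\sig{\Disable}\to\sig{\Ready,\bot}\to\) clear chain even starts.

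The paper's witness is as follows. Every correct party votes for a different value and then calls \(\fn{request\_disable}()\), all at time \(\tau\). The Byzantine parties send nothing, and every message takes exactly \(\delta\).
\begin{enumerate}
\item No value ever has more than one vote. Hence no vote-threshold rule fires: no \(\sig{\Ready,x}\), no commit, and no \(\sig{\Disable}\) for an invalid value.
\item The disable requests have no effect, because every correct party has already voted.
\item At \(\tau+\delta\), each correct party sees \(n-t-1\geq4f\geq2f+1\) votes for other values and only then sends \(\sig{\Disable}\).
\item The first ready messages, \(\sig{\Ready,\bot}\), are therefore sent at \(\tau+2\delta\), so no correct party clears before \(\tau+3\delta\).
\end{enumerate}
Having all correct parties vote for the same invalid value at \(\tau\) would work equally well.
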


\begin{proof}
A party never votes after requesting disabling, so every correct vote is cast by time \(\tau\), and every correct party that has not voted has sent \(\sig{\Disable}\) upon its call, by time \(\tau\).
By \(\tau+\delta\), every correct party has received all of these messages.
We distinguish three cases.

If at least \(2f+1\) correct parties voted for the same valid value \(x\), every correct party has sent \(\sig{\Ready,x}\) by \(\tau+\delta\) and, by~\Cref{lem:fifth-va-amplification}(a), prepares \(x\) by \(\tau+2\delta\).

If at least \(2f+1\) correct parties voted for the same invalid value, every correct party has sent \(\sig{\Disable}\) by \(\tau+\delta\).

Otherwise, at most \(2f\) correct parties voted for each value.
Let \(p\) be a correct party that voted for some \(x\).
At least \((n-t)-2f\geq n-3f\geq2f+1\) correct parties voted for other values or did not vote and sent \(\sig{\Disable}\), so \(p\) has sent \(\sig{\Disable}\) by \(\tau+\delta\).
Hence, in this case too, every correct party has sent \(\sig{\Disable}\) by \(\tau+\delta\).

In the last two cases, by \(\tau+2\delta\), every correct party has received at least \(n-t\geq2f+1\) \(\Disable\) messages and sent \(\sig{\Ready,\bot}\), and, by~\Cref{lem:fifth-va-amplification}(a), it disables by \(\tau+3\delta\).

The bound \(3\delta\) is attained when \(f\geq1\): let all correct parties vote for distinct values and call \(\fn{request\_disable}()\) at time \(\tau\), let the Byzantine parties send no messages, and let every message be received after exactly \(\delta\).
No value receives \(2f+1\) votes, and no party sends \(\sig{\Disable}\) before \(\tau+\delta\), when every correct party observes \(n-t-1\geq4f\geq2f+1\) votes for other values and sends \(\sig{\Disable}\).
Hence no correct party sends a ready message before \(\tau+2\delta\), so the instance is cleared at no correct party before \(\tau+3\delta\).
\end{proof}

\subsubsection{Consensus guarantees.}
Instantiating Generic Simplex (\Cref{fig:consensus-va}) with Fast VA gives a 1/5-resilient consensus protocol.
Since Fast VA satisfies VA-strong-unanimity, applying the timeout condition~\eqref{eq:consensus-timeout} to the delay bounds of \Cref{thm:va-fifth} shows that a view timeout \(\Delta_{\mathrm{to}}\geq4\Delta\) suffices (\Cref{app:strong-unanimity}).
Using \(d_t^s(\delta)=\delta\), the good-case, steady-state, and eventual worst-case commit latencies, block time, and \(k\)-gap latency satisfy
\begin{gather*}
\ell_g\leq2\delta,\qquad \ell_s\leq2\delta,\qquad \ell_w\leq3\delta,\qquad \text{block time}\leq2\delta,\\
k\text{-gap latency}\leq3\delta+k\bigl(\Delta_{\mathrm{to}}+3\delta\bigr).
\end{gather*}

\subsubsection{Communication complexity.}
Suppose \(n=5f+1\), and count messages sent by correct parties, with each threshold rule executing at most once per value.
Every value receiving \(2f+1\) votes at a correct party has at least \(f+1\) correct voters, so there are at most \(\lfloor n/(f+1)\rfloor<5\) such values, including invalid ones, since correct parties vote only once.
By~\Cref{lem:fifth-va-ready}, these are also the only non-\(\bot\) arguments of correct ready messages; hence each correct party broadcasts only \(O(1)\) \(\Vote\), \(\Disable\), and \(\Ready\) messages.
The instance therefore sends \(O(n^2)\) messages and \(O(n^2(L+\kappa))\) bits, with \(L\) and \(\kappa\) as in~\Cref{app:va-complexity}.
Generic Simplex adds only a leader proposal broadcast per view, preserving these bounds.

\clearpage
\section{Synchronous Blockchain Consensus}
\label{app:synchronous}

This appendix uses the synchronous setting of Appendix~\ref{app:extended-model}.
\Cref{app:synchronous-consensus} shows that Generic Simplex (\Cref{fig:consensus-va}) remains correct in this setting with view agreement implementations that need votes not to lag too far behind disable requests, and states its guarantees in terms of the delay parameters.
\Cref{app:synchronous-va} presents such an implementation for \(f<n/4\), which gives the same commit latencies as Generic Simplex with Fast VA (\Cref{app:va-fifth}).

\subsection{Generic Simplex in Synchrony}
\label{app:synchronous-consensus}

In the synchronous setting, \(\mathrm{GST}=0\) and every execution is \(\Delta\)-bounded from time~\(0\), in the sense of~\Cref{app:consensus-setting}.
As in the eventually synchronous case, a party keeps processing the messages of every view agreement instance after leaving the corresponding view.
Write \(D=\Delta_{\mathrm{to}}\) for the view timeout.

A synchronous view agreement implementation may guarantee VA-consistency only when votes do not lag too far behind disable requests, as does the implementation of~\Cref{app:synchronous-va}.
Such an implementation does not satisfy the last assumption of~\Cref{app:consensus-setting}, that the guarantees and delay bounds of view agreement hold for every pattern and timing of input calls that respects the interface.
We replace that assumption with the following contract.

\begin{definition}[Lag bound]
\label{def:lag-bound}
A view agreement implementation has \emph{lag bound} \(T\geq0\) if its guarantees and delay bounds, read as in~\Cref{app:consensus-setting}, hold in every synchronous execution in which correct parties respect the interface and no correct party calls \(\fn{vote}\) on the instance more than \(T\) after a correct party calls \(\fn{request\_disable}\) on it.
\end{definition}

An implementation that satisfies the assumptions of~\Cref{app:consensus-setting}, such as those of~\Cref{tab:signature-free-va-latency}, has every lag bound.
The condition constrains only calls that occur, so parties that leave an already-cleared view without calling either function do not affect it.

\begin{lemma}
\label{lem:sync-lag}
Consider a synchronous execution of Generic Simplex with an implementation of lag bound \(T\geq d_t(\Delta)\).
Then no correct party calls \(\fn{vote}\) on any instance more than \(T\) after a correct party calls \(\fn{request\_disable}\) on it.
Hence the guarantees and delay bounds of view agreement hold for every instance in the execution.
\end{lemma}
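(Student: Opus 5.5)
The plan is to argue by contradiction from the \emph{earliest} violation. This breaks the circularity that the delay bounds we want to use are themselves only guaranteed when the lag condition holds. Suppose some correct party \(p\) calls \(\fn{vote}\) on the instance of a view \(v\) at a time \(t^*\) more than \(T\) after some correct party \(q\) called \(\fn{request\_disable}\) on it, and choose such a violating call with \(t^*\) minimal over all instances. By~\Cref{lem:consensus-basic}(b,c), \(q\) made its call at time \(e_q(v)+D\) while still in view \(v\). So \(t^*>e_q(v)+D+T\).

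The key local observation is about when \(p\) can vote. By~\Cref{lem:consensus-basic}(b), \(p\) votes on instance \(v\) only while in view \(v\), and so only in the window from \(e_p(v)\) until it either leaves \(v\) or requests disabling. By~\Cref{lem:consensus-basic}(c), if \(p\) is still in \(v\) at time \(e_p(v)+D\), it requests disabling then. After that the guard at line~\ref{ln:consensus-vote-guard} and the catch-up rule both forbid voting. A vote at exactly the timer's deadline is processed before the timer fires, but it still occurs no later than \(e_p(v)+D\). Hence \(t^*\le e_p(v)+D\).

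It then suffices to show \(e_p(v)\le e_q(v)+d_t(\Delta)\). This gives \(t^*\le e_q(v)+d_t(\Delta)+D\le e_q(v)+D+T\), which contradicts the choice of \(t^*\). The inequality is the synchronous analogue of~\Cref{lem:consensus-view-synchronization}, whose proof uses only the totality delay of the instances of views below \(v\). Party \(q\) entered \(v\) at \(e_q(v)\), so every view \(v'<v\) was cleared at \(q\) by then. The totality delay then clears each such \(v'\) at \(p\) by \(e_q(v)+d_t(\Delta)\), and \(p\) enters \(v\) by~\Cref{lem:consensus-basic}(a). The execution is \(\Delta\)-bounded from \(0\), so \(\max\{s,T_0\}=s\) throughout.

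The main obstacle is justifying these uses of the totality delay for the lower views, since the lag-bound contract only promises the delay bounds in executions that satisfy the condition globally. I would follow the pattern of~\Cref{lem:va-continuation}. Take the prefix \(\pi\) of the execution strictly before \(t^*\); by minimality it contains no violating vote on any instance. Extend \(\pi\) to a synchronous execution \(\sigma\) in which correct parties make no further \(\fn{vote}\) calls on any instance after any correct \(\fn{request\_disable}\) on that instance, and otherwise respect the interface. Then \(\sigma\) satisfies the hypothesis of~\Cref{def:lag-bound}, so the totality delays hold in \(\sigma\). The preparations and disablings we need occur by \(e_q(v)+d_t(\Delta)\le t^*-D<t^*\), hence inside \(\pi\). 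Because an instance's behavior up to a time depends only on the prefix, they also occur in the real execution.

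One subtlety is that \(p\)'s own timer or advance may be driven by the same outputs. Only the outputs of views below \(v\), and \(p\)'s entry time into \(v\), are needed, and these are determined before \(t^*\). Having excluded violations, the second sentence of the lemma is then immediate from~\Cref{def:lag-bound}.
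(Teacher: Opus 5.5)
Your proof is correct but is organized differently from the paper's. The paper uses strong induction on the view number \(v\). The claim at each step is that instance \(v\) satisfies the lag condition in the actual execution. \Cref{def:lag-bound} states that condition per instance (``on the instance''), so once it holds for every view below \(v\), the totality delays of those instances hold directly in the real execution. The argument of \Cref{lem:consensus-view-synchronization} then gives \(e_u(v)\le t_f(v)+d_t(\Delta)\le e_w(v)+d_t(\Delta)\), and the timer observation you also make closes the step. No alternative execution is needed.

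Your minimal-violation-in-time argument, with the extension \(\sigma\) in the style of \Cref{lem:va-continuation,lem:sync-continuation}, is built around a global reading of the contract: delay bounds promised only when every instance satisfies the lag condition. That reading is stronger than what \Cref{def:lag-bound} says, so the obstacle you identify is partly self-imposed. Still, your argument works under either reading, which the paper's induction does not do without modification.

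The cost is two points you leave implicit and should state:
\begin{itemize}
\item A minimal violating call exists, because only finitely many events occur in any bounded interval.
\item \(\pi\) extends to a legal synchronous \(\sigma\), because messages in transit at the end of \(\pi\) can still be delivered by their \(\Delta\)-deadlines, as argued in \Cref{lem:sync-continuation}.
\end{itemize}
For the second sentence of the lemma, also cite \Cref{lem:consensus-basic}(b) for interface respect, which \Cref{def:lag-bound} requires alongside the lag condition.
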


\begin{proof}
Correct parties respect the interface of every instance by~\Cref{lem:consensus-basic}(b).
Fix a view \(v\), let \(u\) be a correct party that votes in its instance, and let \(w\) be a correct party that calls \(\fn{request\_disable}()\) on it.
By~\Cref{lem:consensus-basic}(c), \(w\) does so at time \(e_w(v)+D\).
By~\Cref{lem:consensus-basic}(b,c), \(u\) votes while in view \(v\), and not after requesting disabling, which it does at time \(e_u(v)+D\) if it is still in view \(v\) then.
Hence \(u\) votes by time \(e_u(v)+D\), at most \(e_u(v)-e_w(v)\) after the request of \(w\).

It remains to show that \(e_u(v)-e_w(v)\leq d_t(\Delta)\), since \(d_t(\Delta)\leq T\).
We proceed by strong induction on \(v\).
All correct parties enter view~\(1\) at time~\(0\).
For \(v>1\), the induction hypothesis gives the condition of~\Cref{def:lag-bound} for every view below \(v\), so the totality delay of those instances holds with \(\delta=\Delta\) and \(T_0=0\).
The proof of~\Cref{lem:consensus-view-synchronization} uses only these bounds, so every correct party enters view \(v\) by time \(t_f(v)+d_t(\Delta)\).
Since \(e_w(v)\geq t_f(v)\), this bounds \(e_u(v)-e_w(v)\).
\end{proof}

The view timeout \(D\) cancels in this argument because the lag is measured from disable requests, which a correct party makes exactly \(D\) after entering the view.
A contract that measured it from a party's first call on the instance would need \(T\geq D+d_t(\Delta)\), since a party may vote as soon as it enters the view.
If VA-totality and its delay bound hold without the condition of~\Cref{def:lag-bound}, as for the implementation of~\Cref{app:synchronous-va}, the induction is unnecessary: \Cref{lem:consensus-view-synchronization} applies directly.

\begin{lemma}
\label{lem:sync-continuation}
\Cref{lem:va-continuation} holds for an implementation with a lag bound in synchronous executions.
\end{lemma}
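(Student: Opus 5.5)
The plan is to rerun the proof of \Cref{lem:va-continuation} unchanged, and to check the one place where it uses the now-dropped assumption that view agreement guarantees hold for \emph{every} interface-respecting pattern of input calls. That place is the construction of the extension \(\sigma\) of the prefix \(\pi\). In \(\sigma\), every correct party that has not voted in \(\pi\) calls \(\fn{vote}(x)\), and no correct party ever calls \(\fn{request\_disable}()\). Once we show that \(\sigma\) satisfies the condition of \Cref{def:lag-bound}, the guarantees of the implementation apply to \(\sigma\). The rest of the proof then goes through verbatim: VA-unanimity gives a commit of \(x\) in \(\sigma\), VA-consistency excludes disabling in \(\pi\), and the prepare delay gives part 2.

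The key step is the lag condition itself. In \(\pi\), by hypothesis, no correct party calls \(\fn{request\_disable}()\) on the instance. We build \(\sigma\) so that no correct party ever calls it. Hence \(\sigma\) contains no correct disable request at all, and the condition \emph{no correct party calls \(\fn{vote}\) more than \(T\) after a correct party calls \(\fn{request\_disable}\)} holds vacuously, whenever the added votes are cast.

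Two further points must be checked in \(\sigma\). First, \(\sigma\) is synchronous: we keep \(\mathrm{GST}=0\), deliver all correct-to-correct messages within \(\Delta\) (in fact within the \(\delta\) of the prefix, so \(\sigma\) stays \(\delta\)-bounded from \(T_0\)), and schedule the extra \(\fn{vote}(x)\) calls right after the end of \(\pi\). Second, correct parties respect the interface in \(\sigma\): each votes at most once and never after requesting disabling, which holds trivially since there are no requests. With these in place, the guarantees and delay bounds, read as in \Cref{app:consensus-setting}, hold for \(\sigma\).

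The main obstacle is a subtlety in how \Cref{lem:va-continuation} is used. It is applied to a prefix of the actual execution of Generic Simplex, while \(\sigma\) is merely a hypothetical execution of the \emph{instance}. We therefore need that the implementation's behavior in \(\pi\) depends only on the inputs and deliveries within \(\pi\), so that \(\pi\) is literally a prefix of \(\sigma\). This holds because view agreement instances run independently and an extension changes nothing already in \(\pi\). Everything else is identical to the original proof.
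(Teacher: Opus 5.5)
Your proposal is correct and matches the paper's proof. No correct party calls \(\fn{request\_disable}()\) in \(\sigma\), so the condition of \Cref{def:lag-bound} holds vacuously. You also choose \(\sigma\) to be both \(\Delta\)-bounded from \(0\) and \(\delta\)-bounded from \(T_0\), after which the original argument goes through unchanged.

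The one detail the paper states more carefully is why that timing requirement can be met. A message sent at time \(s\) and still in transit at the end of \(\pi\) need not be deliverable by \(s+\delta\), as your phrase ``within the \(\delta\)'' might suggest. But \(\pi\) is a prefix of an execution with both timing properties, so each such message can be delivered by both \(s+\Delta\) and \(\max\{s,T_0\}+\delta\).
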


\begin{proof}
In the proof of \Cref{lem:va-continuation}, choose the extension \(\sigma\) of \(\pi\) to be \(\Delta\)-bounded from~\(0\) as well as \(\delta\)-bounded from \(T_0\).
This is possible because \(\pi\) is a prefix of an execution with both properties, so every message still in transit at the end of \(\pi\) can be delivered by both deadlines.
No correct party calls \(\fn{request\_disable}()\) on the instance in \(\sigma\), so the condition of \Cref{def:lag-bound} holds vacuously, however late the missing votes are supplied.
The guarantees and delay bounds of view agreement therefore apply to \(\sigma\), and the rest of the proof is unchanged.
\end{proof}

\begin{theorem}
\label{thm:sync-consensus}
In the synchronous setting, suppose that Generic Simplex uses a view agreement implementation with lag bound \(T\geq d_t(\Delta)\), and that the timeout condition~\eqref{eq:consensus-timeout} holds.
Then the results of~\Cref{app:consensus-proofs} hold with \(\mathrm{GST}=0\), including \Cref{thm:consensus-totality,thm:consensus-liveness,cor:consensus-eventual-commit,thm:consensus-latency}, the view-duration bound of~\Cref{cor:consensus-view-duration}, and the commit bound of~\Cref{lem:consensus-correct-leader} for views with correct leaders.
Moreover, \Cref{thm:blockchain-consistency,thm:blockchain-validity} hold for every view timeout.
\end{theorem}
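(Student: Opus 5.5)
The plan is to reduce \Cref{thm:sync-consensus} to the proofs of \Cref{app:consensus-proofs}. We check that every place where those proofs invoke a view agreement guarantee or delay bound remains valid in a synchronous execution with an implementation of lag bound \(T\geq d_t(\Delta)\).

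First, we fix a synchronous execution. It is \(\Delta\)-bounded from time \(0\), so we take \(\mathrm{GST}=0\), and in the latency part we take \(T_0\geq0\) with \(\delta\leq\Delta\). \Cref{lem:sync-lag} shows that in this execution no correct vote lags more than \(T\) behind a correct disable request, in any instance. Hence the guarantees and delay bounds of view agreement hold for every instance of the actual execution. This covers all uses of VA-totality, VA-fallback-progress, VA-consistency, VA-validity, and the \(d_t,d_f,d_p,d_c\) bounds in \Cref{lem:consensus-view-progress,lem:consensus-view-synchronization,cor:consensus-view-duration,lem:consensus-proposal-propagation,lem:consensus-good-view}.

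One caveat concerns the \(\delta\)-bounded reading. The lag-bound contract is stated for executions, and the actual execution is both \(\Delta\)-bounded from \(0\) and \(\delta\)-bounded from \(T_0\), so the delay bounds with either parameter apply.

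The main obstacle is the one place where the original proofs step outside the actual execution: \Cref{lem:va-continuation}. There, a prefix is extended to a hypothetical execution with arbitrary additional input calls, and the lag-bound contract does not cover arbitrary extensions. \Cref{lem:sync-continuation} resolves exactly this point. The chosen extension contains no correct disable request, so the contract holds vacuously, and it can be made \(\Delta\)-bounded from \(0\). With \Cref{lem:sync-continuation} substituted for \Cref{lem:va-continuation}, the proofs of \Cref{lem:consensus-good-view,lem:consensus-correct-leader,lem:consensus-fresh-views}, totality, liveness, \Cref{cor:consensus-eventual-commit}, and the latency lemmas go through verbatim. Those proofs rely only on the timeout condition~\eqref{eq:consensus-timeout} and on these lemmas. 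I will also verify that no proof uses the ``every pattern of input calls'' assumption elsewhere; a scan of the appendix suggests \Cref{lem:va-continuation} is the only such use.

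For the final sentence, safety for every view timeout, the approach is to observe that \Cref{lem:consensus-justification,lem:consensus-committed-prefix} and \Cref{thm:blockchain-consistency,thm:blockchain-validity} use only VA-validity and VA-consistency, together with \Cref{lem:consensus-basic}, and never the timeout condition. The step needing care is that \Cref{lem:sync-lag} itself must not depend on \(D\) satisfying~\eqref{eq:consensus-timeout}. Its induction uses only the totality bound and the fact that disable requests happen exactly \(D\) after entry, so \(D\) cancels as noted after \Cref{lem:sync-lag}. Thus the VA safety guarantees hold in every synchronous execution for any \(D\), and consistency and validity follow.
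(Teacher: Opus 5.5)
Your proposal is correct and follows the paper's proof essentially step for step. \Cref{lem:sync-lag} supplies the view agreement guarantees and delay bounds for every instance of the actual execution. \Cref{lem:va-continuation} is the only place that reasons about an alternative execution, and \Cref{lem:sync-continuation} covers it. Consistency and validity hold for every view timeout because \Cref{lem:sync-lag} does not depend on \(D\) and the safety proofs use only VA-consistency and VA-validity.
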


\begin{proof}
By \Cref{lem:sync-lag}, the guarantees and delay bounds of view agreement hold for every instance in the execution, which is \(\Delta\)-bounded from \(\mathrm{GST}=0\).
\Cref{app:consensus-proofs} uses the assumption that they hold for every pattern and timing of input calls only in~\Cref{lem:va-continuation}, which reasons about an alternative execution; everywhere else, it uses the guarantees of the instances in the execution itself.
\Cref{lem:va-continuation} holds here by~\Cref{lem:sync-continuation}, so the results of~\Cref{app:consensus-proofs} carry over.
\Cref{lem:sync-lag} does not depend on the view timeout, and the safety proofs of~\Cref{app:consensus-safety} use only VA-consistency and VA-validity, so Consistency and Validity hold for every view timeout.
\end{proof}

Thus, as in the eventually synchronous case, the guarantees of Generic Simplex depend only on the delay parameters of the implementation, now together with its lag bound.
Consider an execution that is \(\delta\)-bounded from \(T_0\geq0\), where \(\delta\leq\Delta\).
If every correct party has entered a view by time \(s\geq T_0\), then all correct parties enter the next view by time \(s+D+\max\{d_f(\delta),d_t(\delta)\}\) (\Cref{cor:consensus-view-duration}).
A view \(v\) with a correct leader and \(t_f(v)\geq T_0\) commits its block at every correct party by time \(t_f(v)+2d_t(\delta)+d_c(\delta)\) (\Cref{lem:consensus-correct-leader}).
The latency metrics satisfy the bounds of~\Cref{thm:consensus-latency}.

\subsection{A 1/4-Resilient View Agreement Implementation}
\label{app:synchronous-va}

\Cref{fig:va-sync-quarter} presents a synchronous implementation for \(f<n/4\), which uses the shared state and helper procedures of~\Cref{fig:va-shared}.
Its internal timer starts only upon \(\fn{request\_disable}()\).
Hence \(T\) only has to cover votes cast after some correct party requests disabling, as in~\Cref{def:lag-bound}, and progress from split votes may require later disable requests.
A value is committed directly from \(n-f\) votes, regardless of disable requests, which gives VA-strong-unanimity (\Cref{def:va-strong-unanimity}).
When the actual message delay is bounded by \(\delta\leq\Delta\), its prepare and commit delays are \(\delta\), its fallback delay is \(T+\Delta+2\delta\), its totality delay is \(2\delta\), and its Byzantine-silent totality delay is \(\delta\).
\Cref{tab:sync-va-latency} summarizes these bounds.

\begin{figure*}[htp]
    \centering
    \small
    \setlength{\fboxsep}{2pt}%
    \fbox{%
    \begin{minipage}{\dimexpr\textwidth-2\fboxsep-2\fboxrule\relax}
    \begin{minipage}[t]{0.485\linewidth}
    \begin{algorithmic}[1]
        \State \kw{import} Fig.~\ref{fig:va-shared};
        \BlankLine
        \State \textbf{Private state:}
        \State \ind a timer, initially unset;
        \BlankLine
        \State \kw{upon} $\fn{vote}(x)$:
        \State \ind \kw{broadcast} $\sig{\Vote, x}$;
        \BlankLine
        \State \kw{upon} $\fn{request\_disable}()$:
        \State \ind \kw{if} the timer has not been set:
        \State \indd set the timer for $T+\Delta$;
        \algstore{vasyncquarter}
    \end{algorithmic}
    \end{minipage}\hfill
    \begin{minipage}[t]{0.485\linewidth}
    \begin{algorithmic}[1]
        \algrestore{vasyncquarter}
        \State \kw{upon} $\nrecv{\sig{\Vote, x}} \geq n-f$:
        \State \ind \kw{call} $\fn{do\_commit}(x)$;
        \BlankLine
        \State \kw{upon} $\nrecv{\sig{\Vote, x}} \geq n-2f$
        \State \indd and $\fn{valid}(x)$:
        \State \ind \kw{broadcast} $\sig{\Ready, x}$;
        \BlankLine
        \State \kw{upon} the timer expiring:
        \State \ind \kw{if} no $\sig{\Ready, \cdot}$
        \State \inddd has been broadcast:
        \State \indd \kw{broadcast} $\sig{\Ready, \bot}$;
        \BlankLine
        \State \kw{upon} $\nrecv{\sig{\Ready, a}} \geq f+1$:
        \State \ind \kw{broadcast} $\sig{\Ready, a}$;
        \BlankLine
        \State \kw{upon} $\nrecv{\sig{\Ready, a}} \geq 2f+1$:
        \State \ind \kw{call} $\fn{clear}(a)$;
    \end{algorithmic}
    \end{minipage}
    \end{minipage}}
    \caption{A synchronous view agreement implementation for \(f<n/4\),
    assuming that no correct party calls \(\fn{vote}\) more than \(T\) after a
    correct party calls \(\fn{request\_disable}\) and that message delay is
    bounded by \(\Delta\). Vote-counting thresholds ignore equivocating voters.}%
    \label{fig:va-sync-quarter}
\end{figure*}

\begin{theorem}
\label{thm:va-sync-quarter}
Assume \(n\geq4f+1\) and the synchrony conditions stated
in~\Cref{fig:va-sync-quarter}.
Then the protocol in~\Cref{fig:va-sync-quarter} implements view agreement and satisfies VA-strong-unanimity.
Moreover, if the actual message delay is bounded by \(\delta\leq\Delta\), its
prepare and commit delays are \(\delta\), its fallback delay is
at most \(T+\Delta+2\delta\), its totality delay is
at most \(2\delta\), and its Byzantine-silent totality delay is \(\delta\).
Only VA-consistency relies on the bound \(T\) relating votes to disable
requests; the other properties and the latency bounds hold without it.
\end{theorem}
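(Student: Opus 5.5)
The proof follows the template of the Fast VA proof (\Cref{thm:va-fifth}), one lemma per property, with a preliminary ready lemma and an amplification lemma. First I would show that the first correct party to send \(\sig{\Ready,a}\) cannot use the relay rule. Hence it either received \(n-2f\geq 2f+1\) votes for a valid \(a\), at least \(n-3f\geq f+1\) of them correct, or its own timer expired and \(a=\bot\). Since \(n-t\geq 3f+1\geq 2f+1\), the amplification lemma then holds exactly as in \Cref{lem:fifth-va-amplification}. If every correct party has sent \(\sig{\Ready,a}\) by \(s\), all clear with \(a\) by \(s+\delta\). If \(f+1\) correct parties have sent it by \(s\), all clear by \(s+2\delta\).

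VA-validity follows as in \Cref{lem:fifth-va-validity}, since \(\fn{clear}\) and \(\fn{do\_commit}\) check validity. A commit uses \(n-f\) votes, which include correct ones, and a ready for \(x\neq\bot\) traces back to \(f+1\) correct votes.

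VA-strong-unanimity and the prepare and commit delays \(\delta\) are immediate. By \(t_0+\delta\), every correct party has \(n-t\geq n-f\) correct votes for \(x\) and calls \(\fn{do\_commit}(x)\), and this needs no condition on disable requests.

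For VA-totality, a commit at \(t_0\) rests on at least \(n-2f\) correct votes. By \(t_0+\delta\), every correct party has them and sends \(\sig{\Ready,x}\), and all prepare by \(t_0+2\delta\). A clear upon \(2f+1\) readies rests on \(f+1\) correct readies, which gives \(2\delta\) by amplification. The Byzantine-silent case repeats the argument of \Cref{lem:fifth-va-totality}.

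For VA-fallback-progress, let \(\tau\) be the last request. Every correct timer expires by \(\tau+T+\Delta\). Each correct party has then either sent some ready or now sends \(\sig{\Ready,\bot}\). If every correct party sends \(\sig{\Ready,\bot}\), all disable by \(\tau+T+\Delta+\delta\). Otherwise some correct party sent \(\sig{\Ready,a}\) with \(a\neq\bot\). I still have to turn this into a bound of \(T+\Delta+2\delta\); the expected route is the \(f+1\)/\(2f+1\) ready counting. These correct readies need not all share one argument, so the cleanest case split is on whether some argument gathers \(f+1\) correct readies by time \(\tau+T+\Delta+\delta\), and I expect to handle it with \(n\geq 4f+1\).

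The main obstacle is VA-consistency, the only place where the lag bound \(T\) is used. Suppose a correct party commits \(x\) from a set \(Q\) of \(n-f\) votes, so at least \(n-2f\) of them are correct. Any other value gets at most \(2f\) votes, and \(2f<n-2f\) because \(n>4f\). So no correct party sends \(\sig{\Ready,y}\) for \(y\neq x\), and none prepares \(y\).

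To exclude disabling, I must show that no correct party ever broadcasts \(\sig{\Ready,\bot}\) via its timer. Then \(\bot\) readies come only from Byzantine parties and relays, and since the first correct relay needs \(f+1\) readies, no correct party relays \(\bot\) either.

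The key timing argument is this. Let a correct \(w\) first request disabling at time \(r\); its timer fires at \(r+T+\Delta\). Every correct vote is cast by \(r+T\), by the lag condition, and arrives at \(w\) by \(r+T+\Delta\). Under the convention that messages arriving at the deadline are processed before the timer fires, \(w\) sees the \(\geq n-2f\) correct votes for \(x\) from \(Q\). It therefore sends \(\sig{\Ready,x}\) before its timer fires, which suppresses the timer's \(\bot\) ready.

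Finally, I would check that the ready for \(x\), together with relays, does not produce conflicting clears. This holds because only \(x\) and possibly Byzantine \(\bot\) readies circulate, and Byzantine parties alone cannot reach \(2f+1\).
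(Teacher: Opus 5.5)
Your proofs of validity, strong unanimity, totality, and consistency match the paper's proof. This includes the key timing step: a correct party's timer, set at its first \(\fn{request\_disable}()\) call at time \(r\), fires only after every correct vote (cast by \(r+T\)) has arrived. The \(n-2f\) correct votes for the committed \(x\) therefore make it send \(\sig{\Ready,x}\), which suppresses the timer's \(\sig{\Ready,\bot}\).

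The gap is VA-fallback-progress. You leave it open, and the case split you suggest would not close it. In the positive case, if some argument first gathers \(f+1\) correct readies only at \(\tau+T+\Delta+\delta\), amplification clears the instance only by \(\tau+T+\Delta+3\delta\), which exceeds the claimed \(T+\Delta+2\delta\). In the negative case you have no way to force progress. Nothing in your argument rules out the correct readies being spread over so many arguments that none reaches \(f+1\); for instance, \(3f+1\) correct readies over four arguments can be split \(f,f,f,1\).

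The missing idea is a bound on the number of distinct arguments in correct \(\Ready\) messages. By your ready lemma, the first correct sender of \(\sig{\Ready,x}\) with \(x\neq\bot\) received \(n-2f\) votes for \(x\). Fix a set \(H\) of \(n-f\) correct parties; at least \(n-3f\) of those voters lie in \(H\). Correct parties vote once, so three distinct such values would need \(3(n-3f)\) distinct voters in \(H\). Since \(3(n-3f)>n-f\) exactly when \(n>4f\), correct readies carry at most two values, plus \(\bot\).

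Now set \(d=\tau+T+\Delta\). By time \(d\), every correct party has sent some \(\Ready\), either earlier or via its timer's \(\bot\). Choose one such message from each member of \(H\). This gives \(n-f\geq3f+1\) messages over at most three arguments, so some argument \(a\) already has \(f+1\) correct senders at time \(d\), not \(d+\delta\). Every correct party then relays \(\sig{\Ready,a}\) by \(d+\delta\) and clears with \(a\) by \(d+2\delta\), which is the stated fallback delay.
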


\begin{proof}
A correct party \(w\) makes its first \(\fn{request\_disable}()\) call at time \(r_w\) and sets its timer to expire \(T+\Delta\) later.
By assumption, every correct vote is sent by \(r_w+T\), so it reaches \(w\) by \(r_w+T+\Delta\).
At that deadline, \(w\) processes arriving messages before firing its timer (\Cref{sec:model}), so it has processed every correct vote before its timer expires.
Only the proof of VA-consistency uses this fact.

\noindent\textbf{VA-validity.}
A set of \(n-f\) or \(n-2f\) votes includes a vote from a correct party, which called \(\fn{vote}\) with that value.
The commit procedure checks \(\fn{valid}\).
The first correct sender of READY for a value \(x\) must have received \(n-2f\) votes for \(x\): the timer sends READY only for \(\bot\), and \(f+1\) READY messages would include an earlier correct sender.
That sender also checked \(\fn{valid}(x)\).
Since \(2f+1\) READY messages include a correct sender, preparing from READY messages is also valid.

\noindent\textbf{VA-consistency.}
Suppose \(x\) is committed. Its \(n-f\) voters and any \(n-2f\) voters for
\(x'\neq x\) would share at least \(n-3f>f\) parties, forcing a correct
party to vote twice. Thus \(x'\) cannot be committed, prepared from votes,
or sent in READY after receiving votes.

As shown above, every correct party processes the \(n-2f\) correct votes for \(x\) before its timer expires.
Each correct party has therefore sent READY for \(x\), so its timer sends
nothing. The first correct party to send READY for \(x'\neq x\), a value or
\(\bot\), cannot use the \((f+1)\)-READY rule for \(x'\), because that rule
requires a READY message already sent by a correct party.
Hence no correct party sends READY for \(x'\), and the Byzantine parties alone cannot make a correct party clear the instance with \(x'\).
Therefore no correct party prepares a value other than \(x\) or disables.

\noindent\textbf{VA-strong-unanimity.}
If all correct parties vote for the same valid value \(x\), the
last at time \(t\), all receive \(n-f\) matching votes and commit, and thus
prepare, \(x\) by \(t+\delta\).
This argument does not depend on requests to disable, so it establishes VA-strong-unanimity, and hence VA-unanimity.

\noindent\textbf{VA-totality.}
If a correct party clears the instance with \(a\), a value or \(\bot\), at time \(t\), then either \(n-2f\) correct parties have sent votes for \(a\), or \(f+1\) correct parties have sent READY for \(a\).
In either case, all correct parties receive those messages and send READY for \(a\) by \(t+\delta\), then clear the instance with \(a\) by \(t+2\delta\).
These steps do not depend on call times.

\noindent\textbf{VA-fallback-progress.}
Fix a set \(H\) of \(n-f\) correct parties. Consider a value
\(x\) sent in READY by a correct party. As shown in VA-validity,
the first correct sender received \(n-2f\) votes for \(x\). At most
\(f\) senders are outside \(H\), so at least \(n-3f\) parties in
\(H\) voted for \(x\). Three distinct such values would require
\(3(n-3f)>n-f\) voters in \(H\), since a correct party votes only
once. Hence correct READY messages carry at most two distinct values,
and at most three distinct arguments in total, counting \(\bot\).

Suppose that every correct party calls \(\fn{request\_disable}()\) or the instance
is cleared at that party, and let \(\tau\) be the first time at which this
holds for every correct party. If the instance is cleared at some correct
party by \(\tau\), VA-totality ensures that it is cleared at every correct party by
\(\tau+2\delta\).
Otherwise, every correct party has requested disabling and set its timer by \(\tau\), so by \(d=\tau+T+\Delta\) every correct party has sent a READY message: if it has sent none before its timer expires, the timer sends READY for \(\bot\).
Choose one message from each party in \(H\). There are \(n-f\geq3f+1\)
messages and at most three arguments, so at least \(f+1\) messages carry
the same argument \(a\).

By \(d+\delta\), every correct party receives those \(f+1\) READY
messages and sends READY for \(a\).
By \(d+2\delta\), every correct party receives at least \(n-f\geq2f+1\) READY messages for \(a\) and clears the instance with \(a\).
Thus the instance is cleared at every correct party, and the fallback delay is at most \(T+\Delta+2\delta\).
This also proves VA-fallback-progress, whose
premise implies that every correct party calls \(\fn{request\_disable}()\) or the
instance is cleared at that party.

\noindent\textbf{Latency.}
The prepare and commit delays of \(\delta\) follow from the proof of
VA-strong-unanimity, the totality delay of at most \(2\delta\) from the proof of
VA-totality, and the fallback delay from the proof of VA-fallback-progress.

For the Byzantine-silent totality delay, suppose that Byzantine parties send no messages in the instance and that a correct party prepares or disables at time \(t\).
It does so upon \(n-f\) votes or \(2f+1\) READY messages, all broadcast by correct parties by time \(t\).
By \(t+\delta\), every correct party has received the same messages and, since the corresponding rules have no precondition other than validity, clears the instance with the same value or \(\bot\).
Thus the Byzantine-silent totality delay is \(\delta\).

These arguments assume that every message delay is at most \(\delta\), but, as for the implementations of~\Cref{tab:signature-free-va-latency}, they also establish the bounds in the form of~\Cref{app:consensus-setting}.
Each bound is a chain of message deliveries that starts from messages sent by correct parties by the time \(s\) of the triggering event, preceded in the fallback case by timers that expire by time \(s+T+\Delta\).
In an execution that is \(\delta\)-bounded from \(T_0\), the \(i\)th delivery of the chain therefore occurs by time \(\max\{s,T_0\}+i\delta\), or by \(\max\{s,T_0\}+T+\Delta+i\delta\) after the timers.
These bounds are nondecreasing in \(\delta\) and at least \(\delta\), as~\Cref{app:consensus-setting} requires.
\end{proof}

\begin{table}[tbp]
    \centering
    \small
    \caption{Upper bounds on the latency parameters of the synchronous view agreement implementation of~\Cref{fig:va-sync-quarter}. See the \hyperref[par:va-latency]{latency definitions in~\Cref*{sec:blockchain-va}} for the meaning of the bounds. The implementation satisfies VA-strong-unanimity.}
    \label{tab:sync-va-latency}
    \setlength{\tabcolsep}{8pt}
    \begin{tabular}{@{}lc@{}}
        \toprule
        Protocol resilience & \(f<n/4\) (synchronous) \\
        \midrule
        Prepare delay \(d_p(\delta)\) & \(\delta\) \\
        Commit delay \(d_c(\delta)\) & \(\delta\) \\
        Fallback delay \(d_f(\delta)\) & \(T+\Delta+2\delta\) \\
        Totality delay \(d_t(\delta)\) & \(2\delta\) \\
        Byzantine-silent totality delay \(d_t^s(\delta)\) & \(\delta\) \\
        \bottomrule
    \end{tabular}
\end{table}

\subsubsection{Consensus guarantees.}
By~\Cref{thm:va-sync-quarter}, if \(n\geq4f+1\), the implementation of~\Cref{fig:va-sync-quarter} with parameter \(T\) has lag bound \(T\) and satisfies VA-strong-unanimity, with \(d_t(\Delta)=2\Delta\).
\Cref{thm:sync-consensus} therefore applies when \(T\geq2\Delta\) and \(D\geq2d_t(\Delta)=4\Delta\) (with only VA-unanimity, the timeout condition~\eqref{eq:consensus-timeout} would require \(5\Delta\)), so we may take
\[
 D=4\Delta,
 \qquad T=2\Delta.
\]
The internal VA timer then expires \(T+\Delta=3\Delta\) after the local disable request, and the fallback delay is at most \(T+\Delta+2\delta=3\Delta+2\delta\leq5\Delta\).
By the bounds of~\Cref{app:synchronous-consensus}, once every correct party has entered a view, all correct parties enter the next view within \(D+\max\{d_f(\delta),d_t(\delta)\}=7\Delta+2\delta\leq9\Delta\), even though \(d_f(\delta)\) may exceed \(D\).
A view \(v\) with a correct leader and \(t_f(v)\geq T_0\) commits its block by time \(t_f(v)+2d_t(\delta)+d_c(\delta)=t_f(v)+5\delta\).
Increasing \(T\) beyond \(2\Delta\) only lengthens the fallback path and leaves the fast path unchanged.
Unlike with the eventually synchronous implementations, Consistency depends on the timing assumptions, through the lag bound.

Substituting the bounds of \Cref{tab:sync-va-latency} into~\Cref{thm:consensus-latency} gives \Cref{tab:sync-consensus-latency}.
Compared with Generic Simplex using Fast VA (\Cref{tab:signature-free-consensus-latency}), the commit latencies, block time, and view timeout are unchanged at a higher resilience.
The price is the fallback path: each faulty-led view costs \(\Delta_{\mathrm{to}}+T+\Delta+2\delta\) instead of \(\Delta_{\mathrm{to}}+3\delta\), since disabling waits for the VA timer.

\begin{table}[tbp]
    \centering
    \small
    \caption{Upper bounds for the five blockchain consensus latency metrics of Generic Simplex with the synchronous \(f<n/4\) view agreement implementation, with sufficient choices of the view timeout and of \(T\).
    For the view timeout, the first value relies on VA-strong-unanimity and the second only on VA-unanimity.}
    \label{tab:sync-consensus-latency}
    \setlength{\tabcolsep}{4pt}
    \begin{tabular}{@{}lc@{}}
        \toprule
        Protocol resilience & \(f<n/4\) (synchronous) \\
        \midrule
        Good-case commit latency \(\ell_g\) & \(2\delta\) \\
        Steady-state commit latency \(\ell_s\) & \(2\delta\) \\
        Eventual worst-case commit latency \(\ell_w\) & \(3\delta\) \\
        Block time & \(2\delta\) \\
        View timeout \(\Delta_{\mathrm{to}}\) & \(\geq4\Delta\,/\,5\Delta\) \\
        VA timing bound \(T\) & \(\geq2\Delta\) \\
        \(k\)-gap latency & \(3\delta+k(\Delta_{\mathrm{to}}+T+\Delta+2\delta)\) \\
        \bottomrule
    \end{tabular}
\end{table}

\clearpage
\section{Minimmit in the View Agreement Framework}
\label{app:minimmit}

We recover the core of Minimmit~\cite[Section~4, Algorithm~1]{chouMinimmitFastFinality2026} by instantiating Generic Simplex (\Cref{fig:consensus-va}) with a signed implementation of view agreement.
We assume the PKI setting of Appendix~\ref{app:extended-model} and \(n\geq5f+1\).
We consider only the consensus algorithm, abstracting away transaction and block dissemination and the optimizations in the Minimmit paper.

\paragraph{Signed view agreement.}
Fix one view agreement instance; all messages below are signed and include its view number, which we omit from the notation.
There are two message types: \(\sig{\Vote,x}\) and \(\sig{\msgtag{nullify}}\).
The latter expresses support for disabling the instance.
The same votes support both preparation and commitment, at different thresholds:
\begin{center}
\begin{tabular}{@{}l@{\quad}l@{\quad}l@{}}
\toprule
Certificate & Contents (distinct signers) & Effect \\
\midrule
\(M(x)\) & \(2f+1\) votes for \(x\) & Prepare \(x\) \\
\(L(x)\) & \(n-f\) votes for \(x\) & Commit \(x\) \\
\(N\) & \(2f+1\) \msgtag{nullify} messages & Disable \\
\bottomrule
\end{tabular}
\end{center}
These are Minimmit's M-notarizations, L-notarizations, and nullifications, respectively; preparation and commitment require \(\fn{valid}(x)\).
Receiving a certificate also supplies its constituent signed messages, whose original signers are counted even when the messages are forwarded.
An \(L(x)\) contains an \(M(x)\).

\Cref{fig:va-minimmit} presents the implementation, using the state and procedures of \Cref{fig:va-shared}.
A timeout alone never causes a voter to nullify: after voting for \(x\), a party needs opposing evidence from \(2f+1\) distinct signers, each supplying either a \msgtag{nullify} message or a vote for some \(y\neq x\).
These opposing votes need not all support the same value, and this rule does not require a preceding call to \(\fn{request\_disable}()\).
Forwarding is once per value for \(M(x)\), and once per instance for \(N\), and continues after the instance is cleared or the party leaves its view.
Forwarding \(L(x)\) is unnecessary: forwarding its \(M(x)\) suffices for VA-totality of view agreement.

\begin{figure*}[!t]
    \centering
    \small
    \setlength{\fboxsep}{2pt}%
    \fbox{%
    \begin{minipage}{\dimexpr\textwidth-2\fboxsep-2\fboxrule\relax}
    \begin{minipage}[t]{0.485\linewidth}
    \begin{algorithmic}[1]
        \State \kw{import} Fig.~\ref{fig:va-shared};
        \BlankLine
        \State \textbf{Private state:}
        \State \ind $\var{vote\_value} \gets \None$;
        \State \ind $\var{null\_sent} \gets \False$;
        \BlankLine
        \State \kw{procedure} $\fn{nullify}()$:
        \State \ind \kw{if} not $\var{null\_sent}$:
        \State \indd $\var{null\_sent} \gets \True$;
        \State \indd \kw{broadcast} signed $\sig{\msgtag{nullify}}$;
        \BlankLine
        \State \kw{upon} $\fn{vote}(x)$:
        \State \ind \kw{if} $\var{vote\_value} = \None$
        \State \inddd and not $\var{null\_sent}$:
        \State \indd $\var{vote\_value} \gets x$;
        \State \indd \kw{broadcast} signed $\sig{\Vote,x}$;
        \BlankLine
        \State \kw{upon} $\fn{request\_disable}()$:
        \State \ind \kw{if} $\var{vote\_value} = \None$
        \State \inddd and $\var{prepared} = \emptyset$
        \State \inddd and not $\var{disabled}$:
        \State \indd \kw{call} $\fn{nullify}()$;
        \algstore{vaminimmit}
    \end{algorithmic}
    \end{minipage}\hfill
    \begin{minipage}[t]{0.485\linewidth}
    \begin{algorithmic}[1]
        \algrestore{vaminimmit}
        \State \kw{upon} $\var{vote\_value} = x \neq \None$
        \State \indd and $\var{prepared} = \emptyset$
        \State \indd and not $\var{disabled}$
        \State \indd and $\#\!\bigl[\,\sig{\Vote,y}: y \neq x$
        \State \inddd or $\sig{\msgtag{nullify}}\,\bigr] \geq 2f+1$:
        \State \ind \kw{call} $\fn{nullify}()$;
        \BlankLine
        \State \kw{upon} first obtaining $M(x)$:
        \State \ind \kw{broadcast} $M(x)$;
        \State \ind \kw{if} $\fn{valid}(x)$:
        \State \indd \kw{call} $\fn{clear}(x)$;
        \State \ind \kw{else}:
        \State \indd \kw{call} $\fn{nullify}()$;
        \BlankLine
        \State \kw{upon} first obtaining $N$:
        \State \ind \kw{broadcast} $N$;
        \State \ind \kw{call} $\fn{clear}(\bot)$;
        \BlankLine
        \State \kw{upon} obtaining $L(x)$:
        \State \ind \kw{call} $\fn{do\_commit}(x)$;
    \end{algorithmic}
    \end{minipage}
    \end{minipage}}
    \caption{Minimmit's signed view agreement implementation for \(f<n/5\).
    Certificates \(M(x)\), \(L(x)\), and \(N\) contain \(2f+1\) votes for \(x\), \(n-f\) votes for \(x\), and \(2f+1\) \msgtag{nullify} messages, respectively.
    All counts refer to distinct original signers, and all messages are bound to this instance's view.
    Obtaining a certificate means receiving it or collecting its constituent signed messages.
    Certificate handlers remain active after leaving the view.}%
    \label{fig:va-minimmit}
\end{figure*}

The invalid-value branch makes the implementation live even when callers supply invalid inputs to view agreement.
This convention is safe because an M-notarization for one value excludes an L-notarization for every different value, as shown below.
It does not change the normal path of voting for a valid proposal, preparing it, and committing it.

\begin{theorem}
\label{thm:va-minimmit}
The signed implementation in \Cref{fig:va-minimmit} satisfies view agreement and VA-strong-unanimity (\Cref{def:va-strong-unanimity}) for \(n\geq5f+1\).
When message delay is bounded by \(\delta\), its prepare and commit delays are \(\delta\), its totality and Byzantine-silent totality delays are at most \(\delta\), and its fallback delay is at most \(2\delta\).
\end{theorem}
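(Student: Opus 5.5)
I will prove the properties one at a time, following the structure used for Fast VA (\Cref{thm:va-fifth}), with signatures replacing Bracha-style amplification. Throughout, \(n\geq5f+1\), \(t\leq f\) is the actual number of Byzantine parties, and counts refer to distinct original signers, so a forwarded certificate supplies exactly the messages it contains.

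\textbf{Basic facts.} I first establish two counting facts.
\begin{itemize}
\item Any \(M(x)\) contains at least \(f+1\) correct voters for \(x\).
\item An \(L(x)\) and an \(M(y)\) with \(y\neq x\) cannot coexist, because the \((n-f)+(2f+1)-n=f+1\) common signers include a correct party that would have voted twice.
\end{itemize}
A correct party votes at most once and never after nullifying, by the guard on \(\var{null\_sent}\).

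\textbf{VA-validity.} Every prepared or committed value passes \(\fn{valid}\) in \(\fn{clear}\) or \(\fn{do\_commit}\). Each of \(M(x)\) and \(L(x)\) contains a correct vote, which was cast only upon \(\fn{vote}(x)\).

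\textbf{VA-consistency.} This is the main obstacle. Suppose a correct party commits \(x\) via \(L(x)\), with signer set \(Q_x\).
\begin{itemize}
\item By the basic facts, no \(M(y)\) with \(y\neq x\) or \(L(y)\) with \(y\neq x\) exists, so no other value is prepared or committed.
\item For disabling, I must show no \(N\) exists, i.e.\ fewer than \(2f+1\) parties ever sign \(\msgtag{nullify}\). I mirror \Cref{lem:fifth-va-consistency}. Let \(q\) be the first correct member of \(Q_x\) to nullify. Since \(q\) voted, it did not nullify via \(\fn{request\_disable}()\), which requires \(\var{vote\_value}=\None\).
\item Nor did \(q\) nullify via the invalid-\(M\) branch. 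An \(M(y)\) for invalid \(y\) would have \(y\neq x\), since \(x\) is valid, and that \(M(y)\) is excluded by the basic facts.
\item So \(q\) observed \(2f+1\) signers each voting for some \(y\neq x\) or nullifying. Correct ones among them lie outside \(Q_x\). There are at most \(f\) such correct parties plus \(t\leq f\) Byzantine ones, giving at most \(2f\) signers, a contradiction.
\item Hence nullifiers number at most \(2f\), so no \(N\) exists.
\end{itemize}

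\textbf{VA-strong-unanimity and delays.} Suppose all correct parties vote \(x\), with the last vote at \(t_0\), and none requests disabling before voting.
\begin{itemize}
\item No correct party nullifies. The \(\fn{request\_disable}()\) branch requires not having voted. The opposing-evidence rule would need \(2f+1\) signers, at least \(f+1\) of them correct, voting for something other than \(x\) or nullifying; by induction on the first correct nullifier, none exists. The invalid branch would need an \(M(y)\) with \(y\neq x\), which requires a correct vote for \(y\).
\item So by \(t_0+\delta\) every correct party holds at least \(n-t\geq n-f\) votes for \(x\), hence \(L(x)\) and \(M(x)\). It therefore prepares and commits \(x\), giving prepare and commit delays of \(\delta\).
\end{itemize}

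\textbf{VA-totality.} A correct party that prepares \(x\) at time \(s\) has obtained \(M(x)\), either directly or inside \(L(x)\). It broadcasts that \(M(x)\), and every correct party receives it by \(s+\delta\) and prepares \(x\). Disabling via \(N\) is handled the same way. Forwarding occurs whether or not the Byzantine parties are silent, so both totality delays are at most \(\delta\). One point needs checking: a party that obtains \(M(x)\) only inside an \(L(x)\) must still trigger the ``first obtaining \(M(x)\)'' handler. I will argue it does, because obtaining \(L(x)\) collects the constituent signed votes, so the party obtains \(M(x)\) at the same time.

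\textbf{VA-fallback-progress.} Suppose every correct party calls \(\fn{request\_disable}()\), the last at \(\tau\). If the instance is cleared at some correct party by \(\tau\), totality finishes by \(\tau+\delta\). Otherwise, by \(\tau\) every non-voting correct party has nullified, and votes are all cast by \(\tau\). All these messages are delivered by \(\tau+\delta\), and I split into cases on the correct votes.
\begin{itemize}
\item Some value \(x\) has at least \(2f+1\) correct votes. Then every correct party holds \(M(x)\) by \(\tau+\delta\) and either prepares \(x\) or, if \(x\) is invalid, nullifies.
\item Otherwise, each correct voter sees at least \(n-t-2f\geq2f+1\) correct signers opposing its own vote. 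It nullifies by \(\tau+\delta\) unless it has already cleared the instance.
\end{itemize}
In the invalid case and the second case, every correct party that has not cleared has nullified by \(\tau+\delta\). This yields \(N\) at all parties only by \(\tau+2\delta\), giving the stated bound \(2\delta\). The delicate sub-case is a correct party that skipped nullifying because it had already prepared. In that case totality propagates its forwarded certificate within \(\delta\) of its clearing time, which is at most \(\tau+\delta\), so every correct party is cleared by \(\tau+2\delta\).
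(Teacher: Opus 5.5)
Your proposal is correct and follows essentially the same route as the paper's proof. It uses the \(L(x)\)/\(M(y)\) quorum-intersection fact and excludes nullification by the first correct member of the commit quorum, since there are at most \(2f\) possible opposing signers. It forwards \(M(x)\) and \(N\) for the \(\delta\) totality delays, and it uses the same case split on correct votes for the \(2\delta\) fallback bound. The only clause you leave implicit is the part of VA-consistency requiring that a value be prepared before it is committed, which is immediate because \(\fn{do\_commit}\) calls \(\fn{clear}\) first.
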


\begin{proof}
\textbf{VA-validity and VA-consistency.}
An M-notarization contains at least \(f+1\) correct votes, and the helpers check validity before preparing or committing.
If \(L(x)\) exists, its \(n-f\) signers intersect the signers of any \(M(y)\) in at least
\[
    (n-f)+(2f+1)-n=f+1
\]
parties, including a correct party.
Since correct parties vote at most once, no \(M(y)\) with \(y\neq x\) exists, hence neither does \(L(y)\).
Now suppose a valid \(x\) is committed, and let \(C\) be its correct voters, with \(|C|\geq n-2f\).
No member of \(C\) can nullify before voting, or because of an invalid M-notarization, since such a certificate would conflict with \(L(x)\).
A first member of \(C\) to nullify on opposing evidence would need \(2f+1\) signers outside \(C\), but there are at most \(2f\).
Thus no member of \(C\) nullifies, no nullification exists, and no correct party disables or prepares another value.
The shared commit procedure prepares a value before committing it.

\noindent\textbf{VA-strong-unanimity and VA-totality.}
If every correct party votes for the same valid \(x\) and none requests disabling before voting, no correct party can be the first to nullify, since a request after voting has no effect.
All correct parties receive \(n-f\) votes and both prepare and commit within \(\delta\) of the last correct vote.
Whenever a correct party prepares \(x\), it forwards \(M(x)\), so all correct parties prepare \(x\) within another \(\delta\).
Likewise, a disabling party forwards a nullification, giving the same totality bound.

\noindent\textbf{VA-fallback-progress and fallback delay.}
Let \(t\) be a time by which every correct party has requested disabling or the instance is cleared there.
If it is cleared anywhere at time \(t\), VA-totality ensures that it is cleared everywhere by \(t+\delta\).
Otherwise, every correct party has already sent a vote or a \msgtag{nullify} message.
Fix \(n-f\) correct parties.
If \(2f+1\) of them voted for one value \(x\), every correct party obtains \(M(x)\) by \(t+\delta\), preparing \(x\) if valid and otherwise sending \msgtag{nullify}.
If no value has that many votes among these parties, every voter receives opposing evidence from at least \((n-f)-2f=n-3f\geq2f+1\) of them by \(t+\delta\).
Consequently, either the instance is cleared at some correct party by \(t+\delta\), and VA-totality finishes by \(t+2\delta\), or every correct party has nullified by \(t+\delta\), and all obtain a nullification by \(t+2\delta\).
\end{proof}

\paragraph{Recovering the blockchain protocol.}
Use this implementation for each \(\var{va}[v]\) in \Cref{fig:consensus-va}, with blocks as values and genesis prepared and committed in view~0.
The leader extends an M-notarized block from the greatest earlier view in which it knows one; Minimmit resolves ties between blocks in that view lexicographically.
The safety check requires an M-notarized parent and disabled intervening views, corresponding to Minimmit's parent notarization and nullifications.
The parent hash identifies the particular block being extended, since M-notarizations in the same view need not be unique.

An M-notarization for a valid block or a nullification clears the current instance, so a party advances without waiting for an L-notarization.
Generic Simplex's catch-up vote is precisely Minimmit's extra vote before advancing on an M-notarization: a party that has neither voted nor requested disabling votes for the prepared block even if the block is not yet safe locally.
An L-notarization commits its block and ancestors independently of the party's current view.
The \(2\Delta\) view timer used by Minimmit invokes \(\fn{request\_disable}()\); after a vote, this call has no effect.
Since the implementation satisfies VA-strong-unanimity and \(d_t(\Delta)\leq\Delta\), this timer satisfies the timeout condition~\eqref{eq:consensus-timeout}, so the results of~\Cref{app:consensus-proofs} apply.
In particular, in a view with a correct leader whose first correct entry occurs after GST, all correct parties vote for the leader's block within \(2\Delta\) of that entry, before any of their timers fires, and all commit it (\Cref{lem:consensus-correct-leader}).

In a good-case view, one message delay delivers the proposal and one delivers the votes, giving a proposal-to-commit latency of \(2\delta\).
The next leader can propose as soon as it collects \(2f+1\) votes, without waiting for the \(n-f\) votes needed for commitment.
Thus preparation and commitment have the same worst-case good-case delay bound, but the smaller preparation quorum can advance the chain earlier within that voting round.

\clearpage
\section{Simplex in the View Agreement Framework}
\label{app:simplex}

We obtain the core notarization and finalization mechanism of Simplex~\cite[Section~2.1]{chanSimplexConsensusSimple2023} by instantiating Generic Simplex (\Cref{fig:consensus-va}) with the signed view agreement implementation below.
We assume the PKI setting of Appendix~\ref{app:extended-model} and \(n\geq3f+1\), and use Simplex's quorum size \(q=\lceil2n/3\rceil\).
Thus \(q\leq n-f\) and \(2q-n>f\); when \(n=3f+1\), we have \(q=2f+1=n-f\).
We focus on the consensus rules, with the interface adaptations described below.

\paragraph{Signed view agreement.}
Fix one view agreement instance; every message is signed and includes its view number, omitted below.
An ordinary vote \(\sig{\Vote,x}\) supports a value, a dummy vote \(\sig{\Vote,\bot}\) supports disabling, and \(\sig{\msgtag{finalize}}\) supports finalization of the view.
There are three certificates:
\begin{center}
\begin{tabular}{@{}l@{\quad}l@{\quad}l@{}}
\toprule
Certificate & Contents (distinct signers) & Effect \\
\midrule
\(Q(x)\) & \(q\) votes for \(x\) & Prepare valid \(x\) \\
\(N\) & \(q\) votes for \(\bot\) & Disable \\
\(F\) & \(q\) \msgtag{finalize} messages & Commit valid \(x\) with \(Q(x)\) \\
\bottomrule
\end{tabular}
\end{center}
Here \(Q(x)\) is an ordinary notarization, \(N\) is a dummy-block notarization, and \(F\) is a finalization certificate.
A \msgtag{finalize} message names only the view, not a value: quorum intersection ensures that at most one value can have an ordinary notarization.

\Cref{fig:va-simplex} uses the shared state and helpers of \Cref{fig:va-shared}.
Each party makes two separate signing decisions: it casts at most one ordinary vote, and sends at most one of a dummy vote and a \msgtag{finalize} message.
In particular, an ordinary voter may still send a dummy vote on \(\fn{request\_disable}()\).
A party sends \msgtag{finalize} once the instance is cleared, provided it has not already sent a dummy vote; subsequent requests to disable cannot reverse that decision.
The shared validity checks ensure that an invalid notarized value is neither prepared nor committed, and does not by itself trigger finalization.
Forwarding notarizations gives VA-totality; forwarding \(F\) is unnecessary for the view agreement properties.

\begin{figure*}[!t]
    \centering
    \small
    \setlength{\fboxsep}{2pt}%
    \fbox{%
    \begin{minipage}{\dimexpr\textwidth-2\fboxsep-2\fboxrule\relax}
    \begin{minipage}[t]{0.485\linewidth}
    \begin{algorithmic}[1]
        \State \kw{import} Fig.~\ref{fig:va-shared};
        \BlankLine
        \State \textbf{Private state:}
        \State \ind $\var{second\_vote} \gets \None$;
        \BlankLine
        \State \kw{upon} $\fn{vote}(x)$:
        \State \ind \kw{broadcast} signed $\sig{\Vote,x}$;
        \BlankLine
        \State \kw{upon} $\fn{request\_disable}()$:
        \State \ind \kw{if} $\var{second\_vote} = \None$:
        \State \indd $\var{second\_vote} \gets \bot$;
        \State \indd \kw{broadcast} signed $\sig{\Vote,\bot}$;
        \algstore{vasimplex}
    \end{algorithmic}
    \end{minipage}\hfill
    \begin{minipage}[t]{0.485\linewidth}
    \begin{algorithmic}[1]
        \algrestore{vasimplex}
        \State \kw{upon} first obtaining $Q(x)$:
        \State \ind \kw{broadcast} $Q(x)$;
        \State \ind \kw{call} $\fn{clear}(x)$;
        \BlankLine
        \State \kw{upon} first obtaining $N$:
        \State \ind \kw{broadcast} $N$;
        \State \ind \kw{call} $\fn{clear}(\bot)$;
        \BlankLine
        \State \kw{when} $\var{prepared} \neq \emptyset$ or $\var{disabled}$:
        \State \ind \kw{if} $\var{second\_vote} = \None$:
        \State \indd $\var{second\_vote} \gets \msgtag{finalize}$;
        \State \indd \kw{broadcast} signed $\sig{\msgtag{finalize}}$;
        \BlankLine
        \State \kw{when} holding $Q(x)$ and $F$:
        \State \ind \kw{call} $\fn{do\_commit}(x)$;
    \end{algorithmic}
    \end{minipage}
    \end{minipage}}
    \caption{Signed view agreement implementing Simplex's core for \(f<n/3\).
    Each certificate contains \(q=\lceil2n/3\rceil\) messages from distinct original signers: \(Q(x)\) contains ordinary votes for \(x\), \(N\) dummy votes, and \(F\) \msgtag{finalize} messages.
    All messages are bound to this instance's view.
    Obtaining a certificate means receiving it or collecting its constituent signed messages.
    Certificate handlers remain active after leaving the view.}%
    \label{fig:va-simplex}
\end{figure*}

\begin{theorem}
\label{thm:va-simplex}
The signed implementation in \Cref{fig:va-simplex} satisfies view agreement for \(n\geq3f+1\).
When message delay is bounded by \(\delta\), its prepare delay is \(\delta\), its commit delay is \(2\delta\), and its totality, Byzantine-silent totality, and fallback delays are at most \(\delta\).
\end{theorem}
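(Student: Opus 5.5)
I will prove each view agreement property directly from quorum intersection with \(q=\lceil 2n/3\rceil\), using the facts \(2q-n>f\) and \(q\le n-f\). Signatures make certificates transferable, so totality and the delays come from forwarding.

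\textbf{Validity and consistency.} For validity, note that \(\fn{clear}\) and \(\fn{do\_commit}\) check \(\fn{valid}\). A prepared value has a \(Q(x)\), whose \(q>f\) signers include a correct voter. A committed value additionally needs \(Q(x)\), and it is prepared before being committed.

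For consistency, first observe that two ordinary notarizations \(Q(x)\) and \(Q(y)\) share more than \(f\) signers, hence a correct one. A correct party casts only one ordinary vote, so \(x=y\). This already excludes preparing or committing a different value.

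The key step is excluding \(N\) once some correct party commits \(x\). Such a party holds an \(F\), whose signers include at least \(q-f\) correct parties that sent \msgtag{finalize}. Each correct party sends at most one of a dummy vote and a \msgtag{finalize} message, so these parties sent no dummy vote. An \(N\) would have \(q\) signers, and its signer set intersects the \(F\) signers in more than \(f\) parties, hence in a correct party. That party would have sent both a dummy vote and \msgtag{finalize}, a contradiction.

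I will also note that \(F\) names only the view. This is harmless because, by the intersection argument, at most one value is notarized in the view.

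\textbf{Unanimity.} Assume all correct parties vote for a valid \(x\) and none calls \(\fn{request\_disable}\). By \(\delta\) after the last correct vote, every correct party has \(n-f\ge q\) votes for \(x\), so it obtains \(Q(x)\) and prepares \(x\); this gives \(d_p=\delta\). The instance is then cleared at each correct party, and no correct party has sent a dummy vote, so each sends \msgtag{finalize}. By \(2\delta\), every correct party holds \(F\) together with \(Q(x)\) and commits, giving \(d_c=2\delta\).

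\textbf{Totality.} Any correct party that prepares or disables first forwards \(Q(x)\) or \(N\). Every correct party receives the certificate within \(\delta\) and applies the same validity check, so \(d_t\) and \(d_t^s\) are at most \(\delta\).

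\textbf{Fallback progress.} I expect this to be the main obstacle, because an ordinary voter may still send a dummy vote, and a party that has already sent \msgtag{finalize} cannot. Suppose every correct party has called \(\fn{request\_disable}\) by time \(\tau\).

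In the first case, the instance is not cleared at any correct party by \(\tau\). Then no correct party has sent \msgtag{finalize}, since that is sent only after clearing. Hence every correct party sent a dummy vote by \(\tau\). There are at least \(n-f\ge q\) such votes, so \(N\) forms at every correct party by \(\tau+\delta\).

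In the second case, the instance is cleared at some correct party by \(\tau\). Then totality forwarding clears it everywhere by \(\tau+\delta\).

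In both cases \(d_f\le\delta\). The subtle point is checking that the interleaving of \msgtag{finalize} with \(\fn{request\_disable}\) cannot leave correct parties stuck. This holds because \msgtag{finalize} is sent only after the party has cleared, and any clearing at a correct party propagates within \(\delta\).
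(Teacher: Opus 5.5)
Your proof is correct and follows the paper's argument essentially step by step. Both use quorum intersection (\(2q-n>f\)) to rule out two distinct notarizations and the coexistence of \(N\) and \(F\), certificate forwarding to get totality within \(\delta\), and the same two-case split for fallback progress, which rests on \msgtag{finalize} being sent only after the instance is cleared.
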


\begin{proof}
\textbf{VA-validity and VA-consistency.}
Every \(Q(x)\) contains a correct vote, and the shared helpers check \(\fn{valid}(x)\).
Two quorums intersect in at least \(2q-n>f\) parties, including a correct party.
Since correct parties cast at most one ordinary vote, two distinct values cannot both have notarizations.
Since a correct party never sends both a dummy vote and a \msgtag{finalize} message, \(N\) and \(F\) cannot both exist.
Consequently, commitment from \(Q(x)\) and \(F\) excludes every conflicting preparation or commitment and every disabling event.
The shared commit procedure prepares a value before committing it.

\noindent\textbf{VA-unanimity and delays.}
Suppose all correct parties vote for the same valid \(x\), none requests disabling, and the last correct vote occurs at time \(t\).
By \(t+\delta\), every correct party has \(q\) votes for \(x\), prepares \(x\), and sends \msgtag{finalize}.
By \(t+2\delta\), every correct party also has \(F\) and commits \(x\).
Whenever a correct party prepares or disables, it forwards the corresponding notarization, so every correct party does the same within another \(\delta\).

\noindent\textbf{VA-fallback-progress and fallback delay.}
Let \(t\) be a time by which every correct party has requested disabling or the instance is cleared there.
If it is cleared at any correct party at time \(t\), VA-totality ensures that it is cleared everywhere by \(t+\delta\).
Otherwise, every correct party has requested disabling and none has sent \msgtag{finalize}, since that requires a cleared instance.
Thus all correct parties have sent dummy votes, and every correct party obtains \(N\) by \(t+\delta\).
\end{proof}

\paragraph{Recovering the blockchain protocol.}
Instantiate each \(\var{va}[v]\) in \Cref{fig:consensus-va} with this implementation, using blocks as values and treating genesis as prepared and committed in view~0.
Simplex's iteration or height corresponds to our view number, and its notarized dummy blocks correspond to disabled views.
A proposal extending a prepared block across disabled views therefore corresponds to a Simplex proposal whose parent chain contains notarized dummy blocks at those intervening heights.
Omitting these empty blocks from the represented chain leaves the transaction sequence unchanged.

An ordinary notarization prepares the block and permits advancing immediately, while a dummy notarization permits skipping the view.
If the party has not timed out, this advancement also sends \msgtag{finalize}; these messages can arrive while the next leader proposes its block.
Once \(Q(b)\) and \(F\) are available, the block and its ancestors are committed, even if the party has already entered a later view.
A real-block notarization may coexist with a dummy notarization, but finalization excludes the latter.

Use Simplex's \(3\Delta\) view timer to invoke \(\fn{request\_disable}()\).
This implementation does not satisfy VA-strong-unanimity (\Cref{def:va-strong-unanimity}), since a dummy vote after an ordinary vote withholds \msgtag{finalize}; since \(d_t(\Delta)\leq\Delta\) and \(d_p(\Delta)=\Delta\), Simplex's \(3\Delta\) timer satisfies the timeout condition~\eqref{eq:consensus-timeout}.
In a view whose first correct entry occurs after GST, forwarding notarizations brings all correct parties into the view within \(\delta\) of that entry.
With a correct leader, the proposal and its supporting evidence arrive within one further delay, and notarization takes one more.
All correct parties therefore prepare and send \msgtag{finalize} before their timers expire.
Measured from the leader's proposal, the next view starts within \(2\delta\) and the block commits within \(3\delta\).
If the leader is faulty, the fallback bound gives advancement within \(3\Delta+\delta\) of the last correct entry.
Thus the next proposal overlaps finalization, giving Simplex's \(2\delta\) good-case block time and \(3\delta\) proposal-to-commit latency.

Generic Simplex uses our interface conventions: it stops ordinary voting after a request to disable and may cast a catch-up vote before advancing on a prepared block.
The original Simplex rules allow an ordinary vote after a dummy vote and do not require this catch-up vote.
These conventions preserve the certificate arguments and latency bounds above, while giving an implementation of our interface rather than identical executions of the original protocol.
The framework's round-robin leader schedule can also be replaced by Simplex's random leader schedule; we do not use the paper's expected latency bounds here.

\end{document}